\def\doi{8 (1:29) 2012}                                                  
\newcommand{\LONGVERSION}[1]{#1}
\newcommand{\SHORTVERSION}[1]{}
\newcommand{\LICSNOTbox}[1]{\LICS{}{\mbox{#1}}}
\newcommand{\LICSNR}{\LICS{\\[2ex]}{\qquad}}
\newcommand{\LICS}[2]{#2} 
\newdimen\proofrulebreadth \proofrulebreadth=.05em
\newdimen\proofdotseparation \proofdotseparation=1.25ex
\newdimen\proofrulebaseline \proofrulebaseline=2ex
\let\then\relax
\def\hfi{\hskip0pt plus.0001fil}
\mathchardef\squigto="3A3B
\newif\ifinsideprooftree\insideprooftreefalse
\newif\ifonleftofproofrule\onleftofproofrulefalse
\newif\ifproofdots\proofdotsfalse
\newif\ifdoubleproof\doubleprooffalse
\let\wereinproofbit\relax
\newdimen\shortenproofleft
\newdimen\shortenproofright
\newdimen\proofbelowshift
\newbox\proofabove
\newbox\proofbelow
\newbox\proofrulename
\def\shiftproofbelow{\let\next\relax\afterassignment\setshiftproofbelow\dimen0 }
\def\shiftproofbelowneg{\def\next{\multiply\dimen0 by-1 }%
\afterassignment\setshiftproofbelow\dimen0 }
\def\setshiftproofbelow{\next\proofbelowshift=\dimen0 }
\def\setproofrulebreadth{\proofrulebreadth}
\def\prooftree{
%
\ifnum  \lastpenalty=1
\then   \unpenalty
\else   \onleftofproofrulefalse
\fi
%
\ifonleftofproofrule
\else   \ifinsideprooftree
        \then   \hskip.5em plus1fil
        \fi
\fi
%
\bgroup
\setbox\proofbelow=\hbox{}\setbox\proofrulename=\hbox{}%
\let\justifies\proofover\let\leadsto\proofoverdots\let\Justifies\proofoverdbl
\let\using\proofusing\let\[\prooftree
\ifinsideprooftree\let\]\endprooftree\fi
\proofdotsfalse\doubleprooffalse
\let\thickness\setproofrulebreadth
\let\shiftright\shiftproofbelow \let\shift\shiftproofbelow
\let\shiftleft\shiftproofbelowneg
\let\ifwasinsideprooftree\ifinsideprooftree
\insideprooftreetrue
%
\setbox\proofabove=\hbox\bgroup$\displaystyle 
\let\wereinproofbit\prooftree
%
\shortenproofleft=0pt \shortenproofright=0pt \proofbelowshift=0pt
%
\onleftofproofruletrue\penalty1
}
\def\eproofbit{
%
\ifx    \wereinproofbit\prooftree
\then   \ifcase \lastpenalty
        \then   \shortenproofright=0pt  
        \or     \unpenalty\hfil         
        \or     \unpenalty\unskip       
        \else   \shortenproofright=0pt  
        \fi
\fi
%
\global\dimen0=\shortenproofleft
\global\dimen1=\shortenproofright
\global\dimen2=\proofrulebreadth
\global\dimen3=\proofbelowshift
\global\dimen4=\proofdotseparation
\global\count255=\proofdotnumber
%
$\egroup  
%
\shortenproofleft=\dimen0
\shortenproofright=\dimen1
\proofrulebreadth=\dimen2
\proofbelowshift=\dimen3
\proofdotseparation=\dimen4
\proofdotnumber=\count255
}
\def\proofover{
\eproofbit 
\setbox\proofbelow=\hbox\bgroup 
\let\wereinproofbit\proofover
$\displaystyle
}%
\def\proofoverdbl{
\eproofbit 
\doubleprooftrue
\setbox\proofbelow=\hbox\bgroup 
\let\wereinproofbit\proofoverdbl
$\displaystyle
}%
\def\proofoverdots{
\eproofbit 
\proofdotstrue
\setbox\proofbelow=\hbox\bgroup 
\let\wereinproofbit\proofoverdots
$\displaystyle
}%
\def\proofusing{
\eproofbit 
\setbox\proofrulename=\hbox\bgroup 
\let\wereinproofbit\proofusing
\kern0.3em$
}
\def\endprooftree{
\eproofbit 
  \dimen5 =0pt
%
\dimen0=\wd\proofabove \advance\dimen0-\shortenproofleft
\advance\dimen0-\shortenproofright
%
\dimen1=.5\dimen0 \advance\dimen1-.5\wd\proofbelow
\dimen4=\dimen1
\advance\dimen1\proofbelowshift \advance\dimen4-\proofbelowshift
%
\ifdim  \dimen1<0pt
\then   \advance\shortenproofleft\dimen1
        \advance\dimen0-\dimen1
        \dimen1=0pt
        \ifdim  \shortenproofleft<0pt
        \then   \setbox\proofabove=\hbox{%
                        \kern-\shortenproofleft\unhbox\proofabove}%
                \shortenproofleft=0pt
        \fi
\fi
%
\ifdim  \dimen4<0pt
\then   \advance\shortenproofright\dimen4
        \advance\dimen0-\dimen4
        \dimen4=0pt
\fi
%
\ifdim  \shortenproofright<\wd\proofrulename
\then   \shortenproofright=\wd\proofrulename
\fi
%
\dimen2=\shortenproofleft \advance\dimen2 by\dimen1
\dimen3=\shortenproofright\advance\dimen3 by\dimen4
%
\ifproofdots
\then
        \dimen6=\shortenproofleft \advance\dimen6 .5\dimen0
        \setbox1=\vbox to\proofdotseparation{\vss\hbox{$\cdot$}\vss}%
        \setbox0=\hbox{%
                \advance\dimen6-.5\wd1
                \kern\dimen6
                $\vcenter to\proofdotnumber\proofdotseparation
                        {\leaders\box1\vfill}$%
                \unhbox\proofrulename}%
\else   \dimen6=\fontdimen22\the\textfont2 
        \dimen7=\dimen6
        \advance\dimen6by.5\proofrulebreadth
        \advance\dimen7by-.5\proofrulebreadth
        \setbox0=\hbox{%
                \kern\shortenproofleft
                \ifdoubleproof
                \then   \hbox to\dimen0{%
                        $\mathsurround0pt\mathord=\mkern-6mu%
                        \cleaders\hbox{$\mkern-2mu=\mkern-2mu$}\hfill
                        \mkern-6mu\mathord=$}%
                \else   \vrule height\dimen6 depth-\dimen7 width\dimen0
                \fi
                \unhbox\proofrulename}%
        \ht0=\dimen6 \dp0=-\dimen7
\fi
%
\let\doll\relax
\ifwasinsideprooftree
\then   \let\VBOX\vbox
\else   \ifmmode\else$\let\doll=$\fi
        \let\VBOX\vcenter
\fi
\VBOX   {\baselineskip\proofrulebaseline \lineskip.2ex
        \expandafter\lineskiplimit\ifproofdots0ex\else-0.6ex\fi
        \hbox   spread\dimen5   {\hfi\unhbox\proofabove\hfi}%
        \hbox{\box0}%
        \hbox   {\kern\dimen2 \box\proofbelow}}\doll%
%
\global\dimen2=\dimen2
\global\dimen3=\dimen3
\egroup 
\ifonleftofproofrule
\then   \shortenproofleft=\dimen2
\fi
\shortenproofright=\dimen3
%
\onleftofproofrulefalse
\ifinsideprooftree
\then   \hskip.5em plus 1fil \penalty2
\fi
}
\newcommand{\bla}{\ensuremath{\mbox{$$}}} 
\newcommand{\der}{\,\vdash}
\newcommand{\of}{\!:\!}
\newcommand{\FV}{\ensuremath{\mathsf{FV}}}
\newcommand{\Pot}{\mathcal{P}}
\newcommand{\dom}{\mathop{\mathsf{dom}}\nolimits}
\newcommand{\NN}{\mathbb{N}}
\newcommand{\beeq}{=_{\beta\eta}}
\def\lv{\mathopen{{[\kern-0.14em[}}}    
\def\rv{\mathclose{{]\kern-0.14em]}}}   
\newcommand{\dens}[1]{\mathopen{[\kern-0.3ex[}#1\mathclose{]\kern-0.3ex]}}
\newcommand{\denk}[2]{\mathopen{\{\kern-0.3ex|}#1\mathclose{|\kern-0.3ex\}}_{#2}}
\def\ltox#1{\buildrel\raise1pt\hbox{$\scriptstyle#1$}\over\longrightarrow}
\def\tocolow{\buildrel\raise-5pt\hbox{$\scriptscriptstyle+$}\over\rightarrow}
\def\lo{\mathopen{{\lceil\kern-0.25em\lceil}}}    
\def\ro{\mathclose{{\rfloor\kern-0.25em\rfloor}}}
\newcommand{\abbrev}[1]{#1} 
\newcommand{\eg}{\abbrev{e.\,g.}}
\newcommand{\ie}{\abbrev{i.\,e.}}
\newcommand{\wwlog}{w.\,l.\,o.\,g.} 
\newcommand{\Wlog}{W.\,l.\,o.\,g.}
\newcommand{\wrt}{w.\,r.\,t.}
\newcommand{\para}[1]{\paragraph*{\it#1}}
\newcommand{\paradot}[1]{\para{#1.}}
\newcommand{\mfor}{\ \mbox{for}\ }
\newcommand{\mforsome}{\ \mbox{for some}\ }
\newcommand{\miff}{\ \mbox{iff}\ }
\newcommand{\mand}{\ \mbox{and}\ }
\newcommand{\munless}{\ \mbox{unless}\ }
\newcommand{\mgoal}[1][]{\mbox{goal\ifthenelse{\equal{#1}{}}{}{~#1}}}
\newcommand{\rulename}[1]{\ensuremath{\mbox{\sc#1}}}
\newcommand{\ru}[2]{\dfrac{\begin{array}[b]{@{}c@{}} #1 \end{array}}{#2}}
\newcommand{\rux}[3]{\ru{#1}{#2}\ #3}
\newcommand{\nru}[3]{#1\ \ru{#2}{#3}}
\newcommand{\nrux}[4]{#1\ \ru{#2}{#3}\ #4}
\newcommand{\lcol}[1]{\multicolumn{1}{@{}l@{}}{{#1}}}
\newcommand{\subst}[3]{#3[#1/#2]}
\newcommand{\subscriptopt}[1]{\ifthenelse{\equal{#1}{}}{}{_{#1}}}
\newcommand{\superscriptopt}[1]{\ifthenelse{\equal{#1}{}}{}{^{#1}}}
\newcommand{\E}{\Erel{=}}
\newcommand{\squashtype}[1]{|\!|#1|\!|}
\newcommand{\sqval}[1]{[#1]}
\newcommand{\tsquash}{\sqval{\_}} 
\newcommand{\tsqelim}{\mathsf{sqelim}}
\newcommand{\subT}[3]{\{ #1 : #2 \mid #3 \}}
\newcommand{\Composite}{\mathsf{Composite}}
\newcommand{\close}[1]{\widehat{#1}}
\newcommand{\R}{\mathrel{\circledR}}
\newcommand{\Univ}[1]{\mathsf{U}_{#1}}
\newcommand{\epairT}[4][\evar]{(#2#1#3) \times #4}
\newcommand{\erpairT}{\epairT[\erased]}
\newcommand{\pairT}{\epairT[\!:\!]}
\newcommand{\letin}[4]{\tlet~(#1,#2) = #3 ~\tin~ #4}
\newcommand{\eappa}[1][\evar]{\,@^{#1}\,}
\newcommand{\appa}{\eappa[]}
\newcommand{\whnf}[1]{\mathord{\downarrow}#1}
\newcommand{\eqinfr}{\longleftrightarrow}
\newcommand{\eqchkr}{\Longleftrightarrow}
\newcommand{\eqinf}{\mathrel{\eqinfr}}
\newcommand{\eqchk}{\mathrel{\eqchkr}}
\newcommand{\eqty}{\eqchk}
\newcommand{\eqinftr}{\mathrel{\mbox{$\kern1.7ex\widehat{}\kern-2.6ex\longleftrightarrow$}}}
\newcommand{\eqchktr}{\mathrel{\mbox{$\kern1.7ex\widehat{}\kern-2.6ex\Longleftrightarrow$}}}
\newcommand{\eqinft}{\mathrel{\eqinftr}}
\newcommand{\eqchkt}{\mathrel{\eqchktr}}
\newcommand{\eqtyt}{\eqchkt}
\newcommand{\Axiom}{\mathsf{Axiom}}
\newcommand{\Rule}{\mathsf{Rule}}
\newcommand{\EPTS}{\textrm{EPTS}\xspace}
\newcommand{\tlet}{\mathsf{let}}
\newcommand{\tin}{\mathsf{in}}
\newcommand{\IITT}{\textrm{IITT}\xspace}
\newcommand{\ICCstar}{\textrm{ICC}\ensuremath{^*}\xspace}
\newcommand{\erof}{\div}
\newcommand{\evar}{\mathord{\star}}
\newcommand{\evarof}{\star}
\newcommand{\erased}{\mathord{\div}}
\newcommand{\noterased}{\mathord{:}}
\newcommand{\cext}[1]{#1.}
\newcommand{\erext}[3][\erased]{\cext{#2}\erann[#1]{#3}}
\newcommand{\eext}[2]{\erext[\evar]{#1}{#2}}
\newcommand{\erlam}[2]{\lambda [#1 \of #2].\,}
\newcommand{\erapp}[2]{#1[#2]}
\newcommand{\Set}[1][]{\tSet_{#1}}
\newcommand{\sid}{\mathsf{id}}
\newcommand{\evalsto}{\searrow}
\newcommand{\app}[1]{\mathsf{app}\,#1\,}
\newcommand{\funT}[2]{(#1 \of #2) \to}
\newcommand{\Nat}{\mathsf{Nat}}
\newcommand{\resurrect}[2][]{#2^{\ifthenelse{\equal{#1}{}}{\oplus}{
#1}}}
\newcommand{\eresurrect}[1]{\resurrect[\evar]{#1}}
\newcommand{\lam}[3]{\lambda #1 \of #2.\,#3}
\newcommand{\trefl}{\mathsf{refl}}
\newcommand{\tirr}{\mathord{*}}
\newcommand{\tSet}{\mathsf{Set}}
\newcommand{\Type}[1][]{\mathsf{Type}_{#1}}
\newcommand{\Prop}{\mathsf{Prop}}
\newcommand{\Sort}{\mathsf{Sort}}
\newcommand{\Exp}{\mathsf{Exp}}
\newcommand{\Ann}{\mathsf{Ann}}
\newcommand{\Cxt}{\mathsf{Cxt}}
\newcommand{\Wne}{\mathsf{Wne}}
\newcommand{\Whnf}{\mathsf{Whnf}}
\newcommand{\Var}{\mathsf{Var}}
\newcommand{\tfst}{\mathsf{fst}}
\newcommand{\tsnd}{\mathsf{snd}}
\newcommand{\eqpref}{eq-}
\newcommand{\reqeta}{\rulename{\eqpref{}$\eta$}}
\newcommand{\reqbeta}{\rulename{\eqpref{}$\beta$}}
\newcommand{\reqfun}{\rulename{\eqpref{}fun}}
\newcommand{\reqvar}{\rulename{\eqpref{}var}}
\newcommand{\reqlam}{\rulename{\eqpref{}lam}}
\newcommand{\reqapp}{\rulename{\eqpref{}app}}
\newcommand{\reqerapp}{\rulename{\eqpref{}[app]}}
\newcommand{\reqsort}{\rulename{\eqpref{}sort}}
\newcommand{\reqrefl}{\rulename{\eqpref{}refl}}
\newcommand{\reqsym}{\rulename{\eqpref{}sym}}
\newcommand{\reqtrans}{\rulename{\eqpref{}trans}}
\newcommand{\reqconv}{\rulename{\eqpref{}conv}}
\newcommand{\aleqpref}{aq-}
\newcommand{\raleqbeta}{\rulename{\aleqpref{}\kern-0.2ex$\beta$}}
\newcommand{\raleqbetal}{\rulename{\aleqpref{}\kern-0.2ex$\beta$-l}}
\newcommand{\raleqbetar}{\rulename{\aleqpref{}\kern-0.2ex$\beta$-r}}
\newcommand{\raleqapp}{\rulename{\aleqpref{}\kern-0.2ex app}}
\newcommand{\raleqforall}{\rulename{\aleqpref{}\kern-0.3ex$\forall$}}
\newcommand{\cempty}{()}
\newcommand{\A}{\mathcal{A}}
\newcommand{\B}{\mathcal{B}}
\newcommand{\DD}{\mathcal{D}}
\newcommand{\F}{\mathcal{F}}
\newcommand{\EL}{\mathcal{E}\kern-0.2ex\ell}
\newcommand{\PI}[1]{\Pi\,#1\,}
\newcommand{\PIAF}{\PI\A\F}
\newcommand{\x}{\mathsf{x}}
\newcommand{\xdel}[1][]{\ifthenelse{\equal{#1}{}}{\x_\Delta}{\x_{\Delta+#1}}}
\newcommand{\ONLYIFUP}[1]{}
\newcommand{\Seq}{\mathrel{\circledS}}
\newcommand{\Ceq}{\mathrel{\copyright}}
\newcommand{\closerel}[1]{\mathrel{\close{\small#1}}}
\newcommand{\cleq}{\closerel{\Seq}}
\newcommand{\Cleq}{\mathrel{\ensuremath{\raisebox{2pt}{\ensuremath{\close{\raisebox{-2pt}{\ensuremath{\Ceq}}}}}}}}
\newcommand{\circleEq}{\mathrel{\circledS}}
\newcommand{\T}{\mathcal{T}}
\newcommand{\Bool}{\mathsf{Bool}}
\newcommand{\ttrue}{\mathsf{true}}
\newcommand{\tfalse}{\mathsf{false}}
\newcommand{\VDash}{\mathrel{\mathord{|}\kern-0.15ex\mathord{\models}}}
\newcommand{\valic}{\Vdash^\mathsf{c}}
\newcommand{\valid}{\Vdash}
\newenvironment{deffigure}[2]{%
  \def\deffigurecaption{#2}%
  \begin{figure*}[htbp]%
  \begin{center}%
  \begin{minipage}{#1}%
  \hrule \vspace*{4ex}%
}{%
\vspace{2ex} \hrule%
\addvspace{2ex}%
  \end{minipage}%
  \end{center}%
  \caption{\deffigurecaption}%
  \end{figure*}%
}
\renewcommand{\para}[1]{\paragraph*{\it#1}#1}
\renewcommand{\paradot}[1]{\paragraph*{\it#1}}
\def\squareforqed{\ensuremath{\Box}}
\def\qed{\ifmmode\squareforqed\else{\unskip\nobreak\hfil
\penalty50\hskip1em\null\nobreak\hfil\squareforqed
\parfillskip=0pt\finalhyphendemerits=0\endgraf}\fi}
\renewcommand{\cempty}{\mathord{\diamond}}
\renewcommand{\funT}[2]{\Pi#1\of#2.\,}
\newcommand{\efunTsr}[6][\evar]{(#2#1#3) \stackrel{#5}{#4} #6}
\newcommand{\efunTss}[6][\evar]{\efunTsr[#1]{#2}{#3}{\to}{#4,#5}{#6}}
\newcommand{\efunTs}[5][\evar]{\efunTsr[#1]{#2}{#3}{\to}{#4}{#5}}
\newcommand{\efunT}[4][\evar]{\efunTs[#1]{#2}{#3}{}{#4}}
\newcommand{\erfunTs}{\efunTs[\erased]}
\newcommand{\erfunT}{\efunT[\erased]}
\newcommand{\funTs}{\efunTs[\!:\!]}
\renewcommand{\funT}{\efunT[\!:\!]}
\newcommand{\elam}[3][\evar]{\lambda #2#1#3.\,}
\renewcommand{\erlam}{\elam[\erased]}
\renewcommand{\lam}{\elam[\!:\!]}
\newcommand{\eapp}[1][\evar]{\,{}^{#1}}
\renewcommand{\erapp}{\eapp[\erased]}
\newcommand{\nerapp}{\eapp[:]}
\renewcommand{\app}{\eapp[]}
\renewcommand{\resurrect}[2][\div]{#2^{#1}}
\renewcommand{\eresurrect}[1]{\resurrect[\evar]{#1}}
\renewcommand{\eqinftr}{\mathrel{\mbox{$\kern2.1ex\widehat{}\kern-2.8ex\longleftrightarrow$}}}
\renewcommand{\eqchktr}{\mathrel{\mbox{$\kern2.1ex\widehat{}\kern-2.8ex\Longleftrightarrow$}}}
\newcommand{\cxtsep}{.\,}
\renewcommand{\eext}[4][\evar]{#2\cxtsep#3#1#4}
\renewcommand{\erext}{\eext[\!\erased\!]}
\renewcommand{\cext}{\eext[\!:\!]}
\renewcommand{\tirr}{\mathord{\bullet}}
\renewcommand{\E}{\mathrel{:=:}}
\newcommand{\hatDelta}{\Gamma}
\renewcommand{\valid}{{}\Vdash}
\renewcommand{\nru}[3]{\ru{#2}{#3}}
\renewcommand{\nrux}[4]{\ru{#2}{#3}\ #4}
\begin{document}
\title[Irrelevance in Type Theory]{
  On Irrelevance and Algorithmic Equality in \\
  Predicative Type Theory\rsuper*
}
\author[A.~Abel]{Andreas Abel\rsuper a}
\address{{\lsuper a}
  Department of Computer Science \\
  Ludwig-Maximilians-University Munich}
\email{andreas.abel@ifi.lmu.de} 

\author[G.~Scherer]{Gabriel Scherer\rsuper b}
\address{{\lsuper b}Gallium team, INRIA Paris-Rocquencourt}
\email{gabriel.scherer@gmail.com}

\keywords{
  dependent types,
  proof irrelevance,
  typed algorithm equality,
  logical relation,
  universal Kripke model}
\subjclass{F.4.1}
\titlecomment{{\lsuper*}Revision and extension of FoSSaCS 2011 conference publication.}

\begin{abstract}
\noindent
Dependently typed programs contain an excessive amount of static
terms which are necessary to please the type checker but irrelevant
for computation.   To separate static and dynamic code, several static
analyses and type systems have been put forward.   We consider
Pfenning's type theory with irrelevant quantification which is
compatible with a type-based notion of equality that respects
$\eta$-laws.  We extend Pfenning's theory to universes and large
eliminations and develop its meta-theory.  Subject reduction,
normalization and consistency are obtained by a Kripke model over the
typed equality judgement.  Finally, a type-directed equality algorithm
is described whose completeness is proven by a second Kripke model.
\end{abstract}

\maketitle

\section{Introduction and Related Work}
\label{sec:intro}

Dependently typed programming languages such as Agda
\cite{boveDybjerNorell:tphols09}, 
Coq \cite{inria:coq83}, and Epigram
\cite{mcBrideMcKinna:view}
allow the programmer to express
in one language programs, their types, rich invariants, and even
proofs of these invariants.  Besides code executed at run-time,
dependently typed programs contain 
much code
needed only to please 
the type checker, which is at the same time the
verifier of the proofs woven into the program.

Program extraction takes type-checked terms and discards parts that
are irrelevant for execution.  Augustsson's dependently typed
functional language Cayenne \cite{augustsson:cayenne} erases
\emph{types} using a universe-based analysis.
Coq's extraction procedure has been
designed by Paulin-Mohring and Werner \cite{paulinWerner:jsc93}
and Letouzey \cite{letouzey:types02} and discards not only
types but also proofs.
The erasure rests on Coq's
universe-based separation between propositional ($\Prop$)
and computational parts ($\Set/\Type$).  The rigid $\Prop/\Set$
distinction has the drawback of code duplication:  A structure
which is sometimes used statically and sometimes dynamically needs to
be coded twice, once in $\Prop$ and once in $\Set$.

An alternative to the fixed $\Prop/\Set$-distinction is
to let the usage context decide whether a term is a proof or a
program.  Besides whole-program analyses such as data flow, some type-based
analyses have been put forward.  One of them is Pfenning's modal type theory
of \emph{Intensionality, Extensionality, and Proof Irrelevance}
\cite{pfenning:intextirr}, later pursued by Reed \cite{reed:tphols03},
 which introduces functions with irrelevant
arguments that play the role of proofs.\footnote{
Awodey and Bauer \cite{awodeyBauer:propositionsAsTypes} give
a categorical treatment of proof irrelevance which is very similar to
Pfenning and Reed's.  However, they work in the setting of Extensional Type
Theory with undecidable type checking, we could not directly use their
results for this work.}
Not only can these arguments
be erased during extraction, they can also be disregarded in type
conversion tests during type checking.  This relieves the user of
unnecessary proof burden (proving that two proofs are equal).
Furthermore, proofs can not only be discarded during program
extraction but directly after type checking, since they will never be
looked at again during type checking subsequent definitions.

In principle, we have to distinguish ``post mortem'' program extraction,
let us call it \emph{external erasure}, and proof disposal during type
checking, let us call it \emph{internal erasure}.
External erasure deals with closed expressions, programs,
whereas internal erasure deals with open expressions that can have
free variables.  Such free variables might be assumed proofs of
(possibly false) equations and block type casts, or (possibly false)
proofs of well-foundedness and prevent recursive
functions from unfolding indefinitely.  For type checking to not go
wrong or loop, those proofs can only
be externally erased, thus, the $\Prop/\Set$ distinction is not for
internal erasure.  In Pfenning's type theory, proofs can never block
computations even in open expressions (other than computations on
proofs), thus, internal erasure is sound.

Miquel's Implicit Calculus of Constructions (ICC) \cite{miquel:tlca01} goes
further than Pfenning and considers also \emph{parametric} arguments as
irrelevant.  These are arguments which are irrelevant for function execution but
relevant during type conversion checking.  Such arguments may only be
erased in function application but not in the associated
type instantiation.  Barras and
Bernardo \cite{barrasBernardo:fossacs08} and Mishra-Linger and Sheard
\cite{mishraLingerSheard:fossacs08} have built decidable type systems
on top of ICC, but both have not fully integrated inductive types and
types defined by recursion (large eliminations).
Barras and Bernardo, as Miquel, have inductive types only in the form of
their impredicative encodings, Mishra-Linger \cite{mishraLinger:PhD}
gives introduction and elimination principles for inductive types by
example, but does not show normalization or consistency.

While Pfenning's type theory uses typed equality, ICC and its
successors interpret typed expressions as untyped $\lambda$-terms up
to untyped equality.  In our experience, the implicit quantification
of ICC, which allows irrelevant function arguments to appear
unrestricted in the codomain type of the function, is incompatible
with type-directed equality.  Examples are given in
Section~\ref{sec:examples}.  Therefore, we have chosen to scale
Pfenning's notion of proof irrelevance up to inductive types, and
integrated it into Agda.
  

In this article, we start with the ``extensionality and proof irrelevance''
fragment of Pfenning's type theory in Reed's version
\cite{reed:thesis,reed:tphols03}.
We extend it by a hierarchy of
predicative universes, yielding \emph{Irrelevant Intensional Type
  Theory} \IITT (Sec.~\ref{sec:syntax}).  After specifying a 
type-directed equality algorithm (Sec.~\ref{sec:algo}),
we construct a Kripke model 
for \IITT (Sec.~\ref{sec:sound}).  It allows us to prove  
normalization, subject reduction, and consistency, in one go
(Sec.~\ref{sec:meta}).
A second Kripke logical relation yields correctness of 
algorithmic equality and decidability of \IITT (Sec.~\ref{sec:compl}).  
Our models are ready for data types, large
eliminations, types with extensionality principles, and internal
erasure (Sec.~\ref{sec:ext}).

\subsection*{Contribution and Related Work}

We consider the design of our meta-theoretic argument as technical
novelty, although it heavily relies on previous works to which we owe our
inspiration.   Allen \cite{allen:PhD} describes a logical
relation for Martin-L\"of type theory with a countable universe hierarchy.
The seminal work of
Coquand \cite{coquand:conversion} describes an
untyped equality check for the Logical Framework and justifies it by a
logical relation for dependent types
that establishes subject reduction, normalization,
completeness of algorithmic equality, and injectivity of function
types in one go.  However, his approach cannot be easily extended to a
\emph{typed} algorithmic equality, due to problems with transitivity.

Goguen introduces \emph{Typed Operational Semantics} \cite{goguen:PhD}
to construct a \emph{Kripke} logical relation that 
simultaneously proves normalization, subject reduction, and confluence
for a variant of the \emph{Calculus of Inductive Constructions}.
From his results one can derive an equality check based on
reduction to normal form.  Goguen also shows how to derive syntactic
properties, such as closure of typing and equality under substitution,
by a Kripke-logical relation \cite{goguen:types00}.

Harper and Pfenning \cite{harperPfenning:equivalenceLF} popularize a
type-directed equality check for the Logical Framework
that scales to extensionality for unit types.  They prove completeness
of algorithmic equality by a Kripke model on \emph{simple types} which
are obtained by erasure from the dependent types.  Erasure is necessary
since algorithmic equality cannot be shown transitive before it is
proven sound; yet soundness hinges on subject reduction which rests on
function type injectivity which in turn is obtained from completeness
of algorithmic equality---a vicious cycle.  While erasure breaks the
cycle, it also prevents types to be defined by recursion on values
(so-called \emph{large eliminations}), a common feature of proof
assistants like Agda, Coq, and Epigram.

\emph{Normalization by evaluation} (NbE) has been successfully
used to obtain a type-directed equality check based on evaluation
in the context of dependent types with large eliminations
\cite{abelCoquandDybjer:lics07}.  
In previous work \cite{abelCoquandDybjer:mpc08}, the first author
applied NbE to justify a variant of Harper and Pfenning's algorithmic
equality \emph{without erasure}.  However, the meta-theoretic argument
is long-winded, and there is an essential gap in the proof of 
transitivity of the Kripke logical relation.

In this work, we explore a novel approach to justify type-directed
algorithmic equality for dependent types with predicative universes.
First, we show its soundness by a Kripke model built on top
of definitional equality.  The Kripke logical relation yields
normalization, subject reduction, and type constructor injectivity,
which also imply logical consistency of \IITT.
Further, it proves  syntactic properties
such as closure under substitution, following Goguen's lead
\cite{goguen:types00}.  The semantic proof of such syntactic
properties relieves us from
the deep lemma dependencies and abundant traps of
syntactic meta-theory of dependent types
\cite{harperPfenning:equivalenceLF,abelCoquand:fundinf07}. 
Soundness of algorithmic equality entails transitivity (which is
the stumbling stone), paving the way to show 
completeness of algorithmic equality by a second Kripke logical
relation, much in the spirit of Coquand \cite{coquand:conversion} and
Harper and Pfenning \cite{harperPfenning:equivalenceLF}.


This article is a revised and extended version of paper
\emph{Irrelevance in Type Theory with a Heterogeneous Equality
  Judgement} presented at the conference FoSSaCS~2011
\cite{abel:fossacs11}.  Unfortunately, the conference version has
inherited the above-mentioned gap \cite{abelCoquandDybjer:mpc08}
in the proof of transitivity of the
Kripke logical relation.  This is fixed in the present article by an
auxiliary Kripke model (Section~\ref{sec:sound}).  Further, we have dropped the
heterogeneous approach to equality in favor of a standard homogeneous
one.  Heterogeneous equality is not necessary for the style of
irrelevance we are embracing here.

\section{Irrelevant Intensional Type Theory}
\label{sec:syntax}

In this section, we present \emph{Irrelevant Intensional Type Theory}
\IITT which features two of Pfenning's function spaces
\cite{pfenning:intextirr}, the ordinary ``extensional'' $\funT x
U T$ and the proof irrelevant $\erfunT x U T$.  The main idea is that
the argument of a $\erfunT x U T$ function is counted as a proof and
can neither be returned nor eliminated on, it can only be passed as
argument to another proof irrelevant function or data constructor.
Technically, this is
realized by annotating variables as relevant, $x \of U$, or
irrelevant, $x \erof U$, in the typing context, 
to confine occurrences of
irrelevant variables to irrelevant arguments.


\paradot{Expression and context syntax}  We distinguish between
relevant ($t \nerapp u$ or simply $t \app u$) and irrelevant
application ($t \erapp u$).  Accordingly, we have relevant ($\lam x U
T$) and irrelevant ab\-strac\-tion ($\erlam x U T$).  Our choice of
typed abstraction is not fundamental; a bidirectional type-checking
algorithm \cite{coquand:type}
can reconstruct type and relevance annotations at
abstractions and applications.

\[
\begin{array}{l@{~}l@{~}l@{~}r@{~}l@{\quad}l}
  \Var   & \ni & x,y,X,Y \\ 
  \Sort  & \ni & s   & ::= & \Set[k] ~~(k \in \NN)
                  & \LICSNOTbox{universes}
   \\
  \Ann   & \ni & \evar & ::= & \erased \mid \noterased
                  & \LICSNOTbox{annotation: irrelevant, relevant} \\
  \Exp   & \ni & t,u,T,U
         & ::= & s \mid \efunTs{x}{U}{s,s'}{T}
                  & \LICSNOTbox{sort, (ir)relevant function type}
   \\ &&& \mid & x \mid \elam x U t \mid t \eapp u
                  & \LICSNOTbox{lambda-calculus}
   \\
  \Cxt   & \ni & \Gamma,\Delta & ::= & \cempty \mid \eext \Gamma x T
    & \LICSNOTbox{empty, (ir)relevant extension} \\
\end{array}
\]
Expressions are considered modulo $\alpha$-equality, we write $t
\equiv t'$ when we want to stress that $t$ and $t'$ identical (up to
$\alpha$). Similarly, we consider variables bound in a context
to be distinct, and when opening a term binder we will implicitly use
$\alpha$-conversion to add a fresh variable in the context.

For technical reasons, namely, to prove transitivity (Lemma~\ref{lem:translr})
of the Kripke logical relation  
in Section~\ref{sec:sound}, we
explicitly annotate function types $\efunTss x U s {s'} T$
with the sorts $s$ of domain $U$ and $s'$ of codomain $T$.
We may omit the annotation if it is inessential or determined by
the context of discourse.
In case $T$ does not mention $x$, we may write $U \to T$ for 
$\funT x U T$.

\paradot{Sorts}
\IITT is a pure type system (PTS) with infinite hierarchy of predicative
universes $\Set[0] : \Set[1] : ...$.  The universes are not cumulative.
We have the PTS axioms
$\Axiom = \{ (\Set[i],\Set[i+1]) \mid i \in \NN \}$ and the rules $\Rule =
\{(\Set[i],\Set[j],\Set[\max(i,j)]) \mid i,j \in \NN \}$.
As is customary, we will write
the side condition $(s,s') \in \Axiom$ just as $(s,s')$ and likewise
$(s_1,s_2,s_3) \in \Rule$ just as $(s_1,s_2,s_3)$.
$\IITT$ is a full and functional PTS,
which means that for all $s_1,s_2$ there is exactly one $s_3$ such that
$(s_1,s_2,s_3)$.  There is no subtyping, so that
types---and thus, sorts---are unique up to equality.  A proof of sort
unicity might relieve us from the sort annotation in function types, 
however, we obtain sort discrimination too late in our technical
development (Lemma~\ref{lem:tycondist}). 



\para{Substitutions} $\sigma$ are maps from variables to
expressions.  We require that the domain $\dom(\sigma) = \{ x \mid
\sigma(x) \not= x \}$ is finite.  We write $\sid$ for the identity
substitution and $\subst u x {}$ for the singleton substitution
$\sigma$ such that $\sigma(x) := u$ and $\sigma(y) := y$ for $y \neq x$.
Substitution extension $(\sigma, u/x)$ is formally defined as 
$\sigma \uplus \subst u x {}$.
Capture avoiding parallel substitution of $\sigma$ in $t$ is written as
juxtaposition $t \sigma$.

\para{Contexts} $\Gamma$ feature two kinds of bindings, relevant ($x
\of U$) and irrelevant ($x \erof U$) ones.  
The intuition, implemented by the typing rules below, 
is that only relevant variables are in scope in an expression.
\emph{Resurrection} $\resurrect\Gamma$ turns all irrelevant bindings
$(x \erof T)$ into the corresponding relevant ones $(x \of T)$
\cite{pfenning:intextirr}.
It is the tool to make irrelevant variables, also called proof
variables, available in proofs.
The generalization $\resurrect[\evar]\Gamma$ shall mean
$\resurrect \Gamma$ if $\evar = \mathord{\div}$, and just $\Gamma$
otherwise.
We write $\Gamma.\Delta$ for the concatenation of $\Gamma$ and
$\Delta$; herein, we suppose $\dom(\Gamma) \cap \dom(\Delta) = \emptyset$.

\paradot{Primitive judgements of \IITT}
The following three judgements are mutually inductively defined by the
rules given below and in Figure~\ref{fig:homall}.
\[
\begin{array}{l@{\qquad}l}
  \der \Gamma
    & \mbox{Context $\Gamma$ is well-formed.} \\
  \Gamma \der t : T
    & \mbox{\LICS{}{In context $\Gamma$,\ }expression $t$ has type $T$.} \\
  \Gamma \der t = t' : T
    & \mbox{In context $\Gamma$, $t$ and $t'$ are equal expressions
      of type $T$.} \\
\end{array}
\]


\paradot{Derived judgements}  To simplify notation, we introduce the
following four abbreviations:
\[
\begin{array}{lll}
  \Gamma \der t \div T
    & \miff & \resurrect\Gamma \der t : T ,\\
  \Gamma \der t = t' \div T
    & \miff &
      \Gamma \der t  \div T \mand
      \Gamma \der t' \div T ,\\
  \Gamma \der T
    & \miff & \Gamma \der T : s \mforsome s ,\\
  \Gamma \der T = T'
    & \miff & \Gamma \der T = T' : s \mforsome s
    .
    \\
\end{array}
\]
$\Gamma \der t \evarof T$ may mean $\Gamma \der t : T$ or 
$\Gamma \der t \div T$, depending on the value of placeholder $\evar$;
same for $\Gamma \der t = t' \evarof T$.
We sometimes write $\Gamma \der t,t' \evarof T$ to abbreviate the
conjunction of $\Gamma \der t \evarof T$ and $\Gamma \der t' \evarof
T$.  The notation $\Gamma \der T,T'$ is to be understood similarly.

\subsection{Rules}
Our rules for well-typed terms $\Gamma \der t : T$ extend 
Reed's rules \cite{reed:thesis} to PTS style.  There are only 6 rules;
we shall introduce them one-by-one.

\paradot{Variable rule} Only relevant variables can be extracted from
the context.
\[
  \ru{\der \Gamma \qquad
      (x \of U) \in \Gamma
    }{\Gamma \der x : U}
\] 
There is no variable rule for irrelevant bindings 
$(x \erof U) \in \Gamma$, in particular, the judgement 
$x \erof U \der x : U$ is not derivable.  
This essentially forbids proofs to appear in relevant positions.

\paradot{Abstraction rule} Relevant and irrelevant functions are
introduced analogously.
\[
  \ru{\eext \Gamma x U \der t : T
     \qquad \Gamma \der \efunTs x U {s,s'} T
    }{\Gamma \der \elam x U t : \efunTs x U {s,s'} T}
\]
To check a relevant function $\lam x U t$, we introduce a relevant
binding $x \of U$ into the context and continue checking the function
body $t$.  In case of an irrelevant function $\erlam x U t$, we
proceed with an irrelevant binding $x \erof U$.  This means that an
irrelevant function cannot computationally depend on its argument---it
is essentially a constant function.  In particular, 
$\erlam x U x$ is never well-typed.

As a side condition, we also need to check that the introduced
function type $\efunTs x U {s,s'} T$ is well-sorted; the rule is given below.

\paradot{Application rule}  
\[
  \ru{\Gamma \der t : \efunT x U T \qquad
      \Gamma \der u \evarof U
    }{\Gamma \der t \eapp u : \subst u x T}
\]

This rule uses our overloaded notations for bindings $\evarof$, that
can be specialized into two different instances for relevant and
irrelevant applications.

For relevant functions, we get the ordinary dependently-typed
application rule:
\[
  \ru{\Gamma \der t : \funT x U T \qquad
      \Gamma \der u : U 
    }{\Gamma \der t \app u : \subst u x T}
\]
When applying an irrelevant function, we resurrect the context before
checking the function argument. 
\[
  \ru{\Gamma \der t : \erfunT x U T \qquad
      \resurrect\Gamma \der u : U %
    }{\Gamma \der t \erapp u : \subst u x T}
\]
This means that irrelevant variables become relevant and can be used
in $u$.  The intuition is that the application $t\erapp u$ does not
computationally depend on $u$, thus, $u$ may refer to any variable,
even the ``forbidden ones''.  One may think of $u$ as a proof which
may refer to both ordinary and proof variables.

For example, let $\Gamma = f \of \erfunT y U U$.  Then the irrelevant
$\eta$-expansion $\erlam x U {f \erapp x}$ is well-typed in $\Gamma$,
with the following derivation:
\[
\begin{prooftree}
  \[
  \[
  \justifies
  \Gamma.~ x \erof U \der f : \erfunT y U U
  \]
  \[
  \justifies
  \Gamma.~ x : U \der x : U
  \]
  \justifies
  \Gamma.~ x \erof U \der f \erapp x : U
  \]
  \justifies
  \Gamma \der \lambda x \erof U.\, f \erapp x
     : \erfunT x U U
\end{prooftree}
\]
Observe how the status of $x$ changes for irrelevant to relevant when
we check the argument of $f$.

\paradot{Sorting rules}  These are the ``$\Axiom$s'' and the ``$\Rule$s'' of
PTSs to form types.
\[
   \ru{\der \Gamma
     }{\Gamma \der s : s'
     }{(s,s')}
\qquad
  \rux{\Gamma \der U : s_1 \qquad 
       \eext \Gamma x U \der T : s_2
     }{\Gamma \der \efunTs x U {s_1,s_2} T : s_3
     }{(s_1,s_2,s_3)}
\]
The rule for irrelevant function type formation follows Reed~\cite{reed:thesis}.
\[
  \rux{\Gamma \der U : s_1 \qquad  
       \erext \Gamma x U \der T : s_2
     }{\Gamma \der \erfunTs x U {s_1,s_2} T : s_3
     }{(s_1,s_2,s_3)}
\]
It states that the codomain of an irrelevant function cannot depend
relevantly on the function argument.  This fact is crucial for the
construction of our semantics in Section~\ref{sec:sound}.  Note that
it rules out \emph{polymorphism} in the sense of Barras and Bernado's
\emph{Implicit Calculus of Constructions}
\ICCstar \cite{barrasBernardo:fossacs08} and
Mishra-Linger and Sheard's \emph{Erasure Pure Type Systems} 
\EPTS \cite{mishraLingerSheard:fossacs08}; the type 
$\erfunT X {\Set[0]} {\funT x X X}$ is ill-formed in \IITT, but not in
\ICCstar or \EPTS.  In \EPTS, there is the following rule:
\[
  \rux{\Gamma \der U : s_1 \qquad
       \cext \Gamma x U \der T : s_2
     }{\Gamma \der \erfunTs x U {s_1,s_2} T : s_3
     }{(s_1,s_2,s_3)}
\]
It allows the codomain $T$ of an irrelevant function to arbitrarily depend
on the function argument $x$. This is fine in an erasure semantics, but
incompatible with our typed semantics in the presence of large
eliminations; we will detail the issues in examples \ref{ex:rel} and \ref{ex:largeElim}.

Another variant is Pfenning's rule for irrelevant function type formation
\cite{pfenning:intextirr}.
\[
  \rux{\Gamma \der U \div s_1 \qquad 
       \erext \Gamma x U \der T : s_2
     }{\Gamma \der \erfunTs x U {s_1,s_2} T : s_3
     }{(s_1,s_2,s_3)}
\] 
It allows the \emph{domain} of an irrelevant function to make use of
irrelevant variables in scope.  It does not give polymorphism, \eg, 
$\erfunT X {\Set[0]} {\funT x X X}$ is still ill-formed.  However, 
$\erfunT X {\Set[0]} {\erfunT x X X}$ would be well-formed.  
It is unclear how the equality rule for irrelevant function types
would look like---it is not given by Pfenning
\cite{pfenning:intextirr}.   The rule
\[
  \rux{\Gamma \der U = U' \div s_1 \qquad 
       \erext \Gamma x U \der T = T' : s_2
     }{\Gamma \der \erfunTs x U {s_1,s_2} T  
          = \erfunTs x {U'} {s_1,s_2} {T'}: s_3
     }{(s_1,s_2,s_3)}
\]
would mean that any two irrelevant function types are equal as long as
their codomains are equal---their domains are irrelevant.  This is not
compatible with our typed semantics and seems a bit problematic in
general.\footnote{This is why Reed \cite{reed:thesis} differs
  from Pfenning.}

\paradot{Type conversion rule} 
We have \emph{typed} conversion, thus, strictly speaking, \IITT is not
a PTS, but a \emph{Pure Type System with Judgemental Equality}
\cite{adams:jfp06}. 
\[
 \ru{\Gamma \der t : T \qquad
     \Gamma \der T = T'
   }{\Gamma' \der t : T'}
\]



\begin{deffigure}{\textwidth}{Rules of \IITT\label{fig:homall}}
\textbf{Context well-formedness.} \hfill \fbox{$\der \Gamma$} 
\begin{gather*}
  \ru{}{\der \cempty}
\qquad
  \ru{\der \Gamma \qquad
      \Gamma \der T 
    }{\der \eext \Gamma x T}
\end{gather*}
\textbf{Typing.} \hfill \fbox{$\Gamma \der t : T$}
\begin{gather*}
   \ru{\der \Gamma
     }{\Gamma \der s : s'
     }{(s,s')}
\qquad
  \rux{\Gamma \der U : s_1 \qquad  
       \eext \Gamma x U \der T : s_2
     }{\Gamma \der \efunTs x U {s_1,s_2} T : s_3
     }{(s_1,s_2,s_3)}
\\[2ex]
  \ru{\der \Gamma \qquad
      (x \of U) \in \Gamma
    }{\Gamma \der x : U}
\qquad
  \ru{\eext \Gamma x U \der t : T
     \qquad \Gamma \der \efunTs x U {s,s'} T
    }{\Gamma \der \elam x U t : \efunTs x U {s,s'} T}
\\[2ex]
  \ru{\Gamma \der t : \efunT x U T \qquad
      \Gamma \der u \evarof U
    }{\Gamma \der t \eapp u : \subst u x T}
\LICSNR
  \ru{\Gamma \der t : T \qquad
      \Gamma \der T = T'
    }{\Gamma' \der t : T'}
\end{gather*}
\textbf{Equality.} \hfill \fbox{$\Gamma \der t = t' : T$}  \\
Computation ($\beta$) and extensionality ($\eta$).
\begin{gather*}
 \nru{\reqbeta
    }{\eext \Gamma x U \der t : T  \qquad
      \Gamma \der u \evarof U 
    }{\Gamma \der (\elam x U t) \eapp u = \subst u x t : \subst u x T}
\qquad
 \nru{\reqeta
    }{\Gamma \der t : \efunTs x U {s,s'} T 
    }{\Gamma \der t = \elam x U {t \eapp x} : \efunTs x U {s,s'} T
    }
\end{gather*}
Equivalence rules.
\begin{gather*}
  \nru{\reqrefl
     }{\Gamma \der t : T
     }{\Gamma \der t = t : T}
\qquad
  \nru{\reqsym
     }{\Gamma \der t = t' : T
     }{\Gamma \der t' = t : T}
\qquad
  \nru{\reqtrans
     }{\Gamma \der t_1 = t_2 : T \qquad
       \Gamma \der t_2 = t_3 : T
     }{\Gamma \der t_1 = t_3 : T}
\end{gather*}
Compatibility rules.
\begin{gather*}
 \nrux{\reqfun
     }{\Gamma \der U = U' : s_1 \qquad
       \eext \Gamma x U \der T = T' : s_2
     }{\Gamma \der \efunTs x U {s_1,s_2} T = \efunTs x {U'}{s_1,s_2}{T'} : s_3
     }{(s_1,s_2,s_3)}
\\[2ex]
  \nru{\reqlam
     }{\Gamma \der U = U' : s_1
       \qquad \eext \Gamma x U \der T : s_2
       \qquad \eext \Gamma x U \der t = t' : T
     }{\Gamma \der \elam x U t = \elam x {U'} {t'} : \efunTs x U {s_1,s_2} T
     }
\\[2ex]
  \nru{\reqapp
     }{\Gamma \vdash t = t' \of \efunT x U T \qquad
       \Gamma \der u = u ' \evarof U
     }{\Gamma \der t \eapp u = t' \eapp u' : \subst u x T
     }
\end{gather*}
Conversion rule.
\begin{gather*}
  \nru{\reqconv
      }{\Gamma \der t = t' : T \qquad
        \Gamma \der T = T' 
      }{\Gamma \der t = t' : T'}
\end{gather*}
\end{deffigure}


\paradot{Equality}
Figure~\ref{fig:homall} recapitulates the typing rules and lists the
rules to derive context well-formedness $\der \Gamma$ and equality
$\Gamma \der t = t' : T$.  Equality is the least congruence over the
$\beta$- and $\eta$-axioms.  Since equality is typed we can extend
$\IITT$ to include an extensional unit type (Section~\ref{sec:ext}).
Let us inspect the congruence rule for application:
\[
  \nru{\reqapp
     }{\Gamma \vdash t = t' \of \efunT x U T \qquad
       \Gamma \der u = u ' \evarof U
     }{\Gamma \der t \eapp u = t' \eapp u' : \subst u x T
     }
\]
In case of relevant functions ($\evar = \noterased$) we obtain the
usual dependently-typed application rule of equality.  Otherwise, we
get:
\[
  \nru{\reqerapp
     }{\Gamma  \vdash t = t' \of \erfunT x U T \qquad
       \resurrect\Gamma \der u : U \qquad
       \resurrect{\Gamma} \der u' : U
     }{\Gamma \der t \erapp u = t' \erapp u' : \subst u x T
     }
\]
Note that the arguments $u$ and $u'$ to the irrelevant functions need to be
well-typed but not related to each other.  This makes precise the
intuition that $t$ and $t'$ are constant functions.



\subsection{Simple properties of \IITT}

In the following, we prove two basic invariants of derivable 
\IITT-judgements:  The context is always well-formed, and judgements
remain derivable under well-formed context extensions (weakening).
\begin{lem}[Context well-formedness]\label{lem:wf} \bla
  \begin{enumerate}[\em(1)]
  \item If $\der \cext \Gamma x U \cxtsep \Gamma'$ then $\Gamma \der U$.
  \item If $\Gamma \der t : T$ or $\Gamma \der t = t' : T$ 
    then $\der\Gamma$.
  \end{enumerate}
\end{lem}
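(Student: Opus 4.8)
The plan is to prove both parts by induction on derivations; the two statements turn out to be logically independent, so I would establish (1) first and (2) afterwards, with neither relying on the other. For (1) I would induct on the suffix $\Gamma'$, equivalently on the derivation of $\der \cext \Gamma x U \cxtsep \Gamma'$. A judgement of the form $\der \Delta$ can only be concluded by $\rcxtempty$ or $\rcxtext$, and since the context at hand is nonempty its derivation must end in $\rcxtext$, whose premises for forming $\eext{\Delta_0} y V$ are $\der \Delta_0$ and $\Delta_0 \der V$. If $\Gamma' \equiv \cempty$, the context is exactly $\cext \Gamma x U$, so the right premise reads $\Gamma \der U$ and we are done. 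Otherwise $\Gamma'$ ends in a binding, the left premise is $\der \cext \Gamma x U \cxtsep \Gamma''$ for the shorter suffix $\Gamma''$, and the induction hypothesis yields $\Gamma \der U$. The argument is insensitive to whether the bindings in $\Gamma'$ are relevant or irrelevant, since $\rcxtext$ covers both.

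For (2) I would run a simultaneous induction on the derivations of $\Gamma \der t : T$ and $\Gamma \der t = t' : T$. The guiding observation is that every rule carries a premise whose context is either $\Gamma$ itself or a one-step extension $\eext \Gamma x U$. In the variable and sort rules $\der \Gamma$ is literally a premise. For every rule with a premise over the \emph{same} context $\Gamma$---function-type formation and application (first premise), the conversion rule (whose conclusion shares the context of its premises), abstraction via its well-formedness premise $\Gamma \der \efunT x U T$, and the equality rules $\reqeta$, $\reqrefl$, $\reqsym$, $\reqtrans$, $\reqfun$, $\reqlam$, $\reqapp$, $\reqconv$---the induction hypothesis applied to that premise returns $\der \Gamma$ at once. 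The only rule forcing recourse to an extended context is $\reqbeta$: there I would apply the induction hypothesis to the redex-body premise $\eext \Gamma x U \der t : T$ to obtain $\der \eext \Gamma x U$, and then invert $\rcxtext$---the unique rule concluding such a judgement---to extract its left premise $\der \Gamma$.

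The one delicate point, and the step I would watch most carefully, is the irrelevant instances of the application and $\beta$ rules. There the argument's typing premise is stated over the \emph{resurrected} context as $\resurrect \Gamma \der u : U$, and it is not immediate that $\der \resurrect \Gamma$ should entail $\der \Gamma$. The resolution is exactly the bookkeeping above: in the application rule I read $\der \Gamma$ off the function premise $\Gamma \der \efunT x U T$, and in $\reqbeta$ off the redex-body premise over $\eext \Gamma x U$, so the resurrected premise is never consulted. Consequently no auxiliary lemma relating $\der \resurrect \Gamma$ to $\der \Gamma$ is required, and the whole proof reduces to routine case analysis together with a single inversion of $\rcxtext$.
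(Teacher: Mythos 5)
Your proof is correct and takes essentially the same route as the paper, which disposes of this lemma with precisely the induction on derivations you spell out. Your key bookkeeping---reading $\der \Gamma$ off a premise in the unresurrected context, and in the $\beta$ case off $\der \eext \Gamma x U$ by inverting the context-extension rule---is exactly what makes that ``simple induction'' go through in the presence of resurrected premises.
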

\begin{proof}
  By a simple induction on the derivations.
\end{proof}

It should be noted that we only prove the most basic well-formedness
statements here.  One would expect that $\Gamma \der t : T$ or $\Gamma
\der t = t' : T$ also implies $\Gamma \der T$, or that $\Gamma \der t
= t' : T$ implies $\Gamma \der t : T$. This is true---and we will refer
to these implications as \emph{syntactic validity}---but this cannot be
proven without treatment of substitution, due to the typing rule for
application, which requires substitution in the type, and due to the
equality rule for a $\beta$-redex, which uses substitution in both
term and type. Therefore, syntactic validity is delayed until
Section~\ref{sec:sound} 
(Corollary~\ref{cor:synval}), where substitution will be handled by
semantic, rather than syntactic, methods.

\paradot{Weakening} 
We can weaken a context $\Gamma$ by adding bindings or making
irrelevant bindings relevant. Formally, we have an order on binding
annotations, which is the order induced by $\noterased \leq
\erased$, and we define weakening by monotonic extension.

A well-formed context $\der \Delta$ \emph{extends} a well-formed context
$\der \Gamma$, written $\Delta \leq \Gamma$, if and only if:
\[ \forall x \in \dom(\Gamma),\quad
   (x \evarof_1 U) \in \Gamma \implies
     (x \evarof_2  U) \in \Delta \text{ with } \evarof_1 \leq \evarof_2
     .
\]
Note that this allows to insert new bindings or relax existing ones
at any position in $\Gamma$, not just at the end.

\begin{lem}[Weakening] \label{lem:weak} \bla
  Let $\Delta \leq \Gamma$.
  \begin{enumerate}[\em(1)]
  \item \label{it:weakcxt}
    If $\der \Gamma.\Gamma'$ and $\dom(\Delta) \cap \dom(\Gamma')
    = \emptyset$ then $\der \Delta.\Gamma'$.
  \item If $\Gamma \der t : T$ then $\Delta \der t : T$.
  \item If $\Gamma \der t = t' : T'$ then $\Delta \der t = t' : T$.
  \end{enumerate}
\end{lem}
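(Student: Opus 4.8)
The plan is to prove (2) and (3) by one simultaneous induction on the typing/equality derivation, and (1) by a separate induction on the suffix $\Gamma'$ that appeals to (2). Everything rests on two easy monotonicity properties of the extension relation, which I would record first. \emph{(i) Stability under a common suffix:} if $\Delta \leq \Gamma$, both $\der\Delta.\Gamma''$ and $\der\Gamma.\Gamma''$ hold, and the domains are disjoint as required, then $\Delta.\Gamma'' \leq \Gamma.\Gamma''$; indeed a variable declared in $\Gamma''$ has identical bindings on both sides, so its annotation obligation holds reflexively, while a variable of $\Gamma$ inherits its obligation from $\Delta\leq\Gamma$. The special case of adding a single binding gives $\eext\Delta x U \leq \eext\Gamma x U$. \emph{(ii) Stability under resurrection:} if $\Delta\leq\Gamma$ then $\resurrect\Delta \leq \resurrect\Gamma$, because resurrection turns every binding relevant, so all annotation obligations collapse to the reflexive one and the underlying domain inclusion is unchanged.

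For (2) and (3) I would case on the last rule; most cases just apply the induction hypothesis to each premise and reapply the rule. The cases that need attention are the following. \textbf{Variable:} from $(x\of U)\in\Gamma$ and $\Delta\leq\Gamma$ a relevant binding stays relevant, so $(x\of U)\in\Delta$; since $\der\Delta$ is presupposed by $\Delta\leq\Gamma$, the variable rule gives $\Delta\der x:U$. \textbf{Abstraction and function-type formation:} a premise is derived in $\eext\Gamma x U$, so by Lemma~\ref{lem:wf}(2) and inversion of the context-extension rule I recover $\Gamma\der U$; weakening it by the induction hypothesis yields $\Delta\der U$ and hence $\der\eext\Delta x U$, which with (i) gives $\eext\Delta x U \leq \eext\Gamma x U$, so the induction hypothesis applies to the body/codomain premise and the rule reassembles. \textbf{Irrelevant application and \reqerapp:} the argument is typed in the resurrected context, $\resurrect\Gamma\der u:U$; by (ii) we have $\resurrect\Delta\leq\resurrect\Gamma$, so the induction hypothesis gives $\resurrect\Delta\der u:U$ and the rule reapplies. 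The conversion rule, the equivalence rules, and the remaining compatibility rules (\reqfun{} and \reqlam{} again invoking (i) under their binders) are routine.

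For (1) I would induct on $\Gamma'$. If $\Gamma'=\cempty$ the goal $\der\Delta$ is exactly the presupposition of $\Delta\leq\Gamma$. If $\Gamma'=\eext{\Gamma''} x T$, then $\der\Gamma.\Gamma'$ was concluded from $\der\Gamma.\Gamma''$ and $\Gamma.\Gamma''\der T$; the induction hypothesis gives $\der\Delta.\Gamma''$, the disjointness hypothesis guarantees the new bindings of $\Delta$ do not clash with $\Gamma''$, so (i) yields $\Delta.\Gamma'' \leq \Gamma.\Gamma''$ and (2) weakens $\Gamma.\Gamma''\der T$ to $\Delta.\Gamma''\der T$; the context-extension rule then gives $\der\Delta.\Gamma'$.

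I expect the main obstacle to be organisational rather than mathematical: because the relation $\Delta\leq\Gamma$ presupposes well-formedness of \emph{both} contexts, descending under a binder forces me to re-establish $\der\eext\Delta x U$ at each step, interleaving Lemma~\ref{lem:wf} with fact (i), and to thread fact (ii) through the two irrelevant rules. No case appeals to syntactic validity, so the argument stays safely within the results available at this point.
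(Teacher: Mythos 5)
Your proposal is correct and follows the same basic method as the paper---induction on derivations, with the variable case resting on preservation of relevant bindings and the binder cases on monotonicity of context extension---but the decomposition differs. The paper proves all three parts in one simultaneous induction: in the binder cases (e.g.\ \reqlam) it obtains $\der \eext \Gamma x U$ via Lemma~\ref{lem:wf}, feeds it to part~(1) to get $\der \eext \Delta x U$, and then concludes $\eext \Delta x U \leq \eext \Gamma x U$ by the definition of context extension; so part~(1) is load-bearing \emph{inside} the induction for (2) and (3). You instead make (2)/(3) self-contained by recovering $\Gamma \der U$ and re-weakening it with the induction hypothesis, and only afterwards derive (1) by induction on the suffix $\Gamma'$ using (2); this unties the mutual dependence between (1) and (2)/(3), at the price of one extra inversion step per binder. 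Both routes carry the same implicit debt: each applies an induction hypothesis (yours, part~(2); the paper's, part~(1)) to a derivation manufactured by Lemma~\ref{lem:wf} rather than to a genuine sub-derivation, which is legitimate only because Lemma~\ref{lem:wf} is height-non-increasing (equivalently, because one can induct on derivation height). Finally, you are more explicit than the paper about the irrelevant cases (irrelevant application and \reqerapp) via your fact~(ii); note only that, since $\leq$ is defined between \emph{well-formed} contexts, fact~(ii) also needs $\der \resurrect\Gamma$ and $\der \resurrect\Delta$: the former comes free from Lemma~\ref{lem:wf} applied to the resurrected premise, but the latter requires the auxiliary fact that $\der \Delta$ implies $\der \resurrect\Delta$ (itself a small induction that uses weakening on smaller derivations)---a point the paper's own proof sidesteps by not displaying these cases at all.
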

\begin{proof}
Simultaneously by induction on the derivation.
Let us look at some cases:
\begin{desCription}

\item\noindent{\hskip-12 pt\em Case\enspace}
\[
 \nrux{\reqsort
     }{\der \Gamma
     }{\Gamma \der s = s : s'
     }{(s,s')}
\]
By assumption $\der \Delta$, thus $\Delta \der s = s : s'$.

\item\noindent{\hskip-12 pt\em Case\enspace}
\[
 \nru{\reqvar
    }{(x \of U) \in \Gamma
      \quad
      \der \Gamma 
   }{\Gamma \der x = x : U}
\]
Since $\Delta \leq \Gamma$ we have $(x \of U) \in \Delta$, thus
$\Delta \der x = x : U$.

\item\noindent{\hskip-12 pt\em Case\enspace}
\[
  \nru{\reqlam
     }{\Gamma \der U = U' : s_1
       \qquad \eext \Gamma x U \der T : s_2
       \qquad \eext \Gamma x U \der t = t' : T
     }{\Gamma \der \elam x U t = \elam x {U'} {t'} : \efunTs x U {s_1,s_2} T
     }
\]
\Wlog, $x \not\in \dom(\Delta)$. 
By (\ref{it:weakcxt}) and definition of context weakening,
$\Delta \leq \Gamma$ implies $\eext \Delta x U \leq \eext \Gamma x U$,
so all premises can be appropriately weakened by induction hypothesis.
\end{desCription}
\end{proof}

\subsection{Examples}
\label{sec:examples}



\begin{exa}[Relevance of types]\footnote{Example suggested by a
    reviewer of this paper.} \label{ex:rel}
  We can extend \IITT by a unit type $1$ with extensionality
  principle.
  \begin{gather*}
    \ru{\der \Gamma
      }{\Gamma \der 1 : \Set[i]}
\qquad
    \ru{\der \Gamma
      }{\Gamma \der () : 1}
\qquad
    \ru{\Gamma \der t : 1 \qquad \Gamma \der t' : 1
      }{\Gamma \der t = t' : 1}
  \end{gather*}
  Typed equality allows us to equate all inhabitants of the unit
  type.  As a consequence, the Church numerals over the unit type all
  coincide, \eg,
\[
\begin{array}{r@{~}l}
  \Gamma \der &
  \lam f {1 \to 1} \lam x 1 x \\ = &
  \lam f {1 \to 1} \lam x 1 f \app x ~:~ (1 \to 1) \to 1 \to 1
  .
\end{array}
\]
  In systems with untyped equality, like \ICCstar and \EPTS, these
  terms erase to untyped Church-numerals $\lambda f \lambda x.x$ and
  $\lambda f \lambda x.\, f\app x$ and are necessarily distinguished.

  If we trade the unit type for $\Bool$ or any other type with more
  than one inhabitant, the two terms become different in \IITT.  
  This means that in \IITT, types are relevant, and we need to reject
  irrelevant quantification over types like in 
  $\erfunT X {\Set[0]} {(X \to X) \to X \to X}$.
  In \IITT, the polymorphic types of Church numerals are
  $\funT X {\Set[i]} {(X \to X) \to X \to X}$.
\end{exa}
\begin{exa}[$\Sigma$-types]
  \IITT can be readily extended by weak $\Sigma$-types.
  \begin{gather*}
    \rux{\Gamma \der U : s_1 \qquad
         \eext \Gamma x U \der T : s_2
       }{\Gamma \der \epairT x U T : s_3
       }{(s_1,s_2,s_3)}
\\[0.5em]
   \ru{\Gamma \der u \evarof U \qquad
       \Gamma \der t : \subst u x T \qquad
       \Gamma \der \epairT x U T
     }{\Gamma \der (u,t) : \epairT x U T} 
\\[0.5em]
  \ru{\Gamma \der p : \epairT x U T \qquad
      \cext {\eext \Gamma x U} y T \der v : V
    }{\Gamma \der \letin x y p v : V}
\\[0.5em]
  \ru{\Gamma \der u \evarof U \qquad
      \Gamma \der t : \subst u x T \qquad
      \cext {\eext \Gamma x U} y T \der v : V \qquad
       \Gamma \der \epairT x U T
    }{\Gamma \der (\letin x y {(u,t)} v) = \subst t y {\subst u x v} : V}
  \end{gather*}
Additional laws for equality could be considered, like commuting
conversions, or the identity $(\letin x y p {(x,y)}) = p$.  The
relevant form $\pairT x U T$ admits a strong version with projections
$\tfst$ and $\tsnd$ and full extensionality 
$p = (\tfst\,p,\, \tsnd\,p) : \pairT x U T$.  However, strong
irrelevant $\Sigma$-types $\erpairT x U T$ are problematic because of
the first projection:
\[
  \ru{\Gamma \der p : \erpairT x U T
    }{\Gamma \der \tfst\,p \erof U} 
\]
With our definition of $\Gamma \der u \erof U$ as $\resurrect\Gamma
\der u : U$, this rule is misbehaved: it allows us get hold of an
irrelevant value in a relevant context.  We could define a closed 
function $\pi_1 : \erpairT x U 1 \to U$, and composing it with
$(\_\,,()) : \erfunT x U {\erpairT x U 1}$ would give us an identity
function of type $\erfunT x U U$ which magically makes irrelevant
things relevant and \IITT inconsistent.
In this article, we will not further consider strong $\Sigma$-types
with irrelevant components; we leave the in-depth investigation to
future work.
\end{exa}

\begin{exa}[Squash type]
  The \emph{squash} type $\squashtype T$ was first introduced 
  in the context of NuPRL
  \cite{constable:nuprl}; it contains exactly one inhabitant iff $T$
  is inhabited.  Semantically, one obtains $\squashtype T$ from $T$ by
  equating all of $T$'s inhabitants.  In \IITT, we can define
  $\squashtype T$ as internalization of the irrelevance modality, as
  already suggested by Pfenning \cite{pfenning:intextirr}.  The first
  alternative is via the weak irrelevant $\Sigma$-type.
\[
\begin{array}{lll}
  \squashtype \_ & :  & \Set[i] \to \Set[i] \\
  \squashtype T  & := & \erpairT \_ T 1 \\[0.5em]
  \tsquash      & :  & \erfunT x T {\squashtype T} \\
  \sqval x      & := & (x,()) \\[0.5em]
  \multicolumn 3 l {\tsqelim~ (T : \Set[i])~
           (P : \squashtype T \to \Set[j])~
           (f : \erfunT x T {P \, \sqval x })~
           (t : \squashtype T)}\\
    & :  & P \app t \\
    & := & \letin x \_ t {f \erapp x} \\
\end{array}
\]
It is not hard to see that $\squashtype \_$ is a monad.  
All \emph{canonical} inhabitants of $\squashtype T$ are
definitionally equal:
\[
\ru{\Gamma \der t, t' \div T
  }{\Gamma \der \sqval t = \sqval{t'} : 
       \squashtype T}
\]
This is easily shown by expanding the definition of $\tsquash$ and
using the congruence rule for pairs with an irrelevant first
component.

However, we cannot show that \emph{all} inhabitants of $\squashtype T$
are definitionally equal, because of the missing extensionality
principles for weak $\Sigma$.  Thus, the second alternative is to add the
squash type to \IITT via the rules:
\begin{gather*}
  \ru{\Gamma \der T : \Set[i]
    }{\Gamma \der \squashtype T : \Set[i]}
\qquad
  \ru{\Gamma \der t \div T
    }{\Gamma \der \sqval t : \squashtype T}
\qquad
  \ru{\Gamma \der t : \squashtype T \qquad
      \erext \Gamma x T \der v : V
    }{\Gamma \der \tlet~ \sqval x = t ~\tin~ v : V} 
\\[0.5em]
  \ru{\Gamma \der t, t' : \squashtype T 
    }{\Gamma \der t = t' : \squashtype T}
\qquad
  \ru{\Gamma \der t \div T \qquad
      \erext \Gamma x T \der v : V        
    }{\Gamma \der (\tlet~\sqval x = \sqval t ~\tin~ v) = \subst t x v : V}
\end{gather*}
Our model (Section~\ref{sec:sound}) is ready to interpret these rules,
as well as normalization-by-evaluation inspired models 
\cite{abelCoquandPagano:lmcs11}.  
\end{exa}

\begin{exa}[Subset type]
  The subset type $\subT x U T$ is definable from $\Sigma$ and squash as 
  $\pairT x U {\squashtype T}$.
\end{exa}
To discuss the next example, we consider a further extension of \IITT by
Leibniz equality  and natural numbers:
\[
\begin{array}{l@{~}l@{~}l@{\quad}l}
  a \equiv b & : & \Set[i]
    & \mfor A : \Set[i] \mand a,b : A \\
  \trefl & : & a \equiv a
    & \mfor A : \Set[i] \mand a : A \\[0.5em]
  \Nat & : & \Set[i] \\
  0,1,\dots & : & \Nat \\
  +,* & : & \multicolumn 2 l {\Nat \to \Nat \to \Nat .} \\
\end{array}
\]
\begin{exa}[Composite]\footnote{Example suggested by reviewer.}
Let the set of composite numbers 
$\{4,6,8,9,10,12,14,15,\dots\}$ be numbers that are the product of two
natural numbers $\geq 2$.
\[
  \Composite = \subT n \Nat { \pairT k \Nat \pairT l \Nat 
    (n \equiv (k+2)*(l+2)) }
\]  
Most composite numbers have several factorizations, and thanks to
irrelevance the specific composition is ignored when handling
composite numbers.  For instance, 12 as product of 3 and 4 is not
distinguished from the 12 as product of 2 and 6.
\[
  (12, \sqval{(1,(2,\trefl))}) = (12, \sqval{(0,(4,\trefl))})
  : \Composite
  .
\]
\end{exa}

\begin{exa}[Large eliminations]\label{ex:largeElim}%
  \footnote{Inspired by discussions with Ulf Norell during the 11th Agda
    Implementers' Meeting.}
\renewcommand{\T}{\mathsf{T}}
  The $\ICCstar$ \cite{barrasBernardo:fossacs08} or
  $\EPTS$ \cite{mishraLingerSheard:fossacs08} irrelevant function
  type $(x \erof A) \to B$ allows $x$ to appear \emph{relevantly} in
  $B$.  This extra power raises some issues with large eliminations.
  Consider
\[
\begin{array}{l@{~}l@{~}l}
  \T         & : & \Bool \to \Set[0] \\
  \T~\ttrue  & = & \Bool \to \Bool \\
  \T~\tfalse & = & \Bool \\[0.5em]
  t          & = &
    \lambda F : \erfunT b \Bool (\T\,b \to \T\,b) \to \Set[0]. \\ &&
    \lambda g : (F \erapp \tfalse~(\lambda x : \Bool.\, x)) \to \Bool. \\ &&
    \lambda a : F \erapp \ttrue~(\lambda x : \Bool \to \Bool. \lambda y :
    \Bool.\, x\,y).~ g\,a
    .
\end{array}
\]
  The term $t$ is well-typed in $\ICCstar+\T$ because the domain type
  of $g$ and the type of $a$ are $\beta\eta$-equal after erasure
  $(-)^*$ of
  type annotations and irrelevant arguments:
\begin{align*}
  (F \erapp \tfalse~(\lambda x : \Bool.\, x))^* & = F~(\lambda x x) \\ \beeq
   F~(\lambda x \lambda y.\,x\,y)  & =
  (F \erapp \ttrue~(\lambda x : \Bool \to \Bool. \lambda y : \Bool.\, x\,y))^*
\end{align*}
  While a Curry view supports this, it is questionable
  whether identity functions at different types should be viewed as
  one.  It is unclear how a type-directed equality algorithm (see
  Sec.~\ref{sec:algo}) should proceed here; it needs to recognize that
  $x : \Bool$ is equal to $\lambda y \of \Bool.\,x\,y : \Bool \to
  \Bool$.
  This situation is amplified by a unit type $1$ with extensional
  equality.  When we change $\T\,\ttrue$ to $1$ and the type of $a$ to
  $F \erapp \ttrue~(\lambda x \of 1.\, ())$ then $t$ should still type-check,
  because $\lambda x.\,()$ is the identity function on $1$.  However,
  $\eta$-equality for $1$ cannot be checked without types, and a
  type-directed algorithm would end up checking (successfully)
  $x : \Bool$ for equality with $() : 1$.  
  This algorithmic equality cannot be transitive, because then any two
  booleans would be equal.

  Summarizing, we may conclude that the type of $F$ bears trouble
  and needs to be rejected.  \IITT does this because it forbids the
  irrelevant $b$ in relevant positions such as $\T\,b$; \ICCstar
  lacks $\T$ altogether.
  Extensions of $\ICCstar$ should at least make sure that $b$ is
  never eliminated, such as in $\T\,b$.  Technically, $\T$ would have
  to be put in a separate class of \emph{recursive} functions, those
  that actually compute with their argument.  We leave the interaction
  of the three different function types to future research.
\end{exa}

\section{Algorithmic Equality}
\label{sec:algo}

The algorithm for checking equality in \IITT is inspired by Harper and
Pfenning \cite{harperPfenning:equivalenceLF}.  Like theirs, it is
type-directed, but we are using the full dependent type and not an
erasure to simple types (which would anyway not work due to large
eliminations).  We give the algorithm in form of judgements and rules
in direct correspondence to a functional program.  


Algorithmic equality is meant to be used as part of a type checking
algorithm. It is the algorithmic counterpart of the definitional
conversion rule; in particular, it will only be called on terms that
are already know to be well-typed -- in fact, types that are
well-sorted. We rely on this precondition in the algorithmic
formulation.

Algorithmic equality consists of three interleaved
judgements. A \emph{type equality} test checks equality between two
types, by inspecting their weak head normal forms. Terms found inside
dependent types are reduced and the resulting neutral terms are
compared by \emph{structural equality}. The head variable of such
neutrals provides type information that is then used to check the
(non-normal) arguments using \emph{type-directed equality}, by
reasoning on the (normalized) type structure to perform
$\eta$-expansions on product types. After enough expansions, a base
type is reached, where structural equality is called again, or a sort,
at which we use type equality.

Informally, the interleaved reductions are the algorithmic
counterparts of the $\beta$-equality axiom, the type and structural
equalities account for the compatibility rules, and type-directed
equality corresponds to the $\eta$-equality axiom. The remaining
equivalence rules are emergent global properties of the algorithm.

\paradot{Weak head reduction}

Weak head normal forms (whnfs) are given by the following grammar:
\[
\begin{array}{lllll@{\quad}l}
  \Whnf & \ni & a,b,f,A,B,F & ::= & s \mid \efunTs{x}{U}{s,s'}{T}
    \mid \elam x U t \mid n
    & \mbox{whnf} \\
  \Wne & \ni & n,N & :: = & x \mid n \eapp u
    & \mbox{neutral whnf} \\
\end{array}
\]
Weak head evaluation
$t \evalsto a$ and active application
$f \eappa  u \evalsto a$ are functional relations
given by the following rules.
\begin{gather*}
  \ru{t \evalsto f \qquad
      f \eappa u \evalsto a
    }{t \eapp u \evalsto a}
\qquad
  \ru{}{a \evalsto a}
\qquad
  \ru{\subst u x t \evalsto a
    }{(\elam x U t) \eappa u \evalsto a}
\qquad
  \ru{
    }{n \eappa u \evalsto n \eapp u}
\end{gather*}
Instead of writing the propositions $t \evalsto a$ and $P[a]$
we will sometimes simply write $P[\whnf t]$.  Similarly, we might
write $P[f\eappa u]$ instead of $f \eappa u \evalsto a$ and $P[a]$.
In rules, it is understood that the evaluation judgement is always an
extra premise, never an extra conclusion.

Algorithmic equality is given as type equality, structural equality,
and type-directed equality, which are mutually recursive.  The
equality algorithm is only invoked on well-formed expressions of the
correct type.

\para{Type equality}
$\Delta \der A \eqty A'$, for weak head normal forms, and
$\Delta \der T \eqtyt T'$, for arbitrary well-formed types,
checks that two given types are equal in their respective contexts.
\begin{gather*}
  \ru{\Delta  \der \whnf T \eqty \whnf{T'}
    }{\Delta  \der T \eqtyt T'}
\qquad
  \ru{\Delta \der N \eqinft N' : T
    }{\Delta \der N \eqty N'}
\\[0.5em]
  \ru{}{\Delta \der s \eqty s}
\qquad
  \ru{\Delta  \der U \eqtyt U' \qquad
      \cext \Delta x U \der T \eqtyt T'
    }{\Delta  \der \efunTs x U {s,s'} T \eqty \efunTs x {U'}{s,s'}{T'}
    }
\end{gather*}
Note that when invoking structural equality on neutral types $N$ and
$N'$, we do not care which type $T$ is returned, since we know by
well-formedness that $N$ and $N'$ must have the same sort.

\para{Structural equality}
$\Delta \der n \eqinf n' : A$ and
$\Delta \der n \eqinft n' : T$ checks the neutral
expressions $n$ and $n'$ for equality and at the same time infers
their type, which is returned as output.
\begin{gather*}
  \ru{\Delta \der n \eqinft n' : T
    }{\Delta \der n \eqinf  n' : \whnf T
    }
\qquad
  \ru{(x \of T) \in \Delta 
    }{\Delta \der x  \eqinft x : T}
\\[0.5em]
  \ru{\Delta  \der n \eqinf n'  : \funT x U T \qquad
      \Delta  \der u \eqchkt u' : U
    }{\Delta  \der n  \app u \eqinft n' \app u' : \subst u x T
    }
\qquad
  \ru{\Delta \der n \eqinf n' : \erfunT x U T
    }{\Delta \der n \erapp u \eqinft n' \erapp {u'} : \subst u x T
    }
\end{gather*}

\para{Type-directed equality}
$\Delta \der t \eqchk t' : A$ and
$\Delta \der t \eqchkt t' : T$ checks terms $t$ and
$t'$ for equality and proceeds by the structure of the supplied
type, to account for $\eta$.
\begin{gather*}
  \ru{\Delta \der t \eqchk  t' : \whnf T
    }{\Delta \der t \eqchkt t' : T}
\qquad
  \ru{\eext \Delta x {U} \der t \eapp x \eqchkt t' \eapp x : T
    }{\Delta \der t \eqchk t' : \efunT x U T
    }
\\[0.5em]
  \ru{\Delta \der T \eqtyt T'
    }{\Delta \der T \eqchk T' : s}
\qquad 
  \ru{\Delta \der \whnf t \eqinft \whnf{t'} : T
    }{\Delta \der t \eqchk t' : N}
\end{gather*}
Note that in the but-last rule we do not check that the inferred
type $T$ of $\whnf t$ equals the ascribed type $N$.  Since algorithmic
equality is only invoked for well-typed $t$,
we know that this must always be the case.  Skipping
this test is a conceptually important improvement over Harper and
Pfenning~\cite{harperPfenning:equivalenceLF}.

Due to dependent typing, it is not obvious that algorithmic equality
is symmetric and transitive.  For instance, consider symmetry in case
of application:  We have to show that 
$\Delta \der n' \app u' \eqinft n \app u : \subst{u}x{T}$, 
but using the induction hypothesis we obtain this equality only at
type $\subst {u'} x T$.  To conclude, we need to convert types, which is
only valid if we know that $u$ and $u'$ are actually equal.  Thus, we
need soundness of algorithmic equality to show its transitivity.
Soundness \wrt\ declarative equality requires subject reduction, which
is not trivial, due to its dependency on function type injectivity.  
In the next section (\ref{sec:sound}), we construct by a Kripke
logical relation which gives us subject reduction and soundness of
algorithmic equality (Section~\ref{sec:meta}), and, finally, symmetry
and transitivity of algorithmic equality.
 
A simple fact about algorithmic equality is that the inferred types
are unique up to syntactic equality (where we consider
$\alpha$-convertible expressions as identical).
Also, they only depend on the left hand side neutral term $n$.
\begin{lem}[Uniqueness of inferred types]\label{lem:uniqinfer} \bla
  \begin{enumerate}[\em(1)]
  \item
  If  $\Delta \der n \eqinf n_1 : A_1$
  and $\Delta \der n \eqinf n_2 : A_2$
  then $A_1 \equiv A_2$.
  \item
  If  $\Delta \der n \eqinft n_1 : T_1$
  and $\Delta \der n \eqinft n_2 : T_2$
  then $T_1 \equiv T_2$.
  \end{enumerate}
\end{lem}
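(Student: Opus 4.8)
The plan is to prove both parts simultaneously by induction on the derivation of the structural-equality judgement for the left-hand term $n$ (resp. $N$). The two judgements $\Delta \der n \eqinf n' : A$ and $\Delta \der n \eqinft n' : T$ are mutually recursive, so the induction must treat both forms together. The essential observation, already flagged in the paper's remark preceding the lemma, is that the \emph{output} type depends only on the shape of the left-hand neutral $n$: each structural-equality rule is syntax-directed by $n$, so for a fixed $n$ there is exactly one rule that can conclude, and the premises are themselves determined by subterms of $n$. This rigidity is what drives the uniqueness.

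\textbf{Case analysis.} For part (1), the judgement $\Delta \der n \eqinf n_1 : A_1$ is derived by the single rule that evaluates the inferred type to weak head normal form, $\Delta \der n \eqinft n_1 : T_1$ with $A_1 \equiv \whnf{T_1}$. The same rule applies to the second derivation, giving $\Delta \der n \eqinft n_2 : T_2$ with $A_2 \equiv \whnf{T_2}$. By part (2) we get $T_1 \equiv T_2$, and since weak head evaluation $t \evalsto a$ is a \emph{functional} relation (stated in the paper), $\whnf{T_1}$ and $\whnf{T_2}$ are syntactically identical, so $A_1 \equiv A_2$. For part (2), I would split on the head of $n$:
\begin{itemize}
\item If $n \equiv x$ is a variable, the only applicable rule is the variable rule, which reads off $(x \of T) \in \Delta$. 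Since a well-formed context binds each variable exactly once, $T_1 \equiv T_2$.
\item If $n \equiv m \app u$ (relevant application), both derivations must conclude by the application rule, so $\Delta \der m \eqinf m_1' : \funT x U {T'}$ in the first and $\Delta \der m \eqinf m_2' : \funT x {U_2} {T_2'}$ in the second, with outputs $T_i \equiv \subst u x {T_i'}$. The induction hypothesis for part (1) applied to the subterm $m$ gives $\funT x U {T'} \equiv \funT x {U_2} {T_2'}$, hence $T' \equiv T_2'$, and substituting the same $u$ yields $T_1 \equiv T_2$.
\item If $n \equiv m \erapp u$ (irrelevant application), the argument is identical using the irrelevant-application rule and the corresponding $\Pi^{\div}$ output type.
\end{itemize}

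\textbf{Main obstacle.} The only subtlety is that I am feeding a part-(2) conclusion back into a part-(1) premise and vice versa, so I must be careful that the mutual induction is well-founded: it is, because in every case the inductive appeal is to a structurally smaller neutral (the function head $m$ in the application cases), and the cross-references between parts (1) and (2) at the \emph{same} $n$ only move in one direction (part (1) invokes part (2) on the same $n$; part (2) invokes part (1) only on a proper subterm). Hence the induction measure is the size of $n$, with part (2) treated as logically prior to part (1) at each fixed size. No genuine difficulty arises because there is no $\eta$-expansion or conversion happening in the structural judgements—those occur only in the type-directed judgement $\eqchk$, which outputs nothing and therefore plays no role here. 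I expect the whole argument to be routine once the syntax-directedness is made explicit; the statement is a warm-up lemma whose only purpose is to guarantee the inferred type in the structural rules is unambiguous.
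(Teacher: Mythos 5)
Your proposal is correct, and it matches what the paper intends: the paper states this lemma \emph{without} proof, treating it as immediate from the syntax-directedness of the structural rules, and your simultaneous induction is exactly the routine argument being elided. The only ingredients needed beyond syntax-directedness are the ones you identify explicitly --- functionality of weak head evaluation (for part (1)), uniqueness of context bindings (for the variable case), and the well-founded ordering of the mutual induction --- so there is nothing to add or repair.
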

\noindent
Extending structural equality to irrelevance, we let
\[
\ru{\resurrect\Delta \der n \eqinf n : A \qquad
    \resurrect{\Delta} \der n' \eqinf n' : A
  }{\Delta \der n  \eqinf n' \div A}
\]
and analogously for
$\Delta \der n  \eqinft n' \div T$.

\section{A Kripke Logical Relation for Soundness}
\label{sec:sound}

In this section, we construct a Kripke logical relation in the spirit
of Goguen \cite{goguen:types00} and Vanderwaart and Crary
\cite{vanderwaartCrary:lfm02} that proves weak head normalization,
function type injectivity, and subject reduction plus syntactical
properties like substitution in judgements and syntactical validity.
As an important consequence, we obtain soundness of algorithmic
equality \wrt\ definitional equality.  This allows us to establish
that algorithmic equality on well-typed terms is a partial equivalence
relation.

\subsection{An Induction Measure}

Following Goguen \cite{goguen:PhD} and previous work
\cite{abelCoquandDybjer:mpc08}, we first define a semantic universe
hierarchy $\Univ i$ whose sole purpose is to provide a measure for
defining a logical relation and proving some of its properties.  The
limit $\Univ \omega$ corresponds to the proof-theoretic strength or
ordinal of \IITT.

We denote sets of expressions by $\A,\B$ and functions from
expressions to sets of expressions by $\F$.
Let $\close\A = \{ t \mid \whnf t \in \A \}$
denote the closure of $\A$ by weak head expansion.
The dependent function space is defined as
$\PIAF = \{ f \in \Whnf \mid
\forall u \in \close\A.\, f \appa u \in \F(u) \}$.

By recursion on $i \in \NN$ we define inductively sets $\Univ i \subseteq \Whnf
\times \Pot(\Whnf)$ as follows \cite[Sec.~5.1]{abelCoquandDybjer:mpc08}:
\begin{gather*}
  \ru{}{(N,\Wne) \in \Univ i}
\qquad
  \rux{}{(\Set[j],|\Univ j|) \in \Univ i}{(\Set[j],\Set[i]) \in \Axiom}
\\[2ex]
  \rux{(U,\A) \in \close{\Univ i} \qquad
       \forall u \in \close\A.\, (\subst u x T, \F(u)) \in \close{\Univ j}
     }{(\efunT x U T, \PIAF) \in \Univ k
     }{(\Set[i],\Set[j],\Set[k]) \in \Rule}
\end{gather*}
Herein,
$\close{\Univ i} = \{ (T,\A) \mid (\whnf T,\A) \in \Univ i \}$
and $|\Univ j| = \{ A \mid (A,\A) \in \Univ j \mforsome \A \}$.
Only interested in computational strength, 
we treat relevant and irrelevant function spaces alike---at
the level of \emph{predicates} $\A$, irrelevance is anyhow not observable, 
only by \emph{relations} as given later.

The induction measure $A \in \Set[i]$ shall now mean the minimum
height of a derivation of $(A,\A) \in \Univ i$ for some $\A$.  Note that
due to universe stratification, $A \in \Set[i]$ is smaller than
$\Set[i] \in \Set[j]$.

\subsection{A Kripke Logical Relation}

Let $\Delta \der t \E t' \evarof T$ stand for the conjunction of the
propositions
\begin{itemize}
\item $\Delta \der t \evarof T$ and $\Delta \der t' \evarof T$, and
\item $\Delta \der t = t' \evarof T$.
\end{itemize}
By induction on
$A \in s$
we define two Kripke relations
\[
\begin{array}{r@{~}c@{~}l}
  \Delta \der A & \circleEq & A' : s
\\
  \Delta \der a & \circleEq & a' : A. 
\end{array}
\]
together with their respective closures $\cleq$ and the generalization
to $\evar$.
For better readability, the clauses are given in rule form
meaning that the conclusion \emph{is defined as} the conjunction of
the premises. $\forall$ and $\implies$ are meta-level quantification
and implication, respectively.
\begin{gather*}
  \ru{\Delta \der N   \E  N' : s
    }{\Delta \der N \Seq N' : s}
\qquad
  \ru{\Delta \der n  \E  n' : N
    }{\Delta \der n \Seq  n' : N}
\qquad
  \rux{\der \Delta 
     }{\Delta \der s \Seq  s : s'
     }{(s,s')}
\end{gather*}
\begin{gather*}
 \rux{\lcol{\Delta \der U \cleq U' : s_1} \\
      \forall \hatDelta \leq \Delta,~
      \hatDelta \der u \cleq u' \evarof U \implies
      \hatDelta \der \subst u x T \cleq \subst {u'}x{T'} : s_2 \\
      \Delta \der \efunTs x U {s_1,s_2} T  \E \efunTs x {U'}{s_1,s_2}{T'} : s_3
    }{\Delta \der \efunTs x U {s_1,s_2} T \Seq \efunTs x {U'}{s_1,s_2}{T'} : s_3
    }{(s_1,s_2,s_3)}
\end{gather*}
\begin{gather*}
  \ru{\forall \hatDelta \leq \Delta,~
      \hatDelta \der u \cleq u' \evarof U \implies
      \hatDelta \der f \eapp u  \cleq f' \eapp u' : \subst {u}x{T}
          \\
      \Delta \der f \E f' : \efunTs x {U} {s,s'} {T}
    }{\Delta \der f \Seq f' : \efunTs x {U} {s,s'} {T}}
\end{gather*}
\begin{gather*}
  \ru{T \evalsto A \qquad \Delta \der T = A \\
      t \evalsto a \qquad
      \Delta \der t = a : A \qquad
      \Delta \der t' = a' : A \qquad
      t' \evalsto a'  \\
      \Delta \der a \Seq a' : A \\
      \Delta \der t \E t' : T
    }{\Delta \der t \cleq t' : T}
\end{gather*}
\begin{gather*}
  \ru{\resurrect\Delta \der a \Seq a : A \qquad
      \resurrect{\Delta} \der a' \Seq a' : A
    }{\Delta \der a \Seq a' \div A}
\qquad
  \ru{\resurrect\Delta \der t \cleq  t : T \qquad
      \resurrect{\Delta} \der t' \cleq t' : T
    }{\Delta \der t \cleq t' \div T}
\end{gather*}
It is immediate that the logical relation contains only well-typed
and definitionally equal terms.  We will demonstrate that it 
is also closed under weakening and conversion, symmetric and transitive.

\begin{lem}[Weakening]\label{lem:weaklr} \bla
  \begin{enumerate}[\em(1)]
  \item If $\Delta \der a \Seq a' : A$ and
    $\hatDelta \leq \Delta$ then there exists 
    a derivation of $\hatDelta \der a \Seq a' : A$ with the
    same height.
  \item Analogously for $\Delta \der t \cleq t' : T$.
  \end{enumerate}
\end{lem}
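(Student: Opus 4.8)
The plan is to proceed by induction on the height of the given derivation, carried out simultaneously for all forms of the judgement --- $\Seq$ and $\cleq$, at sort level and at type level, together with their irrelevant $\div$-variants --- and to build in each case a derivation of the corresponding $\hatDelta$-version of the \emph{same} height. Since the logical relation is presented in rule form, with each conclusion defined as the conjunction of its premises, it suffices to weaken every premise of the clause that produced the conclusion and reassemble it. Here it is essential to read ``height'' as counting only the nesting of logical-relation clauses: the embedded syntactic judgements (the typing and typed-equality conjuncts hidden in $\Delta \der t \E t' \evarof T$, the well-formedness premise $\der \Delta$, and the weak head evaluation facts $T \evalsto A$, $t \evalsto a$, etc.) are treated as side conditions that do not contribute to the height. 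With this reading the premises split into three kinds, which I would treat uniformly.

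For the embedded syntactic premises I would invoke the syntactic Weakening Lemma~\ref{lem:weak}, which replaces $\Delta$ by any $\hatDelta \leq \Delta$ in typing and equality judgements; being side conditions, re-deriving them costs nothing in height. The evaluation premises do not mention the context at all and carry over verbatim, and $\der \hatDelta$ holds because $\hatDelta \leq \Delta$ already presupposes well-formedness of $\hatDelta$. For the recursive logical-relation premises (of strictly smaller height) I would apply the induction hypothesis, which returns the weakened premise at the same height; reassembling the clause therefore again yields the same height. This disposes of the base clauses (neutral types, neutral terms, and sorts) as well as the closure clause for $\cleq$, whose sub-derivation $\Delta \der a \Seq a' : A$ is handled by the induction hypothesis and whose remaining premises are syntactic or evaluation side conditions.

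The one structurally different case is the universally quantified premise occurring in the function-type and function-value clauses, of the shape $\forall \Theta \leq \Delta.\ \dots$. Here I would \emph{not} appeal to the induction hypothesis: to establish the premise $\forall \Theta \leq \hatDelta.\ \dots$ required for the weakened conclusion, I observe that $\Theta \leq \hatDelta$ and $\hatDelta \leq \Delta$ give $\Theta \leq \Delta$ by transitivity of the extension order, so the original quantified premise applies to $\Theta$ directly, incurring no height. For the irrelevant $\div$-clauses I additionally need that resurrection is monotone for the extension order, i.e.\ $\hatDelta \leq \Delta$ implies $\resurrect\hatDelta \leq \resurrect\Delta$; this is immediate, since resurrection relabels every binding as relevant, so all annotation constraints collapse to $\noterased \leq \noterased$ while the inclusion of domains is preserved, and the induction hypothesis then applies to the resurrected contexts.

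I expect no genuine difficulty here: given Lemma~\ref{lem:weak} the argument is essentially bookkeeping. The only points that require care are verifying that ``same height'' is genuinely preserved clause by clause --- which hinges entirely on the deliberate separation between the logical-relation skeleton, weakened at equal height by the induction hypothesis, and the embedded syntactic judgements, re-derived as zero-cost side conditions --- and the two small auxiliary facts that make everything fit, namely transitivity of $\leq$ (for the quantified premises) and monotonicity of resurrection under $\leq$ (for the $\div$-variants). Both are elementary, but each must be stated explicitly.
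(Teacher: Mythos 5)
Your proof is correct and takes essentially the same route as the paper, whose entire proof reads ``By induction on $A \in s$ and $T \in s$, resp.''; your case analysis---syntactic premises re-derived by Lemma~\ref{lem:weak}, evaluation premises carried over verbatim, the Kripke-quantified premises handled by transitivity of $\leq$ rather than by the induction hypothesis, recursive premises by the induction hypothesis, and monotonicity of resurrection for the $\div$-variants---is exactly what that one line leaves implicit. The only inessential difference is the induction measure: the paper inducts on the semantic measure $A \in s$ (the measure on which the logical relation is defined by recursion) rather than on the height of the logical-relation derivation, which sidesteps having to make ``derivation height'' precise for a relation whose function-type clauses contain infinitary, negatively occurring premises.
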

\begin{proof}
  By induction on $A \in s$ and $T \in s$, resp.
\end{proof}

\begin{lem}[Type conversion]\label{lem:convlr} \bla
  \begin{enumerate}[\em(1)]
  \item
  If $\Gamma \der A \Seq A' : s$ then $\Gamma \der a \Seq a' : A$ 
  iff $\Gamma \der a \Seq a' : A'$.
  \item
  If $\Gamma \der T \cleq T' : s$ then $\Gamma \der t \cleq t' : T$
  iff $\Gamma \der t \cleq t' : T'$.
  \end{enumerate}
\end{lem}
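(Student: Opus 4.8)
The plan is to prove parts (1) and (2) simultaneously by induction on the induction measure, i.e. on $A \in s$ for (1) and on $\whnf T \in s$ for (2). The organising observation is that whenever $\Gamma \der A \Seq A' : s$ the two weak head normal types $A$ and $A'$ have the same head shape, since the clauses defining $\Seq$ at sort level are ``diagonal'' (neutral/neutral, sort/sort, function/function); this is what lets a single case analysis serve both sides of the \emph{iff}. I would order the induction so that part (1) at a given measure is used to establish part (2) at the same measure, while the function case of part (1) consumes part (2) only at \emph{strictly smaller} measures, keeping the recursion well-founded.

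First I would reduce (2) to (1). Unfolding $\Gamma \der T \cleq T' : s$ exposes the whnfs $A,A'$ of $T,T'$ with $\Gamma \der A \Seq A' : s$, together with $\Gamma \der T = A$, $\Gamma \der T' = A'$, $\Gamma \der T \E T' : s$, and hence $\Gamma \der A = A'$. Unfolding $\Gamma \der t \cleq t' : T$ (resp. $:T'$) exposes the whnf $a$ of $t$ and $a'$ of $t'$ (the same ones on both sides, since weak head evaluation is functional) and the core $\Gamma \der a \Seq a' : A$ (resp. $:A'$) plus declarative side conditions. By (1) applied to $\Gamma \der A \Seq A' : s$ the cores coincide, and every declarative side condition transfers across $\Gamma \der A = A'$ by \rconv\ and \reqconv\ (with \reqsym, \reqtrans). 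This leaves (1).

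For (1) I would case on the common head shape. If $A=N,\ A'=N'$ are neutral, the term relation at $N$ (resp. $N'$) is exactly $\Gamma \der a \E a' : N$ (resp. $:N'$), and since $\Gamma \der N = N'$ these are interderivable via \rconv\ and \reqconv. If $A,A'$ are sorts, the only applicable clause forces $A \equiv A'$ and there is nothing to prove. The function case $A = \efunTs x U {s_1,s_2} T$, $A' = \efunTs x {U'}{s_1,s_2}{T'}$ is the crux. Here $\Gamma \der A \Seq A' : s_3$ supplies $\Gamma \der U \cleq U' : s_1$ and, for every $\hatDelta \leq \Gamma$ and related $\hatDelta \der u \cleq u' \evarof U$, the codomain relatedness $\hatDelta \der \subst u x T \cleq \subst {u'}x{T'} : s_2$. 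To show $\Gamma \der f \Seq f' : A$ \miff $\Gamma \der f \Seq f' : A'$, the $\E$-parts transfer through $\Gamma \der A = A'$ as above, and the universally quantified parts are matched thus: weaken $\Gamma \der U \cleq U' : s_1$ to $\hatDelta$ by Lemma~\ref{lem:weaklr} and apply the induction hypothesis (2) at the smaller measure $U \in s_1$ to identify the domain relations $\cleq \evarof U$ and $\cleq \evarof U'$ (the $\evar=\erased$ subcase reducing to the relevant one through the resurrected contexts in the definition); then transport the resulting instance $\hatDelta \der f \eapp u \cleq f' \eapp u' : \subst u x T$ along the induction hypothesis (2) at the smaller measure $\subst u x T \in s_2$.

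The main obstacle lies in this last transport. The relatedness actually delivered by $\Gamma \der A \Seq A' : s_3$ is the off-diagonal $\subst u x T \cleq \subst {u'}x{T'} : s_2$, whereas $\Gamma \der f \Seq f' : A'$ requires the value at type $\subst u x {T'}$, i.e. on the diagonal $\subst u x T \cleq \subst u x {T'} : s_2$. Bridging $\subst u x {T'}$ and $\subst {u'}x{T'}$ calls for a diagonal instance of the family relatedness, which ultimately rests on the self-relatedness $\hatDelta \der u \cleq u \evarof U$ of the argument; since symmetry and transitivity of the logical relation are only proven \emph{after} this lemma, that self-relatedness is not yet available as a black box. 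I expect this to be the delicate point, and I would resolve it either by threading a reflexivity (self-relatedness) statement through the very same induction, or by exploiting the declarative data baked into the relation ($\hatDelta \der u = u' \evarof U$, since the logical relation contains only definitionally equal terms) to equate the two codomain instantiations and push this through the induction hypothesis. Everything outside this transport is routine unfolding combined with \rconv\ and \reqconv.
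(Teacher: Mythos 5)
You take the same route as the paper: simultaneous induction on the measure ($A \in s$ for whnfs, $T \in s$ for the closure), reduction of part (2) to part (1) by unfolding $\cleq$ and transporting the declarative components along $\Gamma \der A = A'$, and, in the crucial function case, conversion of the domain and of the codomain instances by the induction hypotheses at smaller measure; so in outline your proof and the paper's agree.

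The point worth comparing is the obstacle you flag, because it is genuine and the paper's own proof steps over it silently. The paper's function case concludes ``by induction hypothesis on $\subst u x T \in s_2$ we obtain $\hatDelta \der f \eapp u \cleq f' \eapp u' : \subst u x {T'}$'', which presupposes the diagonal instance $\hatDelta \der \subst u x T \cleq \subst u x {T'} : s_2$; but the clause defining $\efunT x U T \Seq \efunT x {U'}{T'}$ only yields the off-diagonal instances $\hatDelta \der \subst u x T \cleq \subst {u'}x{T'} : s_2$ from $\hatDelta \der u \cleq u' \evarof U$, so the diagonal needs $\hatDelta \der u \cleq u \evarof U$ --- reflexivity, which in the relevant case (the irrelevant case has it built in via resurrection) is only delivered by symmetry and transitivity, i.e.\ Lemma~\ref{lem:symlr}, proved \emph{after} this lemma and using it. Of your two proposed repairs, the first is the right one: prove conversion, symmetry and transitivity simultaneously by induction on the measure, so that the function case of conversion may invoke sym/trans at the strictly smaller measure $U \in s_1$ to obtain $u \cleq u \evarof U$; this is consistent with how the paper's proof of Lemma~\ref{lem:symlr} itself manufactures diagonal instances from sym/trans at smaller measure. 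Your second repair, however, is circular: from $\hatDelta \der u = u' \evarof U$ one could at best hope to equate $\subst u x {T'}$ and $\subst {u'}x{T'}$ \emph{definitionally}, which already requires functionality of substitution (Thm.~\ref{thm:subst}, downstream of the fundamental theorem), and in any case the Kripke relation cannot be transported along mere definitional equality of types --- providing exactly that transport is what this lemma (and ultimately the fundamental theorem) is meant to establish.
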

\begin{proof}
  Simultaneously induction in $A \in s$ and $T \in s$, resp.  
  We show the ``if'' direction, the ``only if'' follows analogously.
  The interesting case is the one of functions.
\begin{enumerate}[\hbox to8 pt{\hfill}]                                        
\item\noindent{\hskip-12 pt\em Case\enspace} 
\begin{multline*}
 \rux{\lcol{\Delta \der U \cleq U' : s_1} \\
      \forall \hatDelta \leq \Delta,~
      \hatDelta \der u \cleq u' \evarof U \implies
      \hatDelta \der \subst u x T \cleq \subst {u'}x{T'} : s_2 \\
      \Delta \der \efunTs x U {s_1,s_2} T  \E \efunTs x {U'}{s_1,s_2}{T'} : s_3
    }{\Delta \der \efunTs x U {s_1,s_2} T \Seq \efunTs x {U'}{s_1,s_2}{T'} : s_3
    }{}
\\
  \ru{\forall \hatDelta \leq \Delta,~
      \hatDelta \der u \cleq u' \evarof U \implies
      \hatDelta \der f \eapp u  \cleq f' \eapp u' : \subst {u}x{T}
          \\
      \Delta \der f \E f' : \efunTs x {U} {s,s'} {T}
    }{\Delta \der f \Seq f' : \efunTs x {U} {s,s'} {T}}  
\end{multline*}
First, $\Delta \der f \E f' : \efunTs x {U'} {s,s'} {T'}$, holds because of
the conversion rule for typing and equality. 
Now assume arbitrary
$\hatDelta \leq \Delta$ and $\hatDelta \der u \cleq u'
\evarof {U'}$ and show $\hatDelta \der f \eapp u \cleq f' \eapp u' :
\subst {u} x {T'}$.
By induction hypothesis on $U \in s_1$ we have $\hatDelta \der u \cleq u'
\evarof U$, thus, $\hatDelta \der f \eapp u \cleq f' \eapp u' :
\subst u x T$ by assumption.  By induction hypothesis on $\subst u x
T \in s_2$ we obtain 
$\hatDelta \der f \eapp u \cleq f' \eapp u' : \subst u x {T'}$.
\end{enumerate}
\end{proof}

\begin{lem}[Symmetry and Transitivity]
  \label{lem:symlr}
  \label{lem:translr} \bla
  Let $\Delta \der T \cleq T : s$.
  \begin{enumerate}[\em(1)]
  \item
  If $\Delta \der t \cleq t' : T$
  then $\Delta \der t' \cleq t : T$.
  \item
  If $\Delta \der t_1 \cleq t_2 : T$ and
  $\Delta \der t_2 \cleq  t_3 : T$ then
  $\Delta \der t_1 \cleq t_3 : T$.
  \end{enumerate}
\end{lem}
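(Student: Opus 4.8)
The plan is to prove both statements \emph{simultaneously} by induction on the induction measure, that is, on $A \in s$ for the weak head normal form relation $\Seq$ and on $T \in s$ for its closure $\cleq$, handling the relevant and the irrelevant ($\div$) versions together. Throughout I will use that both $\Seq$ and $\cleq$ entail the declarative relation $\E$, whose symmetry and transitivity are immediate from \reqsym\ and \reqtrans\ together with the built-in well-typedness of all participating terms. I will also silently appeal to the weakening Lemma~\ref{lem:weaklr} to transport the data supplied by the reflexivity hypothesis $\Delta \der T \cleq T : s$ from $\Delta$ to any $\hatDelta \leq \Delta$.

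First I would dispatch the easy cases. For neutral types $\Delta \der N \Seq N' : s$ and neutral terms $\Delta \der n \Seq n' : N$, which are defined purely through $\E$, both properties follow from the corresponding properties of definitional equality. The sort clause $\Delta \der s \Seq s : s'$ is trivially symmetric and transitive. The closure clause reduces to the whnf clause once one observes that weak head evaluation $\evalsto$ is a \emph{functional} relation: in the two transitivity hypotheses the intermediate term $t_2$ evaluates to a uniquely determined whnf, so the inner $\Seq$-data line up at a common $a_2$ and a common $A$, and the induction hypothesis at $A \in s$ applies. Finally the irrelevant cases are immediate, since $\Delta \der a \Seq a' \div A$ asserts only $\resurrect\Delta \der a \Seq a : A$ and $\resurrect\Delta \der a' \Seq a' : A$, a shape that is manifestly symmetric, and whose transitivity merely combines the left datum of the first hypothesis with the right datum of the second.

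The substance is the function cases. Take transitivity for function values, $\Delta \der f_1 \Seq f_2 : \efunTs x U {s_1,s_2} V$ and $\Delta \der f_2 \Seq f_3 : \efunTs x U {s_1,s_2} V$; inverting the reflexivity hypothesis $\Delta \der \efunTs x U {s_1,s_2} V \cleq \efunTs x U {s_1,s_2} V : s_3$ yields both the domain datum $\Delta \der U \cleq U : s_1$ and the codomain relation. Given $\hatDelta \leq \Delta$ and $\hatDelta \der u \cleq u' \evarof U$, I would: (a) apply the first hypothesis to get $\hatDelta \der f_1 \eapp u \cleq f_2 \eapp u' : \subst u x V$; (b) build the diagonal $\hatDelta \der u' \cleq u' \evarof U$ by applying symmetry (part~(1), at the strictly smaller measure $U \in s_1$) and then transitivity (induction hypothesis at $U \in s_1$) to $\hatDelta \der u \cleq u' \evarof U$; (c) apply the second hypothesis at $u', u'$ to obtain $\hatDelta \der f_2 \eapp u' \cleq f_3 \eapp u' : \subst{u'}x V$, and convert it to type $\subst u x V$ via Lemma~\ref{lem:convlr} along $\hatDelta \der \subst u x V \cleq \subst{u'}x V : s_2$; and (d) chain (a) and (c) by transitivity (induction hypothesis at $\subst u x V \in s_2$). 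The function-\emph{type} case and the symmetry direction follow the same template, additionally invoking Lemma~\ref{lem:convlr} to relocate arguments between the two domains $U$ and $U'$ and symmetrising the codomain outputs.

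The main obstacle is exactly this function clause under dependent typing: the two relations feed \emph{different} arguments $u$ and $u'$ into a codomain that depends on them, so neither hypothesis chains directly. One must both produce the ``diagonal'' reflexive instance $u' \cleq u'$ at the common type and realign the codomain types $\subst u x V$ and $\subst{u'}x V$ by conversion. This is precisely why the statement carries the reflexivity hypothesis $\Delta \der T \cleq T : s$ (it is what supplies the domain and codomain data for steps (b)--(c)) and why symmetry and transitivity cannot be separated: step~(b) calls symmetry from \emph{inside} the transitivity proof. The remaining care is bookkeeping the measure: every recursive appeal, to either part, must land strictly below the current measure, which holds because universe stratification places $U \in s_1$ and $\subst u x V \in s_2$ strictly below $\efunTs x U {s_1,s_2} V \in s_3$; once this is checked the simultaneous induction closes.
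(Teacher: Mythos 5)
Your proposal is correct and takes essentially the same route as the paper's proof: a simultaneous induction on the universe measure, generalized to the whnf-level relation $\Seq$, with the easy cases discharged via properties of $\E$ and determinism of weak head evaluation, and the function cases handled by building diagonal instances from the reflexivity hypothesis and realigning dependent codomains with Lemma~\ref{lem:convlr}. The only cosmetic difference is that in the transitivity case you symmetrize on the right (building $u' \cleq u'$ and converting $\subst{u'}{x}{V}$ back to $\subst{u}{x}{V}$), whereas the paper symmetrizes on the left (building $u \cleq u$, so that both applications already land at $\subst{u}{x}{T}$ and no conversion is needed there).
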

\begin{proof}
  We generalize the two statements to whnfs $\Delta \der A \Seq A : s$ and
  prove all four statements
  simultaneously by induction in $A \in s$ and $T \in s$, resp.  
\begin{enumerate}[\hbox to8 pt{\hfill}]  
\item\noindent{\hskip-12 pt\em Case\enspace}  Let us look at the case for functions.
\[
 \rux{\lcol{\Delta \der U \cleq U : s_1} \\
      \forall \hatDelta \leq \Delta,~
      \hatDelta \der u \cleq u' \evarof U \implies
      \hatDelta \der \subst u x T \cleq \subst {u'}x{T} : s_2 \\
      \Delta \der \efunTs x U {s_1,s_2} T  \E \efunTs x {U}{s_1,s_2}{T} : s_3
    }{\Delta \der \efunTs x U {s_1,s_2} T \Seq \efunTs x {U}{s_1,s_2}{T} : s_3
    }{}
\]
\begin{enumerate}[\hbox to8 pt{\hfill}]

\item\noindent{\hskip-12 pt\em Case\enspace} Symmetry: 
\begin{displaymath}
  \ru{\forall \hatDelta \leq \Delta,~
      \hatDelta \der u \cleq u' \evarof U \implies
      \hatDelta \der f \eapp u  \cleq f' \eapp u' : \subst {u}x{T}
          \\
      \Delta \der f \E f' : \efunT x {U} {T}
    }{\Delta \der f \Seq f' : \efunT x {U} {T}}
\end{displaymath}
To show $\Delta \der f' \Seq f : \efunT x U T$, assume arbitrary
$\hatDelta \leq \Delta$ and $\hatDelta \der u' \cleq u
\evarof U$ and show $\hatDelta \der f' \eapp u' \cleq f \eapp u :
\subst {u'} x T$.
By induction hypothesis on $U \in s_2$, with 
weakened $\Gamma \der U \cleq U : s_1$, 
we have $\hatDelta \der u \cleq u'
\evarof U$, thus, $\hatDelta \der f \eapp u \cleq f' \eapp u' :
\subst u x T$ by assumption.  Using symmetry and transitivity on $U$
we obtain $\hatDelta \der u \cleq u \evarof U$, thus, 
$\hatDelta \der \subst u x T \cleq \subst {u} x T : s_2$. 
By induction hypothesis on $\subst u x
T \in s_2$ we apply symmetry to obtain
$\hatDelta \der f' \eapp u' \cleq f \eapp u : \subst u x T$, 
and since $\hatDelta \der \subst u x T \cleq \subst {u'} x T : s_2$ 
we conclude by type
conversion (Lemma~\ref{lem:convlr}).

\item\noindent{\hskip-12 pt\em Case\enspace} Transitivity:
\begin{multline*}
  \ru{\forall \hatDelta \leq \Delta,~
      \hatDelta \der u \cleq u' \evarof U \implies
      \hatDelta \der f_1 \eapp u  \cleq f_2 \eapp u' : \subst {u}x{T}
          \\
      \Delta \der f_1 \E f_2 : \efunTs x {U} {s,s'} {T}
    }{\Delta \der f_1 \Seq f_2 : \efunTs x {U} {s,s'} {T}}
\\[2ex]
  \ru{\forall \hatDelta \leq \Delta,~
      \hatDelta \der u \cleq u' \evarof U \implies
      \hatDelta \der f_2 \eapp u  \cleq f_3 \eapp u' : \subst {u}x{T}
          \\
      \Delta \der f_2 \E f_3 : \efunTs x {U} {s,s'} {T}
    }{\Delta \der f_2 \Seq f_3 : \efunTs x {U} {s,s'} {T}}
\end{multline*}

We wish to prove that
$\Delta \der f_1 \Seq f_3 : \efunTs x {U} {s,s'} {T}$.
We get $\Delta \der f_1 \E f_3 : \efunTs x U {s,s'} T$
immediately by transitivity of definitional equality.
Given $\Gamma \leq \Delta$ and
$\Gamma \der u \cleq u' \evarof U$,
we need to show that
$\Gamma \der f_1 \eapp u \cleq f_3 \eapp u' : \subst {u}xT$.

As $\hatDelta \der \_ \cleq \_ : U$ is a PER by induction hypothesis,
we have $\hatDelta \der u \cleq u \evarof U$, which entails
$f_1 \eapp u \cleq f_2 \eapp u : \subst {u} x T$. From
$\Gamma \der u \cleq u' \evarof U$ also have
$\Gamma \der f_2 \eapp u \cleq f_3 \eapp u' : \subst {u}x{T}$,
which allows to conclude 
$\Gamma \der f_1 \eapp u \cleq f_3 \eapp u' : \subst {u}xT$
by transitivity at $\subst {u}x{T}$.
\end{enumerate}

\item\noindent{\hskip-12 pt\em Case\enspace}  Now, we consider function spaces:

\begin{enumerate}[\hbox to8 pt{\hfill}]
\item\noindent{\hskip-12 pt\em Case\enspace} Transitivity:  
\begin{multline*}
 \rux{\lcol{\Delta \der U_1 \cleq U_2 : s_1} \\
      \forall \hatDelta \leq \Delta,~
      \hatDelta \der u \cleq u' \evarof U_1 \implies
      \hatDelta \der \subst u x {T_1} \cleq \subst {u'}x{T_2} : s_2 \\
      \Delta \der \efunTs x {U_1}{s_1,s_2}{T_1} 
        \E \efunTs x {U_2}{s_1,s_2}{T_2} : s_3
    }{\Delta \der \efunTs x {U_1}{s_1,s_2}{T_1}
        \Seq \efunTs x {U_2}{s_1,s_2}{T_2} : s_3
    }{}
\\[2ex]
 \rux{\lcol{\Delta \der U_2 \cleq U_3 : s_1} \\
      \forall \hatDelta \leq \Delta,~
      \hatDelta \der u \cleq u' \evarof U_2 \implies
      \hatDelta \der \subst u x {T_2} \cleq \subst {u'}x{T_3} : s_2 \\
      \Delta \der \efunTs x {U_2}{s_1,s_2}{T_2}
        \E \efunTs x {U_3}{s_1,s_2}{T_3} : s_3
    }{\Delta \der \efunTs x {U_2}{s_1,s_2}{T_2}
        \Seq \efunTs x {U_3}{s_1,s_2}{T_3} : s_3
    }{}
\end{multline*}

By transitivity we have
$\Delta \der \efunT x {U_1} {T_1} \E \efunT x {U_3} {T_3} : s_3$
and $\Delta \der U_1 \cleq U_3 : s_1$ by induction hypothesis on $s_1$.

Note that this is where the arrow sort annotations are useful.
Without them we would not know that the sorts in both derivations are equal.
We could have $\Delta \der U_1 \cleq U_2 : s_1$ and $\Delta \der U_2 \cleq U_3 : s_1'$
for apparently unrelated $s_1$ and $s_1'$, and would therefore be unable to use
transitivity.

Given $\Gamma \leq \Delta$ and $\Gamma \der u \cleq u' \evarof U_1$,
we need to show that $\Gamma \der \subst u x {T_1} \cleq \subst {u'} x
{T_3} : s_3$.
As $\cleq$ at type $U$ is a PER by induction hypothesis, 
we have $\Gamma \der u \cleq u \evarof U_1$,
from which we can deduce
$\Gamma \der \subst {u} x {T_1} \cleq \subst {u} x {T_2} : s_2$.
By conversion using $\Delta \der U_1 \cleq U_2 : s_1$ -- weakened at $\Gamma$ --
we have $\Gamma \der u \cleq u' \evarof U_2$, which implies
$\Gamma \der \subst u x {T_2} \cleq \subst {u'} x {T_3} : s_2$.
This allows us to conclude by transitivity at type $s_2$.

\end{enumerate}
\end{enumerate}
\end{proof}

In the following we show that the variables are in the logical
relation, \ie, $\Delta \der x \Seq x : \Delta(x)$ for well-formed
contexts $\Delta$.  As usual, this statement has to be generalized to
neutrals $n$ to be proven inductively.

\begin{lem}[Into the logical relation]\label{lem:intolr} Let $T \in s$.
  If\/ $\Delta \der n :=: n' \evarof T$
  then $\Delta \der n \cleq n' \evarof T$.
\end{lem}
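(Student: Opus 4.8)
The plan is to prove the statement by induction on the measure $T \in s$, reducing it to the core assertion $\Delta \der n \Seq n' : A$ at the weak head normal form $A \equiv \whnf T$ for neutral $n,n'$. First I would dispatch the irrelevant case $\evar = \erased$: by definition $\Delta \der n \cleq n' \div T$ amounts to $\resurrect\Delta \der n \cleq n : T$ together with $\resurrect\Delta \der n' \cleq n' : T$, while the hypothesis $\Delta \der n \E n' \div T$ unfolds to $\resurrect\Delta \der n : T$ and $\resurrect\Delta \der n' : T$. Reflexivity ($\reqrefl$) turns these into $\resurrect\Delta \der n \E n : T$ and $\resurrect\Delta \der n' \E n' : T$, so the irrelevant case follows from the relevant one applied in the resurrected context (neutrality and the measure are unaffected). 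For the relevant case I would evaluate $T \evalsto A$; weak head reduction is sound, so $\Delta \der T = A$, and by conversion it suffices to show $\Delta \der n \Seq n' : A$. From this the closure $\Delta \der n \cleq n' : T$ follows, because a neutral is its own whnf ($n \evalsto n$, $n' \evalsto n'$) and the remaining side conditions of the closure clause then hold by reflexivity.

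Next I would case split on $A$ following its $\Univ i$ derivation. If $A \equiv N$ is a neutral type, the clause defining $\Seq$ for neutral terms at a neutral type is literally $\E$, so $\Delta \der n \Seq n' : N$ \emph{is} the converted hypothesis. If $A \equiv \Set[j]$, then $n,n'$ are themselves neutral types and the clause for neutrals at a sort is again just $\E$, so $\Delta \der n \Seq n' : \Set[j]$ holds. Both base cases are therefore immediate.

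The only real work is the function case $A \equiv \efunTs x U {s_1,s_2} B$. Here I would use the clause defining $\Seq$ at a function type: the component $\Delta \der n \E n' : A$ is the converted hypothesis, so it remains to show, for every $\hatDelta \leq \Delta$ and every $\hatDelta \der u \cleq u' \evarof U$, that $\hatDelta \der n \eapp u \cleq n' \eapp u' : \subst u x B$. Since $\subst u x B \in s_2$ has strictly smaller measure than $A \in s_3$ (it is a premise of the $\Univ$ derivation of $A$), I would invoke the induction hypothesis on the neutrals $n \eapp u$ and $n' \eapp u'$, which requires feeding it $\hatDelta \der n \eapp u \E n' \eapp u' : \subst u x B$. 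The definitional equality is delivered by the congruence rule $\reqapp$ from the weakened $\hatDelta \der n = n' : A$ and $\hatDelta \der u = u' \evarof U$ (the latter extracted from $\hatDelta \der u \cleq u' \evarof U$, as the logical relation contains only well-typed, definitionally equal terms), and $\hatDelta \der n \eapp u : \subst u x B$ is immediate from the application rule.

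The main obstacle is the remaining well-typedness $\hatDelta \der n' \eapp u' : \subst u x B$: the application rule only yields $n' \eapp u'$ at type $\subst {u'} x B$, so I need the typed conversion $\hatDelta \der \subst {u'} x B = \subst u x B : s_2$. At this stage the general substitution lemma and syntactic validity are not yet available—they are corollaries of this very logical relation—so this equality cannot be obtained syntactically. Instead I would obtain it from reducibility: the defining clause of the function type yields $\hatDelta \der \subst u x B \cleq \subst {u'} x B : s_2$ out of $\hatDelta \der u \cleq u' \evarof U$, and soundness of the logical relation converts this into the required typed equality. Making this legitimate—that is, having the self-relatedness of $A$ (equivalently, reducibility of its codomain family) available from $T \in s$—is the crux, and it is exactly what the sort annotations on function types and the careful formulation of the $\Univ$ hierarchy are designed to support.
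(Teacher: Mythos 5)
Your proposal takes essentially the same route as the paper's proof: induction on the measure $T \in s$; the neutral-type and sort cases are immediate because $\Seq$ at those types is literally $\E$ for neutral terms; and in the function case the $\E$-component is the (weakened) hypothesis, while the applicative component is obtained by extracting $\hatDelta \der u \E u' \evarof U$ from $\hatDelta \der u \cleq u' \evarof U$, forming $\hatDelta \der n \eapp u \E n' \eapp u' : \subst u x T$ by weakening and \reqapp, and invoking the induction hypothesis at the smaller measure $\subst u x T \in s_2$. Your dispatch of the irrelevant case through resurrection and reflexivity is exactly the paper's ``by cases on $\evarof$''.

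The gap is in your resolution of what you yourself identify as the crux. You claim that self-relatedness of $A$ in the logical relation---needed to turn $\hatDelta \der u \cleq u' \evarof U$ into $\hatDelta \der \subst u x B \cleq \subst {u'} x B : s_2$, and hence to type $n' \eapp u'$ at $\subst u x B$---is ``available from $T \in s$''. It is not. The hierarchy $\Univ i$ is an untyped, context-free structure whose sole purpose is to provide an induction measure: a statement $(A,\A) \in \Univ i$ mentions no context, no typing and no judgemental equality, so it cannot yield $\Delta \der A \Seq A : s$. (For instance, $\efunT x {\Set[0]} y$ carries a $\Univ$-measure even when the variable $y$ is not bound in $\Delta$, so the type is not even well-typed there, let alone self-related.) Nor can you recover $\Delta \der T \Seq T : s$ from the other hypothesis $\Delta \der n \E n' : T$: that inference is precisely syntactic-validity-plus-fundamental-theorem territory, which lies downstream of this lemma, so using it here is circular. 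To be fair, the paper's own proof is no cleaner at this exact point: it conjures $\hatDelta \der \subst u x T \cleq \subst u x T : s_2$ out of $\hatDelta \der u \cleq u' \evarof U$ and never accounts for converting the type of $n' \eapp u'$ from $\subst {u'} x T$ to $\subst u x T$. The honest repair, for both proofs, is to strengthen the lemma's hypothesis from the bare measure to relatedness of the type, $\Delta \der T \cleq T : s$ (or $\Delta \der T \cleq T' : s$); this is what is actually available at every use site (Lemma~\ref{lem:idsubst}, Theorem~\ref{thm:funinj}), it yields $\hatDelta \der \subst u x B \cleq \subst {u'} x B : s_2$ directly from the defining clause of $\Seq$ at function types, and its $\E$-component also supplies $\Delta \der T = \whnf T$, which you instead attributed to ``soundness of weak head reduction''---i.e.\ subject reduction (Theorem~\ref{thm:norm}), likewise not yet available at this stage of the development.
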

\begin{proof}
By induction on $T \in s$.

\begin{enumerate}[\hbox to8 pt{\hfill}] 

\item\noindent{\hskip-12 pt\em Case\enspace} $N \in s$ and $\Delta \der n :=: n'
\evarof N$.  Then $\Delta \der n \Seq n' \evarof N$ by 
cases on $\evarof$, unfolding definitions.

\item\noindent{\hskip-12 pt\em Case\enspace} $s \in s'$
and $\Delta \der N :=: N' \evarof s$.
Then $\Delta \der N \Seq N' \evarof s$ by cases on $\evarof$.

\item\noindent{\hskip-12 pt\em Case\enspace}
  $\efunT x {U}{T} \in s_3$ and
  $\Delta \der n :=: n' \evarof_0 \efunT x U T$.

  First, the case for $\evar_0 = \noterased$.
  We have
  $\Delta \der n  \E n' : \efunT x {U} {T}$.  Assume arbitrary
  $\hatDelta \leq \Delta$ and
  $\hatDelta  \der u  \cleq u' \evarof U$,
  which yields
  $\hatDelta  \der u \E u' \evarof U$ and
  $\hatDelta  \der \subst u x T \cleq \subst{u}x{T} : s_2$.
  By weakening, 
  $\hatDelta \der n \eapp u \E n' \eapp u' : \subst u x T$,
  thus, by induction hypothesis,
  $\hatDelta \der n \eapp u \Seq n' \eapp u' : \subst u x T$,
  q.e.d. 

   The case for $\evar_0 = \erased$ proceeds analogously.
\end{enumerate}
\end{proof}








\subsection{Validity in the Model}



We now extend our logical relation $\cleq$ to substitutions, by
induction on the destination context.
\begin{gather*}
  \ru{}{\Delta \der \sigma \cleq \sigma' : \cempty}
\qquad
  \ru{\Delta \der \sigma \cleq \sigma' : \Gamma \qquad
      \Delta \der \sigma(x) \cleq \sigma'(x) \evarof {U}{\sigma}
    }{\Delta \der \sigma \cleq \sigma' : \eext {\Gamma}x{U}
    }
\end{gather*}
This relation inherits weakening from $\cleq$ for terms.

We then define the context ($\valid \Gamma$), type
($\Gamma \valid T = T'$) and term ($\Gamma \valid t = t' : T$) 
validity relations, by induction on the
length of contexts.

\begin{gather*}
  \ru{}{\valid \cempty}
\qquad
  \ru{ \valid \Gamma
       \quad
       \Gamma \valid U
    }{ \valid \eext \Gamma x U}
\qquad
\qquad
   \ru{\Gamma \valid T = T' : s
     }{\Gamma \valid T = T'}
\qquad
  \ru{\Gamma \valid T = T
    }{\Gamma \valid T}
\\[0.5em]
  \ru{\valid \Gamma \qquad
      (\Gamma \valid T \munless T = s)\\
      \forall \Delta,\sigma,\sigma', ~
        \Delta \der \sigma \cleq \sigma' : \Gamma
        \implies
        \Delta \der t \sigma \cleq {t'}{\sigma'} :  {T}{\sigma}
    }{\Gamma \valid t = t' : T}
\qquad
  \ru{\Gamma \valid t = t : T
    }{\Gamma \valid t : T}
\end{gather*}
Because of its asymmetric definition, the logical relation on
substitutions may not be a PER in general, but it is for valid
contexts.
\begin{lem}[Substitution relation is a PER]\label{lem:substlrper}
  If $\valid \Gamma$, then $\Delta \der \_ \cleq \_ : \Gamma$ is 
  symmetric and transitive.
\end{lem}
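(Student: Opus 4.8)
The plan is to proceed by induction on the derivation of $\valid\Gamma$ — equivalently, on the length of $\Gamma$ — proving symmetry and transitivity \emph{simultaneously}, and extracting reflexivity ($\Delta \der \sigma \cleq \sigma : \Gamma$ whenever $\Delta \der \sigma \cleq \sigma' : \Gamma$) as an immediate consequence of the two. The mutual dependency is forced by the asymmetric shape of the substitution relation: in its extension rule the new component is tested at the type $U\sigma$ indexed by the \emph{left} substitution, so swapping or composing substitutions changes that index, and repairing it will require both the reflexive instances and type conversion. This is exactly why the relation is a PER only on valid contexts: we must be able to instantiate the validity of the bound type.

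For the base case $\Gamma = \cempty$ both properties hold vacuously, since any two substitutions are related by the empty rule. For the step $\Gamma = \eext{\Gamma'} x U$, I first invert the extension rule on the hypotheses, splitting each premise into its $\Gamma'$-part and its component at $x$. The $\Gamma'$-parts are discharged directly by the induction hypothesis (and from them, by IH symmetry followed by IH transitivity, I obtain the reflexive instance $\Delta \der \sigma \cleq \sigma : \Gamma'$). It remains to treat the component at $x$, which lives in the term relation $\cleq$ and must be re-established at the correctly re-indexed type.

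For symmetry, from $\Delta \der \sigma(x) \cleq \sigma'(x) \evarof U\sigma$ I apply term-level symmetry (Lemma~\ref{lem:symlr}) to obtain $\Delta \der \sigma'(x) \cleq \sigma(x) \evarof U\sigma$; its precondition $\Delta \der U\sigma \cleq U\sigma : s$ is supplied by $\Gamma' \valid U$ applied to the reflexive instance $\Delta \der \sigma \cleq \sigma : \Gamma'$. To match the index $U\sigma'$ demanded by the goal $\Delta \der \sigma' \cleq \sigma : \eext{\Gamma'} x U$, I convert the type along $\Delta \der U\sigma \cleq U\sigma' : s$ — again furnished by $\Gamma' \valid U$, now applied to $\Delta \der \sigma \cleq \sigma' : \Gamma'$ — using type conversion (Lemma~\ref{lem:convlr}). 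Transitivity is analogous: given the two components at $U\sigma_1$ and $U\sigma_2$, I convert the second to the common index $U\sigma_1$ via $\Delta \der U\sigma_1 \cleq U\sigma_2 : s$ (validity of $U$ on $\Delta \der \sigma_1 \cleq \sigma_2 : \Gamma'$) and then chain by term-level transitivity (Lemma~\ref{lem:translr}), whose reflexivity precondition is once more produced from the induction hypotheses on $\Gamma'$.

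The relevance annotation $\evarof$ is threaded through uniformly: the irrelevant case $\div$ unfolds, by the very definition of the relation, into the corresponding relevant statements over the resurrected context $\resurrect\Delta$, where the same three tools (Lemmas~\ref{lem:symlr}, \ref{lem:translr}, \ref{lem:convlr}) apply verbatim and the needed type relation is again read off from $\Gamma' \valid U$. The main obstacle is precisely the index mismatch created by the left-biased typing of the extension rule: every appeal to term-level symmetry or transitivity must be paired with a type conversion whose justification is the model validity of the bound type $U$, and that justification in turn consumes the reflexive substitution instances — which is why symmetry and transitivity cannot be separated and must be carried through the induction together.
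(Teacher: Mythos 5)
Your proposal is correct and takes essentially the same route as the paper's proof: induction on $\Gamma$, discharging the component at $x$ by term-level symmetry/transitivity (Lemma~\ref{lem:symlr}/\ref{lem:translr}) followed by type conversion (Lemma~\ref{lem:convlr}) along $\Delta \der U\sigma \cleq U\sigma' : s$, which is obtained by instantiating $\Gamma \valid U$ with the substitution pair. The paper only writes out the symmetry case and leaves implicit the reflexivity precondition of Lemma~\ref{lem:symlr}, which you discharge explicitly (IH symmetry followed by IH transitivity); that is added care, not a different argument.
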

\begin{proof}
  By induction on $\Gamma$.  We demonstrate symmetry for the case
  $\valid \eext \Gamma x U$.
\[
  \ru{\Delta \der \sigma \cleq \sigma' : \Gamma \qquad
      \Delta \der \sigma(x) \cleq \sigma'(x) \evarof {U}{\sigma}
    }{\Delta \der \sigma \cleq \sigma' : \eext {\Gamma}x{U}
    }
\]
  By induction hypothesis, 
  $\Delta \der \sigma' \cleq \sigma : \Gamma$, and by symmetry of 
  $\cleq$ for terms (Lemma~\ref{lem:symlr}),
  $\Delta \der \sigma'(x) \cleq \sigma(x) \evarof U\sigma$.  
  We instantiate $\Gamma \valid U$ to 
  $\Delta \der U \sigma \cleq U \sigma' : s$ and conclude 
  $\Delta \der \sigma'(x) \cleq \sigma(x) \evarof U\sigma'$ by
  conversion (Lemma~\ref{lem:convlr}).
\end{proof}
\begin{lem}[Validity is a PER]
  The relation $\Gamma \valid \_ = \_ : T$ is symmetric and transitive.
\end{lem}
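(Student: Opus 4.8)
The plan is to unfold the definition of term validity and reduce both claims to facts already in hand: that the term logical relation $\cleq$ is a PER (Lemmas~\ref{lem:symlr} and~\ref{lem:translr}), that the substitution relation $\Delta \der \_ \cleq \_ : \Gamma$ is a PER whenever $\valid \Gamma$ (Lemma~\ref{lem:substlrper}), and that $\cleq$ is stable under type conversion (Lemma~\ref{lem:convlr}). The side conditions $\valid \Gamma$ and ($\Gamma \valid T$ unless $T$ is a sort) are inherited unchanged in both directions, so I only have to reestablish the universally quantified clause relating the instantiated terms.

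For symmetry I would assume $\Gamma \valid t = t' : T$ and, given an arbitrary $\Delta \der \sigma \cleq \sigma' : \Gamma$, aim to derive $\Delta \der t'\sigma \cleq t\sigma' : T\sigma$. First I would flip the substitutions using Lemma~\ref{lem:substlrper} to get $\Delta \der \sigma' \cleq \sigma : \Gamma$, then instantiate the hypothesis to obtain $\Delta \der t\sigma' \cleq t'\sigma : T\sigma'$, and apply symmetry of $\cleq$ (Lemma~\ref{lem:symlr}) to get $\Delta \der t'\sigma \cleq t\sigma' : T\sigma'$. This sits at the wrong type $T\sigma'$, so the closing move is to instantiate $\Gamma \valid T$ at $\sigma \cleq \sigma'$, yielding $\Delta \der T\sigma \cleq T\sigma' : s$, and convert back to $T\sigma$ via Lemma~\ref{lem:convlr}. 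When $T$ is a sort the two instances coincide ($T\sigma \equiv T\sigma'$) and the conversion is vacuous.

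For transitivity I would assume $\Gamma \valid t_1 = t_2 : T$ and $\Gamma \valid t_2 = t_3 : T$ and, given $\Delta \der \sigma \cleq \sigma' : \Gamma$, target $\Delta \der t_1\sigma \cleq t_3\sigma' : T\sigma$. Since the substitution relation is a PER (Lemma~\ref{lem:substlrper}) I obtain the diagonal instances $\Delta \der \sigma \cleq \sigma : \Gamma$ and $\Delta \der \sigma' \cleq \sigma' : \Gamma$. The first hypothesis at $\sigma \cleq \sigma'$ gives $\Delta \der t_1\sigma \cleq t_2\sigma' : T\sigma$; the second at $\sigma' \cleq \sigma'$ gives $\Delta \der t_2\sigma' \cleq t_3\sigma' : T\sigma'$, which I would convert to $T\sigma$ through $\Gamma \valid T$ and Lemma~\ref{lem:convlr}. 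With the common middle term $t_2\sigma'$ now aligned at type $T\sigma$, transitivity of $\cleq$ (Lemma~\ref{lem:translr})---whose required reflexivity premise $\Delta \der T\sigma \cleq T\sigma : s$ comes from instantiating $\Gamma \valid T$ at $\sigma \cleq \sigma$---delivers the goal.

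The bookkeeping is routine; the single load-bearing point, and the main obstacle, is the asymmetry baked into the definition of validity, which relates $t\sigma$ to $t'\sigma'$ across \emph{different} substitutions and so produces conclusions at the mismatched types $T\sigma$ versus $T\sigma'$. Reconciling them requires exactly the interplay of the substitution PER (to permute and to diagonalize the substitutions) with the type-level instance $\Gamma \valid T$ and conversion. It is worth stressing that both arguments silently depend on reflexivity on the diagonal, which is only available because Lemma~\ref{lem:substlrper} was proved separately; the raw definition of the substitution relation is not manifestly reflexive.
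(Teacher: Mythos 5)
Your proof is correct and follows essentially the same route as the paper: both unfold validity and reduce symmetry and transitivity to the PER property of the substitution relation (Lemma~\ref{lem:substlrper}), the PER properties of $\cleq$ for terms (Lemmas~\ref{lem:symlr} and~\ref{lem:translr}), and type conversion (Lemma~\ref{lem:convlr}). The only difference is bookkeeping in transitivity: the paper instantiates the \emph{first} hypothesis at the diagonal $\sigma \cleq \sigma$ and the second at $\sigma \cleq \sigma'$, so the middle term is $t_2\sigma$ and both halves already sit at type $T\sigma$, whereas your middle term $t_2\sigma'$ forces the extra conversion from $T\sigma'$ to $T\sigma$ that the paper avoids.
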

\begin{proof}
  Symmetry requires symmetry of $\cleq$ for substitutions and
  conversion with $\Delta \der T\sigma \cleq T \sigma' : s'$, similar
  as in Lemma~\ref{lem:substlrper}.

  We demonstrate transitivity in detail.
  Given $\Gamma \valid t_1 = t_2 : T$ and $\Gamma \valid t_2 = t_3 :
  T$ we show $\Gamma \valid t_1 = t_3 : T$.  Clearly, $\valid \Gamma$
  and $\Gamma \valid T$ or $T = s$ by one of our two
  assumptions. Assume arbitrary 
  $\Delta \der \sigma \cleq \sigma' : \Gamma$ and show
  $\Delta \der t_1\sigma \cleq t_3\sigma' : T\sigma$.
  By Lemma~\ref{lem:substlrper}, 
  $\Delta \der \sigma \cleq \sigma : \Gamma$,
  thus $\Delta \der t_1 \sigma \cleq t_2 \sigma : T\sigma$.  Also,
  $\Delta \der t_2 \sigma \cleq t_3 \sigma' : T\sigma$ which entails
  our goal by transitivity of $\cleq$ (Lemma~\ref{lem:translr}). 
\end{proof}

\begin{lem}[Function type injectivity is valid]
  \label{lem:funinjval}
  If\/ $\Gamma \valid \efunTs x U {s_1,s_2} T = \efunTs x {U'}{s_1',s_2'}{T'}$
  then $s_1 = s_1'$ and $s_2 = s_2'$ and $\Gamma \valid U = U' : s_1$
  and $\eext \Gamma x {U'} \valid T = T' : s_2$.
\end{lem}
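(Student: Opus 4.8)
The plan is to unfold the validity hypothesis to its computational core and read off the four conclusions from the shape of the logical relation on function types. By definition, $\Gamma \valid \efunTs x U {s_1,s_2} T = \efunTs x {U'}{s_1',s_2'}{T'}$ gives some $s_3$, the fact $\valid\Gamma$, and the main clause: for every $\Delta \der \sigma \cleq \sigma' : \Gamma$ we have $\Delta \der (\efunTs x U {s_1,s_2} T)\sigma \cleq (\efunTs x {U'}{s_1',s_2'}{T'})\sigma' : s_3$, that is, $\Delta \der \efunTs x {U\sigma}{s_1,s_2}{T\sigma} \cleq \efunTs x {U'\sigma'}{s_1',s_2'}{T'\sigma'} : s_3$. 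Since function types are already weak head normal forms, the whnf-closure $\cleq$ at the sort $s_3$ reduces to the relation $\Seq$, and the only $\Seq$-rule matching two function types requires \emph{identical} sort annotations on both sides. To make this non-vacuous I first note that $\valid\Gamma$ yields a reflexive substitution $\Gamma \der \sid \cleq \sid : \Gamma$ (built by induction on $\Gamma$ from Lemma~\ref{lem:intolr}, relating each variable to itself); instantiating the main clause with it forces $s_1 = s_1'$ and $s_2 = s_2'$. From now on I write $s_1,s_2$ on both sides.

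For the domain, I run the main clause for an \emph{arbitrary} $\Delta \der \sigma \cleq \sigma' : \Gamma$. Unfolding $\Seq$ on the resulting function-type relation directly delivers $\Delta \der U\sigma \cleq U'\sigma' : s_1$ together with the codomain implication: for all $\Delta' \le \Delta$, if $\Delta' \der u \cleq u' \evarof U\sigma$ then $\Delta' \der \subst u x {T\sigma} \cleq \subst {u'}x{T'\sigma'} : s_2$. Since the first component holds for every related pair of substitutions, and the well-formedness side condition is trivial ($s_1$ is a sort), this establishes $\Gamma \valid U = U' : s_1$. By the PER property of validity (symmetry plus reflexivity), this also gives $\Gamma \valid U'$, hence $\valid \eext \Gamma x {U'}$, which is the context well-formedness needed for the last conclusion.

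For the codomain I must show $\eext \Gamma x {U'} \valid T = T' : s_2$, so I take an arbitrary $\Delta \der \rho \cleq \rho' : \eext \Gamma x {U'}$. By the substitution-extension rule this decomposes into its restriction $\Delta \der \sigma \cleq \sigma' : \Gamma$ (with $\sigma=\rho$, $\sigma'=\rho'$ on $\Gamma$) and $\Delta \der u \cleq u' \evarof U'\sigma$, where $u=\rho(x)$, $u'=\rho'(x)$; and $T\rho = \subst u x {T\sigma}$, $T'\rho' = \subst {u'}x{T'\sigma'}$ by routine substitution bookkeeping. The obstacle here is the mismatch between the $U'$ in the context extension and the $U$ demanded by the codomain implication of the previous paragraph. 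I bridge it as follows: by Lemma~\ref{lem:substlrper} the substitution relation is a PER on the valid context $\Gamma$, so $\Delta \der \sigma \cleq \sigma : \Gamma$; instantiating the already-proven $\Gamma \valid U = U' : s_1$ at $\sigma \cleq \sigma$ yields $\Delta \der U\sigma \cleq U'\sigma : s_1$, and type conversion (Lemma~\ref{lem:convlr}) converts $\Delta \der u \cleq u' \evarof U'\sigma$ into $\Delta \der u \cleq u' \evarof U\sigma$ (and analogously for the irrelevant annotation $\evarof = \erased$ via resurrection). Feeding this into the codomain implication at $\Delta' = \Delta$ gives $\Delta \der \subst u x {T\sigma} \cleq \subst {u'}x{T'\sigma'} : s_2$, i.e.\ $\Delta \der T\rho \cleq T'\rho' : s_2$, which is exactly the clause required for $\eext \Gamma x {U'} \valid T = T' : s_2$. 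The only genuinely delicate step is this domain-conversion bridge combining Lemmas~\ref{lem:substlrper} and~\ref{lem:convlr}; everything else is unfolding definitions and tracking substitutions.
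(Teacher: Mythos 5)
Your proof is correct and follows essentially the same route as the paper's: instantiate the validity hypothesis at arbitrary related substitutions, unfold the function-type clause of the Kripke relation $\Seq$, and read off the sort equalities, the domain relation, and the codomain implication. The only difference is that you spell out steps the paper leaves implicit---the identity substitution (the paper's Lemma~\ref{lem:idsubst}) to make the sort equalities non-vacuous, the context validity $\valid \eext \Gamma x {U'}$, and the conversion bridge from $U'\sigma$ to $U\sigma$ via Lemmas~\ref{lem:substlrper} and~\ref{lem:convlr}---which is careful bookkeeping within the same argument rather than a different approach.
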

\begin{proof}
  Assume arbitrary $\Delta \der \sigma \cleq \sigma' : \Gamma$.  We have
  $\Delta \der \efunTs x {U\sigma} {s_1,s_2} {T\sigma} \cleq 
               \efunTs x {U'\sigma'}{s_1',s_2'}{T'\sigma'} : s_3$,
  thus by definition $s_1 = s_1'$ and $s_2 = s_2'$ and
  $\Delta \der U'\sigma' \cleq U\sigma : s_1$---note that 
  sorts are closed and therefore invariant by substitution.
  By symmetry of $\cleq$, and since $\Delta,\sigma,\sigma'$ were arbitrary,
  we have $\Gamma \valid U = U' : s_1$.  

  Further, assume arbitrary $\Delta \der u \cleq u' \evarof U'\sigma$ and
  let $\rho = (\sigma,u/x)$ and $\rho' = (\sigma',u'/x)$.  Note that
  \wwlog, $x \not\in\dom(\Gamma)$ and $x \not \in \FV(U')$ and
  $\Delta \der \rho \cleq \rho' :  \eext \Gamma x {U'}$.  We have
  $\Delta \der T \rho \cleq T' \rho' : s_2$ and since
  $\rho,\rho'$ were arbitrary, $\eext\Gamma x{U'} \valid T = T' : s_2$.
\end{proof}



\newcommand{\GammaxU}{\eext \Gamma x U}
\newcommand{\GammaxUp}{\eext {\Gamma'}x{U'}}
\newcommand{\GammaxUpp}{\eext {\Gamma''}x{U''}}

\begin{lem}[Context satisfiable] 
  \label{lem:idsubst}
  If $\valid \Gamma$ then $\der \Gamma$ and
  $\Gamma  \der \sid \cleq \sid : \Gamma$.
\end{lem}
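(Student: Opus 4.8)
The plan is to prove the two conclusions simultaneously by induction on $\Gamma$, i.e.\ on the derivation of $\valid\Gamma$. In the base case $\Gamma \equiv \cempty$, the judgement $\der\cempty$ holds by the empty-context rule and $\cempty \der \sid \cleq \sid : \cempty$ holds by the empty clause of the substitution relation, so nothing more is needed.

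For the step case $\Gamma \equiv \eext\Gamma x U$ (reusing $\Gamma$ for the prefix, as elsewhere), I first invert $\valid\eext\Gamma x U$ to obtain $\valid\Gamma$ and $\Gamma \valid U$, and apply the induction hypothesis to get $\der\Gamma$ together with $\Gamma \der \sid \cleq \sid : \Gamma$. Unfolding $\Gamma \valid U$ to $\Gamma \valid U = U : s$ and instantiating its universally quantified clause at $\Delta := \Gamma$ and $\sigma := \sigma' := \sid$—licensed precisely by $\Gamma \der \sid \cleq \sid : \Gamma$—yields $\Gamma \der U \cleq U : s$, using $U\sid \equiv U$ and that sorts are closed. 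Since the logical relation relates only well-typed terms, this gives $\Gamma \der U : s$, hence $\Gamma \der U$ and therefore $\der \eext\Gamma x U$ by context extension. The same judgement $\Gamma \der U \cleq U : s$ also exhibits the semantic measure $U \in s$ (its weak head normal form inhabits some $\Univ i$) that I will need to invoke Lemma~\ref{lem:intolr}.

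It remains to establish $\eext\Gamma x U \der \sid \cleq \sid : \eext\Gamma x U$. By the extension clause of the substitution relation this splits into $\eext\Gamma x U \der \sid \cleq \sid : \Gamma$ and $\eext\Gamma x U \der x \cleq x \evarof U$ (observing $\sid(x)\equiv x$ and $U\sid \equiv U$). The first follows from the induction hypothesis by weakening along $\eext\Gamma x U \leq \Gamma$, which the substitution relation inherits from the weakening of $\cleq$ (Lemma~\ref{lem:weaklr}). For the second I apply Lemma~\ref{lem:intolr} at the measure $U \in s$, reducing the goal to $\eext\Gamma x U \der x \E x \evarof U$, and distinguish the annotation. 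When the binding is relevant this is $\cext\Gamma x U \der x : U$ together with $\cext\Gamma x U \der x = x : U$, discharged by the variable rule (using $\der\cext\Gamma x U$ and $(x \of U) \in \cext\Gamma x U$) and reflexivity (rule \reqrefl). When it is irrelevant, $\erext\Gamma x U \der x \E x \div U$ amounts, by the derived $\div$-judgement, to $\resurrect{(\erext\Gamma x U)} \der x : U$, i.e.\ $\cext{\resurrect\Gamma} x U \der x : U$, where resurrection has made $x$ relevant; this is again the variable rule, its required well-formedness $\der\cext{\resurrect\Gamma}xU$ coming from $\Gamma \der U : s$ since resurrection is a weakening (Lemma~\ref{lem:weak}).

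The main obstacle is the tight mutual dependency inside the step case: establishing $\der\eext\Gamma x U$ already needs the identity substitution supplied by the induction hypothesis (to instantiate $\Gamma \valid U$), while proving that the identity substitution stays in the relation needs both that well-formedness and the ``into the logical relation'' lemma. Threading the induction measure $U \in s$ through, and correctly discharging the irrelevant annotation via resurrection so that $x$ may legitimately be treated as a relevant variable, are the points requiring the most care.
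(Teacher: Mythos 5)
Your proof is correct and follows essentially the same route as the paper's: induction on $\Gamma$, instantiating $\Gamma \valid U$ at the identity substitution supplied by the induction hypothesis to obtain $\Gamma \der U : s$ and hence $\der \eext \Gamma x U$, discharging the first premise of the extension clause by weakening (Lemma~\ref{lem:weaklr}), and obtaining $\eext \Gamma x U \der x \cleq x \evarof U$ via Lemma~\ref{lem:intolr}. The only difference is presentational: you spell out the relevant/irrelevant annotation split and the resurrected variable rule, which the paper compresses into its $\evarof$ notation.
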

\begin{proof}
  By induction on $\Gamma$. The $\cempty$ case is immediate. 
  In the $\GammaxU$ case, given
  \[ \ru{  \valid \Gamma
           \qquad
           \Gamma \valid U
        }{
           \valid \GammaxU
        } 
  \]
  we can use inference
  \[ \ru{  \GammaxU \der \sid \cleq \sid : \Gamma
           \qquad
           \GammaxU \der \sid(x) \cleq \sid(x) \evarof {U}{\sid}
        }{
            \eext \Gamma x U \der \sid \cleq \sid : \eext {\Gamma}x{U}
        }
     .
  \]
  From the induction hypothesis 
  $\Gamma \der \sid \cleq \sid : \Gamma$, we obtain the first premise
  by weakening of $\cleq$.
  It also yields $\Gamma \der U\sid : s\sid$ for some $s$
  by definition of $\Gamma \valid U$.  
  Using induction hypothesis, $\der \Gamma$,
  this entails $\der \GammaxU$.  Further,
  $\GammaxU \der x = x \evarof U$, and since trivially
  $\GammaxU \der x \eqinf x \evarof U$,
  we can derive
  $\GammaxU \der x \cleq x \evarof U$,
  by the Lemma~\ref{lem:intolr}.
  This concludes the second premise
  $\GammaxU \der \sid(x) \cleq \sid(x) \evarof {U}{\sid}$.
\end{proof}

We can now show that every equation valid in the model is derivable in \IITT.
\begin{thm}[Completeness of \IITT rules]\label{thm:soundmodel}
    If $\Gamma \valid t = t' : T$ then 
    both $\Gamma \der t : T$ and $\Gamma \der t' : T$ and 
    $\Gamma \der t = t' : T$ and
    $\Gamma \der T$.
\end{thm}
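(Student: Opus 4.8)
The plan is to read off each syntactic judgement by instantiating the universally quantified clause in the definition of $\Gamma \valid t = t' : T$ at the identity substitution, and then projecting out the typing and equality information that is built into every clause of the logical relation. The device that makes the identity instantiation legal is Lemma~\ref{lem:idsubst} (context satisfiable), so the whole argument is essentially an adequacy statement resting on that lemma plus the PER lemmas.

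First I would unfold $\Gamma \valid t = t' : T$. By its defining clause we have $\valid \Gamma$, the side condition $(\Gamma \valid T \munless T = s)$, and the implication that for all $\Delta,\sigma,\sigma'$ with $\Delta \der \sigma \cleq \sigma' : \Gamma$ one gets $\Delta \der t\sigma \cleq t'\sigma' : T\sigma$. From $\valid \Gamma$, Lemma~\ref{lem:idsubst} yields both $\der \Gamma$ and $\Gamma \der \sid \cleq \sid : \Gamma$. Instantiating the implication with $\Delta := \Gamma$ and $\sigma := \sigma' := \sid$, and using that the identity substitution acts as the identity on expressions (so $t\sid \equiv t$ and $T\sid \equiv T$), gives $\Gamma \der t \cleq t' : T$.

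Now I would exploit the remark, immediate from the definitions, that the logical relation contains only well-typed and definitionally equal terms: the defining clause of $\Delta \der t \cleq t' : T$ literally carries $\Delta \der t \E t' : T$ as one of its premises, and $\E$ unfolds to the conjunction of $\Delta \der t : T$, $\Delta \der t' : T$, and $\Delta \der t = t' : T$. Projecting this premise out at $\Delta := \Gamma$ therefore delivers the first three conclusions $\Gamma \der t : T$, $\Gamma \der t' : T$, and $\Gamma \der t = t' : T$ simultaneously.

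It remains to produce $\Gamma \der T$, which I would obtain from the side condition $(\Gamma \valid T \munless T = s)$ by a case split. If $T = s$ is a sort, then $\der \Gamma$ together with the sort axiom gives $\Gamma \der s : s'$, i.e.\ $\Gamma \der T$. Otherwise $\Gamma \valid T$ holds, which by its defining clauses unfolds to $\Gamma \valid T = T : s$ for some $s$; applying the very same identity-substitution extraction at the type level yields $\Gamma \der T \cleq T : s$, hence $\Gamma \der T : s$, so $\Gamma \der T$. I expect no genuine obstacle here, since the substance has already been discharged in Lemma~\ref{lem:idsubst} and in the symmetry/transitivity argument (Lemma~\ref{lem:symlr}). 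The only point demanding care is verifying the ``contains only well-typed terms'' remark, but this is immediate: a premise of the form $\Delta \der t \E t' : T$ occurs in every defining clause of $\Seq$ and $\cleq$ and can be read off directly, so no fresh induction is needed.
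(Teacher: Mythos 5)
Your proposal is correct and follows essentially the same route as the paper's own proof: instantiate validity at the identity substitution via Lemma~\ref{lem:idsubst}, project the built-in $\E$ (well-typedness and definitional equality) out of the defining clause of $\cleq$, and extract $\Gamma \der T$ from the validity of $T$. The only difference is presentational: you spell out the case split on the side condition $(\Gamma \valid T \munless T = s)$, which the paper compresses into the word ``analogously.''
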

\begin{proof}
  Using Lemma~\ref{lem:idsubst} we obtain $\Gamma \der t \cleq
  t' : T$, which entails $\Gamma \der t,t' : T$ and
  $\Gamma \der t = t' : T$.  
  Analogously, since our assumption entails $\Gamma
  \valid T$ by definition, we get $\Gamma \der T$.
\end{proof}

\subsection{Fundamental theorem}

We prove a series of lemmata which constitute parts of the fundamental
theorem for the Kripke logical relation.

\begin{lem}[Resurrection] \label{lem:resurrect}
  If $\valid \Gamma$ and
  $\Delta \der \sigma \cleq \sigma' : \Gamma$ then 
  $\resurrect\Delta \der \sigma \cleq \sigma : \resurrect\Gamma$ and
  $\resurrect\Delta \der \sigma' \cleq \sigma' : \resurrect\Gamma$.
\end{lem}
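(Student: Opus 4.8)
The plan is to proceed by induction on the context $\Gamma$, exactly in the style of the proof that the substitution relation is a PER (Lemma~\ref{lem:substlrper}). The base case $\Gamma = \cempty$ will be immediate, since $\resurrect\cempty = \cempty$ and the empty substitution is trivially related to itself. For the step I would take $\Gamma = \eext{\Gamma_0}{x}{U}$, so that inverting $\valid\Gamma$ yields $\valid\Gamma_0$ and $\Gamma_0 \valid U$, while inverting $\Delta\der\sigma\cleq\sigma':\Gamma$ yields $\Delta\der\sigma\cleq\sigma':\Gamma_0$ together with the component $\Delta\der\sigma(x)\cleq\sigma'(x)\evarof U\sigma$. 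Applying the induction hypothesis to $\Gamma_0$ then gives $\resurrect\Delta\der\sigma\cleq\sigma:\resurrect{\Gamma_0}$ and $\resurrect\Delta\der\sigma'\cleq\sigma':\resurrect{\Gamma_0}$, which will serve as the first premises of the substitution-extension rule for $\resurrect\Gamma = \cext{\resurrect{\Gamma_0}}{x}{U}$ (note that resurrection turns the final binding relevant while leaving its type $U$ untouched).

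The central observation I would exploit is that resurrection is itself a form of weakening, $\resurrect\Delta \leq \Delta$, since promoting irrelevant bindings to relevant ones is precisely one of the moves permitted by context extension; hence Lemma~\ref{lem:weaklr} transports any instance of $\cleq$ from $\Delta$ to $\resurrect\Delta$. It then remains to supply the two relevant components $\resurrect\Delta\der\sigma(x)\cleq\sigma(x):U\sigma$ and $\resurrect\Delta\der\sigma'(x)\cleq\sigma'(x):U\sigma'$, splitting on the annotation of the binding. If the binding is irrelevant, the component $\Delta\der\sigma(x)\cleq\sigma'(x)\div U\sigma$ unfolds by definition into exactly the two reflexive, relevant statements $\resurrect\Delta\der\sigma(x)\cleq\sigma(x):U\sigma$ and $\resurrect\Delta\der\sigma'(x)\cleq\sigma'(x):U\sigma$, so nothing further is needed there. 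If the binding is relevant, I would instead start from $\Delta\der\sigma(x)\cleq\sigma'(x):U\sigma$ and use that $\cleq$ at type $U\sigma$ is a PER (Lemmas~\ref{lem:symlr} and~\ref{lem:translr}): its side condition $\Delta\der U\sigma\cleq U\sigma:s$ is obtained by instantiating $\Gamma_0\valid U$ at $\Delta\der\sigma\cleq\sigma:\Gamma_0$ (the latter via Lemma~\ref{lem:substlrper}), and then symmetry and transitivity yield $\Delta\der\sigma(x)\cleq\sigma(x):U\sigma$ and $\Delta\der\sigma'(x)\cleq\sigma'(x):U\sigma$, which I weaken to $\resurrect\Delta$.

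In both cases the primed component is produced at type $U\sigma$, whereas the extension rule for $\resurrect\Delta\der\sigma'\cleq\sigma':\resurrect\Gamma$ demands it at type $U\sigma'$, so the last step is a type conversion (Lemma~\ref{lem:convlr}) along $\resurrect\Delta\der U\sigma\cleq U\sigma':s$, itself obtained by instantiating $\Gamma_0\valid U$ at $\Delta\der\sigma\cleq\sigma':\Gamma_0$ and weakening to $\resurrect\Delta$. Assembling the two relevant components with the induction hypothesis through the extension rule closes the step. I expect the only genuine subtlety to be this bookkeeping of the type annotation on the primed substitution—tracking that the second component is first produced at $U\sigma$ and must be converted to $U\sigma'$—together with the initial recognition that $\resurrect\Delta\leq\Delta$ legitimizes every weakening step; the relevant/irrelevant split is then routine once the PER structure and the unfolding of the $\div$-closure are in hand.
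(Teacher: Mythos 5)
Your proof is correct and follows essentially the same route as the paper's: induction on $\Gamma$, unfolding the $\div$-closure of $\cleq$ at irrelevant bindings, and casting the primed component from $U\sigma$ to $U\sigma'$ via $\Gamma \valid U$ and Lemma~\ref{lem:convlr}. The only difference is one of detail: you spell out the relevant-binding case (reflexivity of the component via Lemmas~\ref{lem:substlrper} and~\ref{lem:symlr}, then weakening along $\resurrect\Delta \leq \Delta$ using Lemma~\ref{lem:weaklr}), which the paper compresses into the phrase ``by definition''.
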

\begin{proof}
  By induction on $\Gamma$, the interesting case being
\[
  \ru{\Delta \der \sigma \cleq \sigma' : \Gamma \qquad
      \Delta \der \sigma(x) \cleq \sigma'(x) \evarof U\sigma
    }{\Delta \der \sigma \cleq \sigma' : \eext \Gamma x U}
  .
\]
  First, we show $\resurrect\Delta \der \sigma \cleq \sigma :
  (\cext{\resurrect \Gamma} x U)$.  By induction hypothesis
  $\resurrect\Delta \der \sigma \cleq \sigma : \resurrect\Gamma$,
  and by definition, 
  $\resurrect\Delta \der \sigma(x) \cleq \sigma(x) : U\sigma$.
  This immediately entails our goal.

  For the second goal  $\resurrect\Delta \der \sigma' \cleq \sigma' :
  (\cext{\resurrect \Gamma} x U)$, observe that
  $\Gamma \valid U$, hence $\Delta \der U\sigma \cleq U\sigma' : s$
  for some sort $s$.  Thus, we can cast our hypothesis
  $\Delta \der \sigma'(x) \cleq \sigma'(x) : U\sigma$ to $U\sigma'$
  and conclude analogously. 
\end{proof}

\begin{cor}\label{cor:relevance-polymorphic-instantiation}
  If $\eresurrect \Gamma \valid u : U$ and
  $\Delta \der \sigma \cleq \sigma' : \Gamma$
  then $\Delta \der u\sigma \cleq u\sigma' \evarof U\sigma$.
\end{cor}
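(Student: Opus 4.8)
The plan is to argue by case analysis on the annotation $\evar$, since both $\eresurrect\Gamma$ and the overloaded separator $\evarof$ unfold according to it. The relevant case $\evar = \noterased$ is immediate: then $\eresurrect\Gamma = \Gamma$ and the hypothesis is exactly $\Gamma \valid u = u : U$, whose defining clause says that $\Delta \der \sigma \cleq \sigma' : \Gamma$ implies $\Delta \der u\sigma \cleq u\sigma' : U\sigma$; instantiating it with the given substitution relation closes this case.

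For the irrelevant case $\evar = \erased$ I first unfold the goal $\Delta \der u\sigma \cleq u\sigma' \div U\sigma$ through the defining rule of the logical relation at an irrelevant type, reducing it to the two reflexive judgements $\resurrect\Delta \der u\sigma \cleq u\sigma : U\sigma$ and $\resurrect\Delta \der u\sigma' \cleq u\sigma' : U\sigma$. The hypothesis now reads $\resurrect\Gamma \valid u : U$, so I need substitutions related at $\resurrect\Gamma$. These are precisely what the Resurrection lemma (Lemma~\ref{lem:resurrect}) produces --- applicable since the ambient $\Gamma$ is valid --- turning $\Delta \der \sigma \cleq \sigma' : \Gamma$ into $\resurrect\Delta \der \sigma \cleq \sigma : \resurrect\Gamma$ and $\resurrect\Delta \der \sigma' \cleq \sigma' : \resurrect\Gamma$. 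Feeding the first into the validity of $u$ yields the first conjunct on the nose, while the second yields $\resurrect\Delta \der u\sigma' \cleq u\sigma' : U\sigma'$.

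The step I expect to be the crux is reconciling this last judgement, which lives at type $U\sigma'$, with the required type $U\sigma$. This is a type conversion (Lemma~\ref{lem:convlr}) and hence reduces to $\resurrect\Delta \der U\sigma \cleq U\sigma' : s$. Here the asymmetry of irrelevance bites: the Resurrection lemma only delivers the two \emph{reflexive} substitution relations at $\resurrect\Gamma$, and $\sigma, \sigma'$ need not be related to one another on the bindings that are irrelevant in $\Gamma$, so $\resurrect\Gamma \valid U$ alone does not let me compare $U\sigma$ with $U\sigma'$. Instead I use that $U$ is valid already in the unresurrected context, $\Gamma \valid U$ --- which is the situation in which this corollary is applied, as $U$ occurs as the domain of an irrelevant function type, formed by Reed's rule in $\Gamma$ and not in $\resurrect\Gamma$ (equivalently, supplied by function type injectivity, Lemma~\ref{lem:funinjval}). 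From $\Gamma \valid U$ and $\Delta \der \sigma \cleq \sigma' : \Gamma$ I obtain $\Delta \der U\sigma \cleq U\sigma' : s$, transport it to $\resurrect\Delta$ by weakening (Lemma~\ref{lem:weaklr}), and then the conversion goes through, assembling the two conjuncts into $\Delta \der u\sigma \cleq u\sigma' \div U\sigma$.
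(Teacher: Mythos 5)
Your proof takes the same route as the paper's: a case split on $\evar$, the relevant case discharged directly by the defining clause of validity, and the irrelevant case handled by the Resurrection lemma followed by a type cast. The difference is that you stopped at exactly the point the paper glosses over. The paper's own proof ends with ``Analogously we get $\resurrect\Delta \der u\sigma' \cleq u\sigma' : U\sigma'$ \emph{which we cast to} $\resurrect\Delta \der u\sigma' \cleq u\sigma' : U\sigma$'' --- with no justification for the cast. You are right that this cast is the crux and that it cannot be discharged from the stated hypotheses: Lemma~\ref{lem:convlr} needs $\resurrect\Delta \der U\sigma \cleq U\sigma' : s$, but Resurrection only yields the two \emph{reflexive} relations $\resurrect\Delta \der \sigma \cleq \sigma : \resurrect\Gamma$ and $\resurrect\Delta \der \sigma' \cleq \sigma' : \resurrect\Gamma$, and $\resurrect\Gamma \valid U$ (which is all the hypothesis $\resurrect\Gamma \valid u : U$ provides) cannot compare $U\sigma$ with $U\sigma'$ when $\sigma$ and $\sigma'$ are unrelated on $\Gamma$'s irrelevant bindings. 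Your repair --- importing $\Gamma \valid U$, valid in the \emph{unresurrected} context --- is exactly right, and it mirrors what the paper itself does in the inline version of this argument inside the validity-of-irrelevant-application lemma, where the cast is justified ``by virtue of $\Gamma \valid U$ which we get from $\Gamma \valid \erfunT x U T$ by Lemma~\ref{lem:funinjval}.''

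Moreover, the gap you found is not merely cosmetic: without $\Gamma \valid U$ the corollary as literally stated is false. Take $\Gamma = (Y \erof \Set[0])$, $U = \funT y Y {\Set[0]}$, $u = \lam y Y Y$, and let $\Delta$ declare two relevant variables $P \of \Set[0]$ and $Q \of \Set[0]$, with $\sigma(Y) = P$ and $\sigma'(Y) = Q$. The fundamental theorem (Thm.~\ref{thm:compl}) gives $\resurrect\Gamma \valid u : U$, and $\Delta \der \sigma \cleq \sigma' : \Gamma$ holds because the irrelevant binding of $Y$ demands only that $P$ and $Q$ each be reflexively related in $\resurrect\Delta$. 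Yet the conclusion's second conjunct, $\resurrect\Delta \der u\sigma' \cleq u\sigma' : U\sigma$, already requires the typing $\resurrect\Delta \der \lam y Q Q : \funT y P {\Set[0]}$, which by inversion (Lemma~\ref{lem:inv}) and function type injectivity (Thm.~\ref{thm:funinj}) would force $\Delta \der P = Q : \Set[0]$ --- refuted by completeness of algorithmic equality (Thm.~\ref{thm:aleqcompl}) for distinct variables. So the corollary should really carry $\Gamma \valid U$ (and $\valid\Gamma$, which both you and the paper implicitly assume when invoking Lemma~\ref{lem:resurrect}) as explicit hypotheses; as you observe, both are available at every use site, from $\valid \eext \Gamma x U$ in the $\beta$-validity lemma and from Lemma~\ref{lem:funinjval} in the irrelevant-application lemma. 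Under that amendment your proof is complete.
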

\begin{proof}
  In case $\evar = \noterased$ it holds by definition,
  but we need resurrection for $\evar = \erased$.
  If $\Delta \der \sigma \cleq \sigma' : \Gamma$,
  then by resurrection (Lemma~\ref{lem:resurrect}) we have
  $\resurrect\Delta \der \sigma \cleq \sigma : \resurrect\Gamma$,
  so from $\resurrect \Gamma \valid u : U$ we deduce
  $\resurrect\Delta \der u\sigma \cleq u\sigma : U\sigma$.
  Analogously we get
  $\resurrect\Delta \der u\sigma' \cleq u\sigma' : U\sigma'$
  which we cast to
  $\resurrect\Delta \der u\sigma' \cleq u\sigma' : U\sigma$.
\end{proof}

\begin{lem}[Validity of $\beta$-reduction]
\[
 \nru{\reqbeta
    }{\eext \Gamma x U \valid t : T  \qquad
      \eresurrect \Gamma \valid u : U 
    }{\Gamma \valid (\elam x U t) \eapp u = \subst u x t : \subst u x T}
\]
\end{lem}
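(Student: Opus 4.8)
The plan is to unfold the definition of term validity and reduce the goal to a single statement about the logical relation on substitution instances. Writing the conclusion out, I must exhibit $\valid\Gamma$, show $\Gamma \valid \subst u x T$ (unless $\subst u x T$ is a sort), and prove, for arbitrary $\Delta$ with $\Delta \der \sigma \cleq \sigma' : \Gamma$, that
\[
  \Delta \der ((\elam x U t)\eapp u)\sigma \cleq (\subst u x t)\sigma' : (\subst u x T)\sigma .
\]
The first obligation is immediate, since $\valid \eext\Gamma x U$ is a presupposition of the premise $\eext\Gamma x U \valid t : T$ and the context-validity rule gives $\valid\Gamma$. For the core statement I set $\rho = (\sigma, u\sigma/x)$ and $\rho' = (\sigma', u\sigma'/x)$; capture-avoiding substitution yields the identities $t\rho = (\subst u x t)\sigma$, $t\rho' = (\subst u x t)\sigma'$ and $T\rho = (\subst u x T)\sigma$, which I will use repeatedly.

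First I would build an extended substitution inside the logical relation. Applying Corollary~\ref{cor:relevance-polymorphic-instantiation} to the premise $\eresurrect\Gamma \valid u : U$ and to $\Delta \der \sigma \cleq \sigma' : \Gamma$ gives $\Delta \der u\sigma \cleq u\sigma' \evarof U\sigma$; together with $\Delta \der \sigma \cleq \sigma' : \Gamma$ this is precisely the data required by the extension rule for $\Delta \der \rho \cleq \rho' : \eext\Gamma x U$. Feeding this into the premise $\eext\Gamma x U \valid t : T$ produces $\Delta \der t\rho \cleq t\rho' : T\rho$, that is,
\[
  \Delta \der (\subst u x t)\sigma \cleq (\subst u x t)\sigma' : (\subst u x T)\sigma ,
\]
which is exactly the goal with the redex already contracted on the left. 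The same $\rho,\rho'$ applied to the type validity contained in the premise ($\eext\Gamma x U \valid T$, unless $T$ is a sort) yields $\Delta \der (\subst u x T)\sigma \cleq (\subst u x T)\sigma' : s_2$, hence $\Gamma \valid \subst u x T$, discharging the remaining side condition.

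It then remains to replace the contractum $(\subst u x t)\sigma$ on the left by the redex $((\elam x U t)\eapp u)\sigma$. For this I appeal directly to the weak-head-closure clause in the definition of $\cleq$: the redex evaluates to the very same whnf, because $((\elam x U t)\eapp u)\sigma = (\elam x {U\sigma}{t\sigma})\eapp(u\sigma)$ weak-head reduces through the active-application rule to $\whnf{\subst{u\sigma}x{t\sigma}} = \whnf{(\subst u x t)\sigma}$. Keeping all the components $A,a,a'$ of the derivation just obtained, the only new obligations are the declarative equality $\Delta \der ((\elam x U t)\eapp u)\sigma = (\subst u x t)\sigma : (\subst u x T)\sigma$ and the typing $\Delta \der ((\elam x U t)\eapp u)\sigma : (\subst u x T)\sigma$. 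Via rule \reqbeta{} and the application/abstraction rules these both reduce to the two judgements $\eext\Delta x{U\sigma} \der t\sigma : T\sigma$ and $\Delta \der u\sigma \evarof U\sigma$; combining the $\beta$-equality with the equation of the previous paragraph by transitivity of definitional equality then closes the clause.

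These last two judgements are where the real work lies, and I expect them to be the main obstacle. The relevance premise $\Delta \der u\sigma \evarof U\sigma$ is read off from the output of Corollary~\ref{cor:relevance-polymorphic-instantiation}, since $\cleq$ relates only well-typed terms. For $\eext\Delta x{U\sigma} \der t\sigma : T\sigma$ I would re-instantiate the premise $\eext\Gamma x U \valid t : T$ with the lifted identity substitution $(\sigma, x/x)$: using that $\Delta \der \_ \cleq \_ : \Gamma$ is a PER on valid contexts (Lemma~\ref{lem:substlrper}) together with weakening to obtain the diagonal $\eext\Delta x{U\sigma} \der \sigma \cleq \sigma : \Gamma$, and placing the fresh variable into the relation by Lemma~\ref{lem:intolr} (for which $\eext\Delta x{U\sigma} \der x \E x \evarof U\sigma$ suffices), one gets $\eext\Delta x{U\sigma} \der (\sigma,x/x) \cleq (\sigma,x/x) : \eext\Gamma x U$, hence $\eext\Delta x{U\sigma} \der t\sigma \cleq t\sigma : T\sigma$ and in particular the required typing. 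The delicate points throughout are the capture-avoidance bookkeeping behind the substitution identities and the irrelevant case $\evar = \erased$, where the fresh variable and the context extension must be handled through their irrelevant variants.
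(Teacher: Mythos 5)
Your proof is correct and takes essentially the same route as the paper's: extend the substitutions to $(\sigma,u\sigma/x)$ and $(\sigma',u\sigma'/x)$ via Corollary~\ref{cor:relevance-polymorphic-instantiation}, instantiate the body premise to land on the contractum, obtain $\Gamma \valid \subst u x T$ from the type-validity part of the premise, and close under weak head expansion. The only difference is that you spell out the well-typedness of the redex (via the lifted substitution $(\sigma,x/x)$, Lemma~\ref{lem:substlrper}, weakening and Lemma~\ref{lem:intolr}), a step the paper's proof compresses into the phrase ``closed by weak head expansion to well-typed'' terms; your elaboration of it is sound.
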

\begin{proof}
  $\valid \Gamma$ is contained in the first
  hypothesis $\Gamma \valid u \evarof U$.
  Then, given $\Delta \der \rho \cleq \rho' : \Gamma$ we need to show
  $\Delta \der (\elam x U t)\rho \eapp u\rho \cleq 
     t(\rho',u\rho'/x) : T(\rho,u\rho/x)$
  and also $\Delta \der \subst u x T \rho \cleq \subst u x T \rho' :
  s$ for some $s$ 
  (the latter to get $\Gamma \valid \subst u x T$).
  
  Let $\sigma = (\rho, u\rho/x)$ and $\sigma' = (\rho', u\rho'/x)$.
  From the second hypothesis and Cor.~
  \ref{cor:relevance-polymorphic-instantiation} we get
  $\Delta \der u\rho \cleq u\rho' \evarof U\rho$, which gives
  $\Delta \der \sigma \cleq \sigma' : \eext \Gamma x U$.
  By instantiating the first hypothesis we get
  $\Delta \der t\sigma \cleq t\sigma' : T\sigma$,
  and also (from the premise $\eext \Gamma x U \valid T$)
  $\Delta \der T\sigma = T\sigma'$,
  which gives $\Gamma \valid \subst u x T$.

  Finally, from $\Delta \der t\sigma \cleq t\sigma'$ we get the desired
  $\Delta \der (\elam x U t)\rho \eapp u\rho \cleq t\sigma' : T\sigma$,
  as $\cleq$ is closed by weak head expansion to well-typed
  $\Delta \der (\elam x U t)\rho \eapp u\rho : T\sigma$.
\end{proof}

\begin{lem}[Validity of $\eta$]
\[ \nru{\reqeta
    }{\Gamma \valid t : \efunT x U T 
    }{\Gamma \valid t = \elam x U {t \eapp x} : \efunT x U T
    }
\]  
\end{lem}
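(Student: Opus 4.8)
The plan is to unfold the definition of term validity and reduce the goal to a single ``semantic $\eta$'' step, which is then discharged by inspecting the function-type clause of the logical relation. First I would observe that the two side conditions of the conclusion $\Gamma \valid t = \elam x U {t \eapp x} : \efunT x U T$ are already supplied by the hypothesis: $\valid\Gamma$ and $\Gamma \valid \efunT x U T$ are both contained in $\Gamma \valid t : \efunT x U T$, which by definition is $\Gamma \valid t = t : \efunT x U T$ and hence carries $\Gamma \valid \efunT x U T$ (the type is not a sort). Thus everything reduces to the main clause: for an arbitrary $\Delta \der \sigma \cleq \sigma' : \Gamma$, show $\Delta \der t\sigma \cleq (\elam x U{t \eapp x})\sigma' : (\efunT x U T)\sigma$, where $(\elam x U{t \eapp x})\sigma' = \elam x {U\sigma'}{(t\sigma')\eapp x}$ (taking $x$ fresh for $\sigma'$).

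Next I would set up a transitivity argument that cleanly separates the ``$\sigma$ versus $\sigma'$'' mismatch from the $\eta$-expansion proper. From the hypothesis applied to $\sigma \cleq \sigma'$ we obtain $\Delta \der t\sigma \cleq t\sigma' : (\efunT x U T)\sigma$. By Lemma~\ref{lem:substlrper} the relation $\Delta \der \_ \cleq \_ : \Gamma$ is a PER, so we also get the reflexive instances $\Delta \der \sigma \cleq \sigma : \Gamma$ and $\Delta \der \sigma' \cleq \sigma' : \Gamma$; instantiating $\Gamma \valid \efunT x U T$ with them yields $\Delta \der (\efunT x U T)\sigma \cleq (\efunT x U T)\sigma : s_3$ and, by symmetry of substitutions, $\Delta \der (\efunT x U T)\sigma' \cleq (\efunT x U T)\sigma : s_3$. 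It therefore suffices to prove the semantic $\eta$-step at the $\sigma'$-type, namely $\Delta \der t\sigma' \cleq \elam x {U\sigma'}{(t\sigma')\eapp x} : (\efunT x U T)\sigma'$: converting it to $(\efunT x U T)\sigma$ via Lemma~\ref{lem:convlr} and then composing with $\Delta \der t\sigma \cleq t\sigma' : (\efunT x U T)\sigma$ by transitivity (Lemma~\ref{lem:translr}, using the PER type above) gives exactly the required judgement.

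The heart of the proof is thus the semantic $\eta$-step, which I would establish by building a closure derivation for $\cleq$ at the function type. Write $g = \elam x {U\sigma'}{(t\sigma')\eapp x}$, which is already a whnf, and note $(\efunT x U T)\sigma'$ is a whnf type. From the reflexive instance $\Delta \der t\sigma' \cleq t\sigma' : (\efunT x U T)\sigma'$ (hypothesis with $\sigma' \cleq \sigma'$) the closure gives a whnf $a'$ of $t\sigma'$ with $\Delta \der a' \Seq a' : (\efunT x U T)\sigma'$. For the definitional-equality ($\E$) component demanded by the closure I would invoke the \IITT rule \reqeta\ directly on $\Delta \der t\sigma' : (\efunT x U T)\sigma'$, obtaining $\Delta \der t\sigma' = g : (\efunT x U T)\sigma'$, and type $g$ by the abstraction rule with well-sortedness of the function type drawn from $\Gamma \valid \efunT x U T$. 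The remaining obligation is $\Delta \der a' \Seq g : (\efunT x U T)\sigma'$ at function type; by the $\Seq$ function clause this amounts to matching application behaviour: for any $\hatDelta \leq \Delta$ with $\hatDelta \der u \cleq u' \evarof U\sigma'$, we have $g \eapp u' \redw (t\sigma')\eapp u'$, whose whnf coincides with that of $a' \eapp u'$, so closure of $\cleq$ under weak head expansion turns the reflexive fact $\hatDelta \der a'\eapp u \cleq a'\eapp u' : \subst u x{T\sigma'}$ into $\hatDelta \der a'\eapp u \cleq g \eapp u' : \subst u x{T\sigma'}$ (the well-typedness of $g\eapp u'$ at $\subst u x{T\sigma'}$ coming from the application rule together with the conversion $\subst{u'}x{T\sigma'} = \subst u x{T\sigma'}$ justified by $u \cleq u'$).

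The main obstacle will be precisely this semantic $\eta$-step: keeping the weak-head-expansion bookkeeping honest (it is the $\beta$-reduction of the expanded $\lambda$ that makes the two application whnfs agree) and keeping the $\sigma$- and $\sigma'$-typings aligned, for which Lemma~\ref{lem:convlr} and the PER structure of substitutions (Lemma~\ref{lem:substlrper}) are indispensable. Everything else---discharging the side conditions, and the final transitivity---is routine once the $\E$ component is supplied by \reqeta\ and the application behaviour of $g$ is shown to mirror that of the whnf of $t\sigma'$.
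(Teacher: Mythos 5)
Your proof is correct, and everything it relies on (Lemmas~\ref{lem:substlrper}, \ref{lem:translr} and \ref{lem:convlr}) is established before this lemma, but it takes a genuinely different route from the paper. The paper never invokes transitivity of the logical relation: it proves the asymmetric statement $\Delta \der t\rho \cleq \elam x {U\rho'} {(t\rho') \eapp x} : (\efunT x U T)\rho$ directly, exploiting the fact that the Kripke relation already accommodates two different substitutions on the two sides. Its applicative component falls out of instantiating the hypothesis $\Delta \der t\rho \cleq t\rho' : (\efunT x U T)\rho$ at $(u,u')$ and weak-head-expanding $(\elam x {U\rho'}{(t\rho')\eapp x}) \eapp u'$ to $(t\rho')\eapp u'$; its $\E$ component comes from a typing derivation for the expansion, \reqeta\ applied to $t\rho'$, conversion, and transitivity of \emph{definitional} equality only. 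You instead factor through transitivity of $\cleq$: the hypothesis instance $t\sigma \cleq t\sigma'$ is composed (Lemma~\ref{lem:translr}, after a conversion by Lemma~\ref{lem:convlr}) with a diagonal $\eta$-step $t\sigma' \cleq \elam x{U\sigma'}{(t\sigma')\eapp x}$ at the $\sigma'$-instance of the type, and the PER structure of the substitution relation (Lemma~\ref{lem:substlrper}) is needed to discharge the self-relatedness side conditions of those lemmas. This costs extra bookkeeping that the paper avoids: you must descend to the whnf $a'$ of $t\sigma'$ and rebuild $\Delta \der a' \Seq g : (\efunT x U T)\sigma'$ for the expansion $g$, with renewed $\E$ obligations both at the function type (the equality $a' = g$ via $a' = t\sigma' = g$) and under application (the chain $a'\eapp u = a'\eapp u' = (t\sigma')\eapp u' = g\eapp u'$, which your write-up leaves implicit). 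What each buys: the paper's asymmetric formulation is leaner and keeps transitivity of the Kripke relation entirely out of the argument --- a sensitive point in this development, since transitivity of such relations is exactly where prior work had a gap --- while yours isolates a clean, reusable reflexive $\eta$-fact at the price of that machinery. Both hinge on the same core observation: $(\elam x U {t \eapp x})\eapp u'$ weak-head-reduces to $t \eapp u'$, so the expansion's applicative behaviour matches, and \reqeta\ supplies the definitional-equality component.
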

\begin{proof}
$\valid \Gamma$ and $\Gamma \valid \efunT x U T$ are direct consequences
of our hypothesis. Given $\Delta \der \rho \cleq \rho' : \Gamma$, we need to show
$\Delta \der t\rho \cleq (\elam x U {t \eapp x}) \rho' : (\efunT x U T)\rho$.
\Wlog, $x$ is not free in the domain nor range of substitutions $\rho$
and $\rho'$, thus with $t' := t\rho$, $t'' := t\rho'$,
$U' := U\rho$, $U'' := U\rho'$, $T' := T\rho$ and $T'' := T\rho'$
it is sufficient to show
$\Delta \der t' \cleq \elam x {U''} {t'' \eapp x} : \efunT x {U'} {T'}$.

First, given $(\Delta', u, u')$ such that
$\Delta' \leq \Delta$ and $\Delta' \der u \cleq u' \evarof U'$, we show
$\Delta' \der t' \eapp u
  \cleq (\elam x{U''} {t'' \eapp x}) \eapp u' : \subst u x {T'}$.
Our hypothesis $\Gamma \valid t : \efunT x U T$ entails
$\Delta \der t\rho \cleq t\rho' : (\efunT x U T)\rho$,
that is to say $\Delta \der t' \cleq t'' : \efunT x {U'}{T'}$.
This logical relation at a function type,
when instantiated to $(\Delta', u, u')$,
gives us $\Delta' \der t' \eapp u \cleq t'' \eapp u' : \subst u x {T'}$,
which weak-head expands to the desired goal.

Second, we show
$\Delta \der t' \E \elam x {U''} {t'' \eapp x} : \efunT x {U'} {T'}$.
\begin{itemize}
\item $\Gamma \der t' : \efunT x {U'} {T'}$ is a simple consequence of our hypothesis $\Gamma \valid t : \efunT x U T$.
\item $\Gamma \der \elam x {U''} {t'' \eapp x} : \efunT x {U'} {T'}$ has the following proof:
\[
\begin{prooftree}
  \[ 
  \[
    \[
      \[ \Gamma \valid t : \efunT x U T
         \Justifies 
         \Delta \der t'' : \efunT x {U''} {T''}
      \]
      \justifies
      \eext \Delta x {U''} \der t'' : \efunT x {U''} {T''}
      \using\text{weak} 
    \]
    \[ \justifies \eresurrect{(\eext \Delta x {U''})} \der x : U'' \using \text{var}\]
    \justifies 
    \eext {\Delta} x {U''} \der t'' \eapp x : T'' 
  \]
  \[
    \Gamma \valid \efunT x U T
    \Justifies \Delta \der \efunT x {U''} {T''}
  \]
  \justifies
  \Delta \der \elam x {U''} {t'' \eapp x} : \efunT x {U''} {T''}
  \]
  \justifies
  \Delta \der \elam x {U''} {t'' \eapp x} : \efunT x {U'} {T'}
  \using\text{conv}
\end{prooftree}
\]
\item $\Delta \der t' = \elam x {U''} {t'' \eapp x} : \efunT x {U'} {T'}$.
The $\eta$-rule of definitional equality gives us
$\Delta \der t'' = \elam x {U''} {t'' \eapp x} : \efunT x {U''} {T''}$.
From $\Gamma \valid \efunT x U T$ we can convert it to the type
$\efunT x {U'} {T'}$, and then conclude by transitivity using
$\Delta \der t' = t'' : \efunT x {U'} {T'}$, which is a direct consequence of
$\Gamma \valid t : \efunT x U T$.
\end{itemize}
\end{proof}

\begin{lem}[Validity of function equality]
\[ \nru{\reglambda
    }{\Gamma \valid U = U' \qquad \eext \Gamma x U \valid t = t' : T
    }{\Gamma \valid (\elam x U t) = (\elam x {U'} {t'}) : \efunT x U T
    }
\]
\begin{proof}
Again $\valid \Gamma$ and $\Gamma \valid \efunT x U T$ are simple consequences
of our hypotheses. Given $\Delta \der \rho \cleq \rho' : \Gamma$
(\wwlog, $x$ is not free in $\rho,\rho'$ domain or range),
we need to show
$\Delta \der (\elam x {U\rho} {t\rho}) \cleq (\elam x {U'\rho'} {t'\rho'})
  : (\efunT x {U\rho} {T\rho})$. We will skip the proof of
$\Delta \der (\elam x {U\rho} {t\rho}) \E (\elam x {U'\rho'} {t'\rho'})
  : (\efunT x {U\rho} {T\rho})$, as it is similar to the corresponding
part of the $\eta$-validity lemma.

Given $(\Delta', u, u')$ such that
$\Delta' \leq \Delta$ and $\Delta' \der u \cleq u' \evarof U\rho$,
we have to show that
$\Delta' \der (\elam x {U\rho} {t\rho}) \eapp u
  \cleq (\elam x {U'\rho'} {t'\rho'}) \eapp u'
  : \subst u x {T\rho}$.
Let $\sigma = (\rho, u/x)$ and $\sigma' = (\rho', u'/x)$.
As we supposed $\Delta' \der u \cleq u' \evarof U\rho$, we have
$\Delta' \der \sigma \cleq \sigma' : \eext \Gamma x {U\rho}$.
Instantiating the second hypothesis with $\Delta', \sigma, \sigma'$
therefore gives us $\Delta' \der t\sigma = t'\sigma' : T\sigma$,
which can also be written
$\Delta' \der \subst u x {t\rho}
  \cleq \subst {u'} x {t'\rho'} : \subst u x {T\rho}$,
which is weak-head expansible to our goal.
\end{proof}
\end{lem}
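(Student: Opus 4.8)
The plan is to unfold term validity. To establish $\Gamma \valid (\elam x U t) = (\elam x {U'} {t'}) : \efunT x U T$ I must supply three ingredients: the context condition $\valid \Gamma$, the type condition $\Gamma \valid \efunT x U T$, and, for every pair of related substitutions $\Delta \der \rho \cleq \rho' : \Gamma$, a proof of $\Delta \der (\elam x U t)\rho \cleq (\elam x {U'} {t'})\rho' : (\efunT x U T)\rho$. The first two follow routinely from the hypotheses: $\valid \Gamma$ is contained in $\Gamma \valid U = U'$, while $\Gamma \valid \efunT x U T$ is obtained by combining the reflexivity instance $\Gamma \valid U$ extracted from $\Gamma \valid U = U'$ (validity being a PER) with $\eext \Gamma x U \valid T$, which is contained in the second premise, through the function-type clause of the logical relation.

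For the main clause I fix an arbitrary $\Delta \der \rho \cleq \rho' : \Gamma$ and, \wwlog, assume $x$ does not occur in the domain or range of $\rho,\rho'$. Since both lambdas are already weak head normal forms, the closure $\cleq$ reduces to the relation $\Seq$ at the function type, so I check its two components separately. The definitional-equality component $\Delta \der (\elam x {U\rho} {t\rho}) \E (\elam x {U'\rho'} {t'\rho'}) : \efunT x {U\rho} {T\rho}$ proceeds exactly as in the validity of $\eta$: I type both lambdas—the second at $\efunT x {U\rho} {T\rho}$ only after a conversion along $U\rho = U'\rho'$ justified by $\Gamma \valid U = U'$ and Lemma~\ref{lem:convlr}—and derive the equality itself from the declarative rule $\reqlam$ applied to the substituted premises.

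The heart of the argument is the application component. Given $\Delta' \leq \Delta$ together with $\Delta' \der u \cleq u' \evarof U\rho$, I must show $\Delta' \der (\elam x {U\rho} {t\rho}) \eapp u \cleq (\elam x {U'\rho'} {t'\rho'}) \eapp u' : \subst u x {T\rho}$. The key move is to assemble the extended substitutions $\sigma = (\rho, u/x)$ and $\sigma' = (\rho', u'/x)$: weakening $\Delta \der \rho \cleq \rho' : \Gamma$ to $\Delta'$ and pairing it with the assumed relation on $u,u'$ yields $\Delta' \der \sigma \cleq \sigma' : \eext \Gamma x U$. Instantiating the second hypothesis $\eext \Gamma x U \valid t = t' : T$ then gives $\Delta' \der t\sigma \cleq t'\sigma' : T\sigma$; using freshness of $x$ to rewrite $t\sigma \equiv \subst u x {t\rho}$, $t'\sigma' \equiv \subst {u'} x {t'\rho'}$ and $T\sigma \equiv \subst u x {T\rho}$, this is precisely a weak head expansion of the goal, so closure of $\cleq$ under weak head expansion concludes.

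I expect the difficulty to be bookkeeping rather than conceptual. The domains line up favorably: because the target type is $\efunT x U T$ with domain $U\rho$, the assumed relation $\Delta' \der u \cleq u' \evarof U\rho$ matches the binding in $\eext \Gamma x U$ directly, so the only genuine use of $\Gamma \valid U = U'$ is in retyping $\elam x {U'\rho'} {t'\rho'}$ for the $\E$-component. Some care is needed in the irrelevant reading $\evar = \erased$, where the related arguments are only required to be well-typed in the resurrected context rather than related to each other; but this is absorbed uniformly by the $\evar$-generic definition of $\cleq$ and of the substitution relation, so no separate case analysis is required.
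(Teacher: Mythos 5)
Your proposal is correct and takes essentially the same route as the paper's proof: extend the related substitutions to $\sigma = (\rho,u/x)$, $\sigma' = (\rho',u'/x)$, instantiate the second hypothesis, rewrite via freshness of $x$, and conclude by closure of $\cleq$ under weak head expansion, with the $\E$-component handled as in the $\eta$-validity lemma (which the paper skips and you spell out). One minor quibble: the retyping of $\elam x {U'\rho'}{t'\rho'}$ at $\efunT x {U\rho}{T\rho}$ uses the \emph{declarative} conversion rule with $\Delta \der U\rho = U'\rho'$ extracted from $\Gamma \valid U = U'$, not Lemma~\ref{lem:convlr}, which is the conversion property of the logical relation itself.
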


\begin{lem}[Validity of irrelevant application]
\[
  \nru{\reqerapp
     }{\Gamma \valid t = t' \of \erfunT x U T \qquad
       \resurrect\Gamma \valid u : U \qquad
       \resurrect{\Gamma} \valid u' : U
     }{\Gamma \valid t \erapp u = t' \erapp u' : \subst u x T
     }
\]
\end{lem}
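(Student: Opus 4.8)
\end{lem}

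\begin{proof}
The plan is to unfold the definition of term validity for the conclusion $\Gamma \valid t \erapp u = t' \erapp u' : \subst u x T$ and to check its three ingredients in turn: context well-formedness $\valid \Gamma$, type validity $\Gamma \valid \subst u x T$, and the relational clause over arbitrary related substitutions. The first is immediate, since $\valid \Gamma$ is already contained in the hypothesis $\Gamma \valid t = t' : \erfunT x U T$.

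For the relational clause I would fix an arbitrary $\Delta \der \sigma \cleq \sigma' : \Gamma$ and aim at $\Delta \der t\sigma \erapp u\sigma \cleq t'\sigma' \erapp u'\sigma' : \subst{u\sigma}{x}{T\sigma}$, using $(\subst u x T)\sigma \equiv \subst{u\sigma}{x}{T\sigma}$ (with $x$ chosen fresh). The hypothesis supplies $\Delta \der t\sigma \cleq t'\sigma' : \erfunT x {U\sigma} {T\sigma}$, the logical relation at an irrelevant function type, which I want to instantiate at the argument pair $(u\sigma, u'\sigma')$. This requires $\Delta \der u\sigma \cleq u'\sigma' \div U\sigma$. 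The crucial point is that $u$ and $u'$ are never assumed related to each other, so I must go through the $\div$-closure, which only demands reflexive relatedness on each side after resurrection. From $\resurrect\Gamma \valid u : U$ and the erased instance of Corollary~\ref{cor:relevance-polymorphic-instantiation} I obtain $\Delta \der u\sigma \cleq u\sigma' \div U\sigma$, hence in particular $\resurrect\Delta \der u\sigma \cleq u\sigma : U\sigma$; symmetrically, from $\resurrect\Gamma \valid u' : U$ I obtain $\resurrect\Delta \der u'\sigma' \cleq u'\sigma' : U\sigma$. The $\div$-introduction rule then combines these two reflexive facts into $\Delta \der u\sigma \cleq u'\sigma' \div U\sigma$. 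Instantiating the function relation at the reflexive weakening $\Delta \leq \Delta$ and this argument pair, and using closure of $\cleq$ under weak head expansion, delivers the goal.

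It remains to establish $\Gamma \valid \subst u x T$. Here I would exploit that the hypothesis also gives $\Gamma \valid \erfunT x U T$, so that function type injectivity (Lemma~\ref{lem:funinjval}) yields $\erext \Gamma x U \valid T : s_2$. To substitute the irrelevant argument, fix $\Delta \der \sigma \cleq \sigma' : \Gamma$; Corollary~\ref{cor:relevance-polymorphic-instantiation} gives $\Delta \der u\sigma \cleq u\sigma' \div U\sigma$, and by the extension rule for related substitutions $\Delta \der (\sigma, u\sigma/x) \cleq (\sigma', u\sigma'/x) : \erext \Gamma x U$. Instantiating $\erext \Gamma x U \valid T : s_2$ at this pair gives $\Delta \der (\subst u x T)\sigma \cleq (\subst u x T)\sigma' : s_2$, and since $\Delta,\sigma,\sigma'$ were arbitrary, $\Gamma \valid \subst u x T : s_2$, whence $\Gamma \valid \subst u x T$.

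The main obstacle is exactly the instantiation step of the relational clause: the hypotheses provide $u$ and $u'$ only as independently valid terms, not as a related pair, so an ordinary function relation could not be fed with them. Irrelevance is what saves the argument, through the $\div$-closure that decouples the two sides, together with the erased case of Corollary~\ref{cor:relevance-polymorphic-instantiation}, whose use of resurrection makes proof variables available and each side reflexively related. Everything else---composition of substitutions and weak head expansion---is routine bookkeeping.
\end{proof}
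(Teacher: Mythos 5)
Your proof is correct and takes essentially the same route as the paper: both arguments decouple $u$ and $u'$ via the $\div$-closure by establishing reflexive relatedness of each side under resurrection and then feed the resulting pair into the function-type relation supplied by the first hypothesis; you package the resurrection-and-cast step as an appeal to Corollary~\ref{cor:relevance-polymorphic-instantiation}, whereas the paper inlines Lemma~\ref{lem:resurrect} together with a cast justified by $\Gamma \valid U$ obtained from Lemma~\ref{lem:funinjval}. Your explicit verification of the remaining components of validity, namely $\valid \Gamma$ and $\Gamma \valid \subst u x T$, is sound and in fact more thorough than the paper's proof, which addresses only the relational clause.
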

\begin{proof}
Assume arbitrary $\Delta \der \rho \cleq \rho' : \Gamma$ and show 
$\Delta \der t\rho \erapp u\rho \cleq t'\rho' \erapp u'\rho' : 
 T (\rho, u\rho/x)$.  By the first hypothesis, it is sufficient to
 show
$\Delta \der u\rho \cleq u'\rho' \erof U\rho$, which means
$\resurrect\Delta \der u\rho \cleq u\rho : U\rho$ and
$\resurrect\Delta \der u'\rho' \cleq u'\rho' : U\rho$.
By Resurrection (Lemma~\ref{lem:resurrect}), 
$\resurrect\Delta \der \rho \cleq \rho : \resurrect\Gamma$, hence
$\resurrect\Delta \der u\rho \cleq u\rho : U\rho$ from the second
hypothesis.  Analogously, we obtain 
$\resurrect\Delta \der u'\rho' \cleq u'\rho' : U\rho'$ from the third
hypothesis which we can cast to $U\rho$ by virtue of $\Gamma \valid U$
which we get from $\Gamma \valid \erfunT x U T$ by
Lemma~\ref{lem:funinjval}.
\end{proof}

\begin{thm}[Fundamental theorem of logical relations] 
  \label{thm:compl} \bla
  \begin{enumerate}[\em(1)]
  \item If $\der \Gamma$ then $\valid \Gamma$.
  \item If $\Gamma \der t : T$ then $\Gamma \valid t : T$.
  \item If $\Gamma \der t = t' : T$ then
        $\Gamma \valid t = t' : T$.
  \end{enumerate}
\end{thm}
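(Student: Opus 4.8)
The plan is to prove all three statements simultaneously by induction on the mutually inductive derivations of $\der\Gamma$, $\Gamma\der t:T$ and $\Gamma\der t=t':T$, measured by derivation height. The whole preceding development has been arranged so that every \emph{genuinely hard} rule already possesses a matching \emph{validity} lemma; consequently the fundamental theorem mostly amounts to dispatching each rule either to such a lemma or to a short unfolding of the definition of validity. In every case for a rule one assumes an arbitrary pair $\Delta\der\sigma\cleq\sigma':\Gamma$ and must exhibit the corresponding instance of $\cleq$, while the side conditions $\valid\Gamma$ and (for non-sort $T$) $\Gamma\valid T$ are read off from the induction hypotheses.

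For the context judgement, $\valid\cempty$ holds by definition, and for $\der\eext\Gamma x T$ from $\der\Gamma$ and $\Gamma\der T$ we get $\valid\Gamma$ from (1) and $\Gamma\valid T$ from (2) applied to $\Gamma\der T:s$, which is exactly what the formation clause for $\valid\eext\Gamma x T$ demands. For the typing judgement: the \emph{variable} rule unfolds directly, since $\Delta\der\sigma\cleq\sigma':\Gamma$ supplies $\Delta\der\sigma(x)\cleq\sigma'(x):U\sigma$ for the relevant binding $(x\of U)\in\Gamma$, and $\Gamma\valid U$ is obtained from $\valid\Gamma$; the \emph{sort} rule uses the clause $\Delta\der s\Seq s:s'$, with $\der\Delta$ available from well-formedness; \emph{function-type formation} is assembled from the $\Seq$-clause for function types, taking the domain part from $\Gamma\valid U:s_1$, the codomain part by feeding the extended pair $(\sigma,u/x)\cleq(\sigma',u'/x)$ into $\eext\Gamma x U\valid T:s_2$, and the $\E$-component from the IITT formation and congruence rules; \emph{abstraction} is the reflexive instance ($U=U'$, $t=t'$) of Validity of function equality; \emph{application} $\Gamma\der t\eapp u:\subst u x T$ is handled by instantiating $\Gamma\valid t:\efunT x U T$ at $\sigma,\sigma'$ and feeding the argument relatedness $\Delta\der u\sigma\cleq u\sigma'\evarof U\sigma$ --- obtained from Corollary~\ref{cor:relevance-polymorphic-instantiation} applied to $\eresurrect\Gamma\valid u:U$ --- into the function clause, using $(\subst u x T)\sigma=\subst{u\sigma}x{T\sigma}$; and \emph{conversion} combines $\Gamma\valid t:T$ with $\Gamma\valid T=T'$ (from (3) on $\Gamma\der T=T'$) through type conversion (Lemma~\ref{lem:convlr}).

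For the equality judgement, the four hard computation and compatibility rules are discharged verbatim by the lemmata already proven: \reqbeta\ by Validity of $\beta$-reduction, \reqeta\ by Validity of $\eta$, \reqlam\ by Validity of function equality, and the irrelevant instance of \reqapp\ by Validity of irrelevant application; the relevant instance of \reqapp\ proceeds exactly like the application typing case but instantiates $\Gamma\valid t=t':\efunT x U T$ and $\Gamma\valid u=u':U$. The congruence rule \reqfun\ mirrors function-type formation with the two sides now distinct, using $\Gamma\valid U=U':s_1$ and $\eext\Gamma x U\valid T=T':s_2$. The remaining rules are purely structural: \reqrefl\ is the definitional unfolding of $\Gamma\valid t:T$ into $\Gamma\valid t=t:T$; \reqsym\ and \reqtrans\ are precisely the symmetry and transitivity supplied by the lemma ``Validity is a PER''; and \reqconv\ reuses Lemma~\ref{lem:convlr} as in the typing conversion case. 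The reflexive sort- and variable-equality rules, if taken primitively, coincide with their typing counterparts.

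The genuine obstacle has in fact already been overcome before this theorem, namely transitivity of the Kripke logical relation (Lemma~\ref{lem:translr}); the delicate points that remain here are bookkeeping. The main care is needed (i) in supplying the definitional-equality component $\E$ inside every $\Seq$- and $\cleq$-clause, which one produces by applying the relevant IITT congruence rule to the $\E$-parts already carried by the hypotheses, and (ii) in correctly threading the annotation $\evar$: relevant arguments use term validity directly, whereas irrelevant arguments must pass through Resurrection (Lemma~\ref{lem:resurrect}) and Corollary~\ref{cor:relevance-polymorphic-instantiation} so that $\resurrect\Gamma$ becomes available. Finally one must verify the side condition ``$\Gamma\valid T$ unless $T=s$'' of the term-validity clause in each case; this is immediate once the ambient type has itself been shown valid, typically via function-type injectivity (Lemma~\ref{lem:funinjval}) in the application cases.
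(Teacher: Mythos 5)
Your proposal is correct and takes essentially the same route as the paper: the paper proves exactly the validity lemmata you cite (validity of $\beta$, $\eta$, function equality, irrelevant application, plus Resurrection, Corollary~\ref{cor:relevance-polymorphic-instantiation}, and the PER lemmas) as the substance of the argument, and then discharges the theorem itself simply ``by induction on the derivation.'' Your write-up just makes explicit the case dispatch that the paper leaves implicit.
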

\begin{proof}
  By induction on the derivation.
\end{proof}
As a simple corollary we obtain syntactic validity, namely that
definitional equality implies well-typedness and well-typedness
implies well-formedness of the involved type.  This lemma could have
been proven purely syntactically, but the
syntactic proof requires a sequence of carefully arranged lemmata like
context conversion, substitution, functionality, and inversion on
types \cite{harperPfenning:equivalenceLF,abelCoquand:fundinf07}.
Our ``sledgehammer'' \emph{semantic} argument is built into the Kripke
logical relation, in the spirit of Goguen \cite{goguen:types00}.
\begin{cor}[Syntactic validity]\label{cor:synval} \bla
  \begin{enumerate}[\em(1)]
  \item If $\Gamma \der t : T$ then $\Gamma \der T$.  
  \item
    If $\Gamma \der t = t' : T$ then
    $\Gamma \der t : T$ and $\Gamma \der t' : T$.
  \end{enumerate}  
\end{cor}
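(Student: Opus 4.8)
The plan is to prove both statements purely semantically, by bouncing each syntactic judgement through the logical relation and back out again. The two tools are the Fundamental theorem (Theorem~\ref{thm:compl}), which lifts a derivable judgement to a \emph{valid} one, and Completeness of the \IITT\ rules (Theorem~\ref{thm:soundmodel}), which conversely reads off \emph{all four} syntactic consequences---$\Gamma \der t : T$, $\Gamma \der t' : T$, $\Gamma \der t = t' : T$, and $\Gamma \der T$---from a single validity statement $\Gamma \valid t = t' : T$. The one bookkeeping point to keep in mind is how $\Gamma \valid t : T$ unfolds: by the defining clauses of validity it is literally $\Gamma \valid t = t : T$, so a one-sided typing validity already supplies a legal input to the completeness theorem.

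For part (1), I would start from $\Gamma \der t : T$ and apply Theorem~\ref{thm:compl}(2) to obtain $\Gamma \valid t : T$, that is, $\Gamma \valid t = t : T$. Feeding this into Theorem~\ref{thm:soundmodel} yields, among its conclusions, exactly $\Gamma \der T$, which is what is wanted. The remaining outputs of the completeness theorem in this instance merely reproduce the hypothesis $\Gamma \der t : T$ and are discarded.

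For part (2), I would start from $\Gamma \der t = t' : T$ and apply Theorem~\ref{thm:compl}(3) to obtain $\Gamma \valid t = t' : T$. Theorem~\ref{thm:soundmodel} then immediately delivers both $\Gamma \der t : T$ and $\Gamma \der t' : T$, completing the argument.

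I do not expect any genuine obstacle here: all the difficulty has already been absorbed into the construction of the Kripke logical relation and into the two theorems invoked. The only place demanding the slightest care is the observation that $\Gamma \valid t : T$ is definitionally $\Gamma \valid t = t : T$, which is what lets the completeness theorem apply to the one-sided typing judgement in part~(1). The substantive work---interpreting substitution semantically, showing the relation is a PER (Lemmas~\ref{lem:symlr}--\ref{lem:translr}), and establishing function type injectivity (Lemma~\ref{lem:funinjval})---was carried out precisely so that this corollary collapses to a short, self-contained argument rather than the delicate syntactic induction (context conversion, substitution, functionality, inversion) that it would otherwise require.
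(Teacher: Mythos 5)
Your proposal is correct and follows exactly the paper's own route: apply the Fundamental theorem (Thm.~\ref{thm:compl}) to lift the syntactic judgement into validity, then read off the desired syntactic consequences via Thm.~\ref{thm:soundmodel}. The paper's proof is just a one-line version of the same argument (writing out only the equality case), and your observation that $\Gamma \valid t : T$ unfolds definitionally to $\Gamma \valid t = t : T$ is precisely what makes part (1) go through.
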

\begin{proof}
  By the fundamental theorem, $\Gamma \der t = t' : T$ implies $\Gamma
  \valid t = t' : T$, which by Thm.~\ref{thm:soundmodel} implies
  $\Gamma \der t, t' : T$ and $\Gamma \der T$.
\end{proof}

\clearpage

\section{Meta-theoretic Consequences of the Model Construction}
\label{sec:meta}


In this section, we explicate the results established by the Kripke model.

\subsection{Admissibility of Substitution}

\newcommand{\Gprim}{\Gamma'}

Goguen \cite{goguen:types00} observes that admissibility of substitution
for the syntactic judgements can be inherited from the Kripke logical
relation, which is closed under substitution by its very definition.

To show that the judgements of \IITT are closed
under substitution we introduce relations
$\Gamma \der \sigma : \Gprim$ for substitution typing and
$\Gamma \der \sigma = \sigma' : \Gprim$
for substitution equality which
are given inductively by the following rules:
\begin{gather*}
  \ru{\der \Gamma
    }{\Gamma \der \sigma : \cempty}
\qquad
  \ru{\Gamma \der \sigma : \Gprim
       \qquad
      \Gprim \der U \qquad
      \Gamma  \der \sigma(x) \evarof U \sigma
    }{\Gamma \der \sigma  : \eext \Gprim x U}
\\[2ex]
  \ru{\der \Gamma
    }{\Gamma \der \sigma = \sigma' : \cempty}
\qquad
  \ru{\Gamma \der \sigma = \sigma' : \Gprim
       \qquad
      \Gprim \der U \qquad
      \Gamma  \der \sigma(x) = \sigma'(x) \evarof U \sigma
    }{\Gamma \der \sigma = \sigma' : \eext \Gprim x U}
\end{gather*}
Substitution typing and equality are closed under weakening.

Semantically, substitutions are explained by environments.  We define
substitution validity as follows, again in rule form but not
inductively:
\begin{gather*}
  \ru{\Gamma \valid \sigma = \sigma : \Gamma'
    }{\Gamma \valid \sigma : \Gamma'}
\qquad
  \ru{\valid \Gamma \qquad \valid \Gamma' \\
      \forall \Delta \valid \rho \cleq \rho' : \Gamma.~\
         \Delta \valid \sigma\rho \cleq \sigma'\rho' : \Gamma'
    }{\Gamma \valid \sigma = \sigma' : \Gamma'}
\end{gather*}
\begin{lem}[Fundamental lemma for substitutions]
  \label{lem:fundsubst} \bla 
  \begin{enumerate}[\em(1)]
  \item \label{it:jsubst} 
    If\/ $\Gamma \der \sigma : \Gamma'$ then $\Gamma \valid \sigma
    : \Gamma'$. 
  \item \label{it:jsubsteq}
   If\/ $\Gamma \der \sigma = \sigma' : \Gamma'$ then $\Gamma
    \valid \sigma = \sigma' : \Gamma'$. 
  \end{enumerate}
\end{lem}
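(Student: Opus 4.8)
The plan is to prove both parts simultaneously by induction on the derivation of the syntactic substitution judgement, since the two inductive definitions of $\Gamma \der \sigma : \Gamma'$ and $\Gamma \der \sigma = \sigma' : \Gamma'$ have exactly the same shape. In fact part~(\ref{it:jsubst}) is essentially a special case of part~(\ref{it:jsubsteq}): a trivial induction using reflexivity (\reqrefl) at each component shows that $\Gamma \der \sigma : \Gamma'$ entails $\Gamma \der \sigma = \sigma : \Gamma'$, and $\Gamma \valid \sigma : \Gamma'$ is by definition $\Gamma \valid \sigma = \sigma : \Gamma'$; so I would concentrate on~(\ref{it:jsubsteq}). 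In the base case $\Gamma' = \cempty$ the derivation uses only $\der \Gamma$, whence $\valid \Gamma$ by the fundamental theorem (Thm.~\ref{thm:compl}) and $\valid \cempty$ holds trivially; the quantified obligation $\Delta \der \sigma\rho \cleq \sigma'\rho' : \cempty$ holds for every $\Delta \der \rho \cleq \rho' : \Gamma$ because the substitution logical relation at $\cempty$ always holds.

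For the inductive step the derivation ends in the extension rule with destination context $\eext {\Gamma'} x U$, with premises $\Gamma \der \sigma = \sigma' : \Gamma'$, $\Gamma' \der U$, and $\Gamma \der \sigma(x) = \sigma'(x) \evarof U\sigma$. The induction hypothesis gives $\Gamma \valid \sigma = \sigma' : \Gamma'$, which already supplies $\valid\Gamma$ and $\valid\Gamma'$; together with $\Gamma' \valid U$ (from $\Gamma' \der U$ by the fundamental theorem) this yields $\valid \eext {\Gamma'} x U$. It then remains to check, for an arbitrary $\Delta \der \rho \cleq \rho' : \Gamma$, that $\Delta \der \sigma\rho \cleq \sigma'\rho' : \eext {\Gamma'} x U$. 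Unfolding the extension rule of the substitution logical relation, this splits into (a)~$\Delta \der \sigma\rho \cleq \sigma'\rho' : \Gamma'$, which is immediate by instantiating the induction hypothesis at $\rho,\rho'$, and (b)~$\Delta \der \sigma(x)\rho \cleq \sigma'(x)\rho' \evarof (U\sigma)\rho$, recalling $(\sigma\rho)(x) = \sigma(x)\rho$ and $U(\sigma\rho) = (U\sigma)\rho$.

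Obligation~(b) is where the interesting work — and the main obstacle — lies, namely in the irrelevant case $\evar = \erased$. If $\evar = \noterased$, then $\Gamma \der \sigma(x) = \sigma'(x) : U\sigma$ validates to $\Gamma \valid \sigma(x) = \sigma'(x) : U\sigma$ by the fundamental theorem, and instantiating at $\rho,\rho'$ gives~(b) directly. If $\evar = \erased$, however, $\Gamma \der \sigma(x) = \sigma'(x) \div U\sigma$ only records that $\sigma(x)$ and $\sigma'(x)$ are \emph{individually} well-typed after resurrection, i.e.\ $\resurrect\Gamma \der \sigma(x) : U\sigma$ and $\resurrect\Gamma \der \sigma'(x) : U\sigma$; the two sides need not be related. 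Here I would appeal to Corollary~\ref{cor:relevance-polymorphic-instantiation}: from $\eresurrect\Gamma \valid \sigma(x) : U\sigma$ and $\eresurrect\Gamma \valid \sigma'(x) : U\sigma$ (fundamental theorem) one obtains $\Delta \der \sigma(x)\rho \cleq \sigma(x)\rho' \div (U\sigma)\rho$ and $\Delta \der \sigma'(x)\rho \cleq \sigma'(x)\rho' \div (U\sigma)\rho$. Unfolding the definition of the $\div$-relation, the former supplies $\resurrect\Delta \der \sigma(x)\rho \cleq \sigma(x)\rho : (U\sigma)\rho$ and the latter supplies $\resurrect\Delta \der \sigma'(x)\rho' \cleq \sigma'(x)\rho' : (U\sigma)\rho$, and these two facts are precisely $\Delta \der \sigma(x)\rho \cleq \sigma'(x)\rho' \div (U\sigma)\rho$, establishing~(b). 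The crux is thus that irrelevant bindings are validated not by propagating an equation but by separately instantiating each side through resurrection, exactly as packaged by Corollary~\ref{cor:relevance-polymorphic-instantiation}.
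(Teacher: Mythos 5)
Your proof is correct and takes essentially the same route as the paper's: induction on the derivation of the substitution equality judgement, with the empty case trivial and the extension case discharged by the induction hypothesis plus the fundamental theorem (Thm.~\ref{thm:compl}) applied to the remaining premises, and with part~(1) obtained from part~(2) via reflexivity. Your explicit treatment of the irrelevant binding through Corollary~\ref{cor:relevance-polymorphic-instantiation} (validating each side separately after resurrection and recombining the two self-relations into the $\div$-relation) is exactly the mechanism the paper compresses into the phrase ``an instance of the second induction hypothesis,'' so it is a faithful elaboration rather than a divergence.
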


\proof
  We demonstrate \ref{it:jsubsteq} by induction on $\Gamma \der \sigma
  = \sigma' : \Gamma'$.
\begin{enumerate}[\hbox to8 pt{\hfill}]       

\item\noindent{\hskip-12 pt\em Case\enspace}
\[
  \ru{\der \Gamma
    }{\Gamma \der \sigma = \sigma' : \cempty}  
\]
We have $\valid \Gamma$ by Thm.~\ref{thm:compl} and $\valid \cempty$
trivially.  Also, $\Delta \der \sigma \rho \cleq \sigma' \rho' : \cempty$
trivially for any $\Delta \der \rho \cleq \rho' : \Gamma$.

\item\noindent{\hskip-12 pt\em Case\enspace}
\[
  \ru{\Gamma \der \sigma = \sigma' : \Gprim
       \qquad
      \Gprim \der U \qquad
      \Gamma  \der \sigma(x) = \sigma'(x) \evarof U \sigma
    }{\Gamma \der \sigma = \sigma' : \eext \Gprim x U}
\]
We have $\valid \Gamma$ and $\valid \Gamma'$ by induction hypothesis
and $\Gamma' \valid U$ by Thm.~\ref{thm:compl}, thus, $\valid \eext
{\Gamma'} x U$.  Now assume arbitrary $\Delta \der \rho \cleq \rho' :
\Gamma$ and show $\Delta \der \sigma\rho \cleq \sigma'\rho' :
\eext{\Gamma'}xU$.   First, $\Delta \der \sigma \rho \cleq \sigma'
\rho' : \Gamma'$ follows by induction hypothesis.  The second subgoal
$\Delta \der (\sigma\rho)(x) \cleq (\sigma'\rho')(x) \evarof U\sigma\rho$
is just an instance of the second induction hypothesis.\qed\medskip
\end{enumerate}

\begin{thm}[Substitution and functionality] \label{thm:subst}
  \bla
  \begin{enumerate}[\em(1)]
   \item If $\Gamma \der \sigma : \Gprim$ and $\Gprim \der t : T$ then
     $\Gamma \der t \sigma : T \sigma$.
   \item
     If $\Gamma \der \sigma : \Gprim$.
     and $\Gprim \der t = t' : T$ then
     $\Gamma \der t \sigma = t' \sigma : T \sigma$.
   \item
     If $\Gamma \der \sigma = \sigma' : \Gprim$.
     and $\Gprim \der t : T$ then
     $\Gamma \der t \sigma = t \sigma' : T \sigma$.
   \item \label{it:substeq}
     If $\Gamma \der \sigma = \sigma' : \Gprim$.
     and $\Gprim \der t = t' : T$ then
     $\Gamma \der t \sigma = t' \sigma' : T \sigma$.
  \end{enumerate}
\end{thm}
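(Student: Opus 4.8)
The plan is to prove all four statements by the same \emph{semantic detour}, exploiting that the Kripke logical relation is closed under substitution by its very construction. Given a syntactic substitution judgement together with a syntactic term judgement, I would first lift both into the model, glue them there, and then return to the syntax via completeness of the \IITT\ rules (Theorem~\ref{thm:soundmodel}). Concretely, the hypothesis on $\sigma$ (resp.\ $\sigma = \sigma'$) is turned into substitution validity $\Gamma \valid \sigma : \Gprim$ (resp.\ $\Gamma \valid \sigma = \sigma' : \Gprim$) by the fundamental lemma for substitutions (Lemma~\ref{lem:fundsubst}), and the hypothesis on $t$ (resp.\ $t = t'$) is turned into term validity $\Gprim \valid t : T$ (resp.\ $\Gprim \valid t = t' : T$) by the fundamental theorem (Theorem~\ref{thm:compl}). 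The conclusion $\Gamma \der t\sigma = t'\sigma' : T\sigma$ is then read off from $\Gamma \valid t\sigma = t'\sigma' : T\sigma$ using Theorem~\ref{thm:soundmodel}.

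The heart of the argument is a single \emph{gluing step}: if $\Gamma \valid \sigma = \sigma' : \Gprim$ and $\Gprim \valid t = t' : T$, then $\Gamma \valid t\sigma = t'\sigma' : T\sigma$. To establish it I would unfold the definition of term validity: take an arbitrary $\Delta \der \rho \cleq \rho' : \Gamma$; substitution validity yields $\Delta \der \sigma\rho \cleq \sigma'\rho' : \Gprim$; feeding this into $\Gprim \valid t = t' : T$ gives $\Delta \der t(\sigma\rho) \cleq t'(\sigma'\rho') : T(\sigma\rho)$. A purely syntactic substitution-composition identity on raw expressions, $t(\sigma\rho) \equiv (t\sigma)\rho$ and likewise for $t'$ and $T$, rewrites this into $\Delta \der (t\sigma)\rho \cleq (t'\sigma')\rho' : (T\sigma)\rho$, which, since $\Delta,\rho,\rho'$ were arbitrary, is exactly the defining clause of $\Gamma \valid t\sigma = t'\sigma' : T\sigma$.

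With the gluing step in hand the four parts are instances. Part~(4) is the statement above verbatim; parts~(1)--(3) follow by specializing to $\sigma' = \sigma$ or $t' = t$ and using that $\Gamma \valid \sigma : \Gprim$ abbreviates $\Gamma \valid \sigma = \sigma : \Gprim$, and similarly for reflexive term validity. The side conditions of term validity must also be re-established: $\valid \Gamma$ comes directly from substitution validity, and the type premise $\Gamma \valid T\sigma$ (needed unless $T\sigma$ is a sort) is obtained by applying the very same gluing step to $\Gprim \valid T$, which by definition is a term validity at some sort $s$; since the type of that judgement is itself a sort, its own side condition is vacuous and the recursion bottoms out after one step.

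I expect the only real friction to be bookkeeping rather than mathematics: checking the raw substitution-composition identity $t(\sigma\rho) \equiv (t\sigma)\rho$ (a standard induction on expressions, not on derivations) and threading the type side condition through the gluing step without circularity. The whole point of routing everything through the model is precisely to sidestep the syntactic lemmata---context conversion, inversion on types, functionality---that a direct induction on derivations would demand (compare the remarks preceding Corollary~\ref{cor:synval}); here the substitution lemma holds semantically for free, so no such cascade of mutually dependent inductions is needed.
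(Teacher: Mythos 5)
Your proposal is correct and follows the paper's overall strategy---lift both hypotheses into the model (Lemma~\ref{lem:fundsubst} for the substitution judgement, Theorem~\ref{thm:compl} for the term judgement) and read the conclusion back into syntax---but the middle of the argument is decomposed differently. The paper never proves your gluing statement $\Gamma \valid t\sigma = t'\sigma' : T\sigma$: instead it instantiates $\Gamma \valid \sigma = \sigma' : \Gprim$ at the identity environment $\Gamma \der \sid \cleq \sid : \Gamma$ (Lemma~\ref{lem:idsubst}), obtaining the single concrete instance $\Gamma \der \sigma \cleq \sigma' : \Gprim$, and feeds that directly into the defining clause of $\Gprim \valid t = t' : T$; the resulting $\Gamma \der t\sigma \cleq t'\sigma' : T\sigma$ already contains the desired equality judgement through its $\E$ component. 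That route needs only the trivial identity $\sigma\sid \equiv \sigma$, whereas yours needs the raw-term composition law $t(\sigma\rho) \equiv (t\sigma)\rho$ (harmless, but an extra induction on expressions the paper never has to state) plus the side-condition bookkeeping you flag: note that gluing applied to $\Gprim \valid T$ yields $\Gamma \valid T\sigma = T\sigma' : s$, and turning this into $\Gamma \valid T\sigma$, i.e.\ $\Gamma \valid T\sigma = T\sigma : s$, requires symmetry and transitivity of validity or a reflexive substitution validity $\Gamma \valid \sigma = \sigma : \Gprim$, both of which rest on the PER machinery (Lemma~\ref{lem:substlrper} and the lemma that validity is a PER). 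What your version buys in exchange is a genuinely stronger, reusable intermediate fact---validity is closed under composition with valid substitutions---where the paper's shorter proof delivers only the syntactic theorem itself.
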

\begin{proof}
  We demonstrate \ref{it:substeq}, the other cases are just
  variations of the theme.
  First, from $\Gamma \der \sigma = \sigma' : \Gamma'$ we get
  $\Gamma \der \sigma \cleq \sigma' : \Gamma'$ 
  by the fundamental lemma for substitutions 
  (Lemma~\ref{lem:fundsubst}), using the identity environment
  $\Gamma \der \sid \cleq \sid : \Gamma$.
  Now, by the fundamental theorem on $\Gamma \der t = t' : T$ we
  obtain $\Gamma \der t \sigma \cleq t' \sigma' : T \sigma$, which
  entails our goal $\Gamma \der t \sigma = t' \sigma' : T \sigma$
  by Thm.~\ref{thm:soundmodel}.
\end{proof}

\subsection{Context conversion}

Context equality $\der \Gamma = \Gamma'$ is defined inductively by the
rules
\begin{gather*}
  \ru{}{\der \cempty = \cempty}
\qquad
  \ru{\der \Gamma = \Gamma' \qquad
      \Gamma \der U = U'
    }{\der \eext \Gamma x U = \eext{\Gamma'}x{U'}}
  .
\end{gather*}

All declarative judgements are closed under context conversion.  This
fact is easy to prove by induction over derivations, but we get it
as just a special case of substitution.
\begin{lem}[Identity substitution]
  If $\der \Gamma = \Gamma'$ then $\Gamma \der \sid = \sid : \Gamma'$.
\end{lem}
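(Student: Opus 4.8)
The plan is to prove the statement by induction on the derivation of $\der \Gamma = \Gamma'$. The subtlety is that the extension rule for substitution equality requires the target type to be well-formed in the \emph{target} context: in the step for $\der \eext \Gamma x U = \eext{\Gamma'}x{U'}$ I will be obliged to produce $\Gamma' \der U'$, whereas syntactic validity only hands me $\Gamma \der U'$. Transporting this judgement from $\Gamma$ to $\Gamma'$ needs the \emph{reverse} identity substitution $\Gamma' \der \sid : \Gamma$, which is the mirror image of what I am proving. I therefore intend to strengthen the statement to a two-sided one, proving simultaneously, by induction on $\der \Gamma = \Gamma'$, all four judgements $\Gamma \der \sid : \Gamma'$, $\Gamma' \der \sid : \Gamma$, $\Gamma \der \sid = \sid : \Gamma'$ and $\Gamma' \der \sid = \sid : \Gamma$; the last pair contains exactly what the lemma asks for, and the typing versions are what the substitution theorem will consume.

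The base case $\der \cempty = \cempty$ is immediate from the empty-substitution rules, using $\der \cempty$. For the inductive step the derivation ends in the context-equality rule with premises $\der \Gamma = \Gamma'$ and $\Gamma \der U = U'$. Concentrating on $\eext \Gamma x U \der \sid = \sid : \eext{\Gamma'}x{U'}$, I apply the extension rule for substitution equality, which leaves three obligations. The first, $\eext \Gamma x U \der \sid = \sid : \Gamma'$, follows from the induction hypothesis $\Gamma \der \sid = \sid : \Gamma'$ by weakening along $\eext \Gamma x U \leq \Gamma$, since substitution equality is closed under weakening. The third, $\eext \Gamma x U \der x = x \evarof U$ retyped to $U'$, is handled by the variable rule and \reqrefl, giving $\eext \Gamma x U \der x = x \evarof U$ (in the irrelevant case this is read through \emph{resurrection}, which turns the binding $x \erof U$ into $x \of U$, so that the variable rule applies in $\resurrect{(\eext \Gamma x U)}$), followed by conversion (\reqconv) using the weakening of $\Gamma \der U = U'$ to $\eext \Gamma x U$ via Lemma~\ref{lem:weak}.

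The only genuine obstacle is the middle obligation, $\Gamma' \der U'$. Here syntactic validity (Corollary~\ref{cor:synval}) applied to $\Gamma \der U = U'$ yields $\Gamma \der U' : s$ for some sort $s$; I then invoke the \emph{reverse} induction hypothesis $\Gamma' \der \sid : \Gamma$ together with the substitution theorem (Theorem~\ref{thm:subst}) to get $\Gamma' \der U'\sid : s$, that is $\Gamma' \der U'$. This is precisely where the two-sided strengthening pays off: without $\Gamma' \der \sid : \Gamma$ already supplied by the same induction, the crossing from $\Gamma$ to $\Gamma'$ would be circular. The symmetric judgement $\eext{\Gamma'}x{U'} \der \sid = \sid : \eext \Gamma x U$ is established in the same fashion, except that its well-formedness obligation $\Gamma \der U$ comes directly from syntactic validity, and its $x = x$ obligation at type $U$ uses $\Gamma \der U' = U$ (by \reqsym) transported to $\Gamma'$, again through Theorem~\ref{thm:subst}. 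Finally, the typing judgements $\Gamma \der \sid : \Gamma'$ and $\Gamma' \der \sid : \Gamma$ are read off from the very same derivations, every equality step above having a reflexive typing counterpart, which closes the simultaneous induction.
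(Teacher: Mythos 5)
Your proposal is correct, and its skeleton coincides with the paper's: induction on $\der \Gamma = \Gamma'$, with the extension case handled by (i) the induction hypothesis plus weakening of substitution equality, giving $\eext \Gamma x U \der \sid = \sid : \Gamma'$, and (ii) the variable rule plus conversion (\reqconv), giving $\eext \Gamma x U \der x = x \evarof U'$ (with resurrection in the irrelevant case). Where you genuinely depart from the paper is in the treatment of the middle premise $\Gamma' \der U'$ of the extension rule for $\Gamma \der \sigma = \sigma' : \eext \Gprim x U$: the paper's two-line proof derives only the first and third premises and concludes ``together\dots'', never mentioning $\Gamma' \der U'$ at all. You correctly observe that syntactic validity (Corollary~\ref{cor:synval}) only yields this judgement in the \emph{wrong} context $\Gamma$, and that transporting it to $\Gamma'$ by context conversion would be circular, since Theorem~\ref{thm:cxtconv} is itself derived from the present lemma. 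Your fix --- strengthening to a simultaneous induction that also produces the reverse substitutions $\Gamma' \der \sid : \Gamma$ and $\Gamma' \der \sid = \sid : \Gamma$, and then invoking Theorem~\ref{thm:subst} to carry $\Gamma \der U' : s$ across (noting $U'\sid \equiv U'$ and that sorts are closed) --- is a legitimate, self-contained way to discharge that premise, and everything you use (closure of the substitution judgements under weakening, Lemma~\ref{lem:weak}, Corollary~\ref{cor:synval}, Theorem~\ref{thm:subst}) is established before this lemma in the paper's development. In short: the paper buys brevity at the price of a silently undischarged premise; your fourfold induction buys a complete argument at the price of a bulkier statement, and is arguably the proof the paper should have written.
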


\proof
  By induction on $\der \Gamma = \Gamma'$.
\begin{enumerate}[\hbox to8 pt{\hfill}]
\item\noindent{\hskip-12 pt\em Case\enspace}
\[
  \ru{\der \Gamma = \Gamma' \qquad
      \Gamma \der U = U'
    }{\der \eext \Gamma x U = \eext{\Gamma'}x{U'}}
\]
By induction hypothesis and weakening, 
$\eext \Gamma x U \der \sid = \sid : \Gamma'$.
Also, $\eext \Gamma x U \der x = x \evarof U$ and by conversion
$\eext \Gamma x U \der x = x \evarof U'$.  Together,
$\eext \Gamma x U \der \sid = \sid : \eext {\Gamma'}x{U'}$.\qed\medskip  
\end{enumerate}

\begin{thm}[Context conversion]\label{thm:cxtconv}
  Let $\der \Gamma' = \Gamma$.
  \begin{enumerate}[\em(1)]
  \item If $\Gamma \der t : T$ then $\Gamma' \der t : T$.
  \item If $\Gamma \der t = t' : T$ then $\Gamma' \der t = t' : T$.
  \end{enumerate}
\end{thm}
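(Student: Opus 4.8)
The plan is to reduce context conversion entirely to the substitution machinery developed just before, so that the context-equality hypothesis is consumed in a single step. The guiding observation is that converting the ambient context is nothing but applying the identity substitution, read as a morphism from $\Gamma'$ into $\Gamma$.

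First I would apply the preceding ``Identity substitution'' lemma to the hypothesis $\der \Gamma' = \Gamma$, obtaining $\Gamma' \der \sid = \sid : \Gamma$. The orientation matters: the lemma turns $\der A = B$ into $A \der \sid = \sid : B$, so with $A \equiv \Gamma'$ and $B \equiv \Gamma$ the identity substitution is classified with target context $\Gamma'$ and source context $\Gamma$. This is the only point at which the context-equality assumption is used; everything else is generic substitution reasoning.

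For part~(2) I would then invoke Theorem~\ref{thm:subst}, clause~\ref{it:substeq}, with $\sigma \equiv \sigma' \equiv \sid$: from $\Gamma' \der \sid = \sid : \Gamma$ and $\Gamma \der t = t' : T$ it delivers $\Gamma' \der t\,\sid = t'\,\sid : T\,\sid$. Since $\sid$ acts as the identity on expressions we have $t\,\sid \equiv t$, $t'\,\sid \equiv t'$ and $T\,\sid \equiv T$, so this is exactly $\Gamma' \der t = t' : T$, as required. For part~(1) I would proceed analogously with clause~(3) of Theorem~\ref{thm:subst}: from $\Gamma' \der \sid = \sid : \Gamma$ and $\Gamma \der t : T$ we obtain $\Gamma' \der t\,\sid = t\,\sid : T\,\sid$, that is $\Gamma' \der t = t : T$, whence syntactic validity (Corollary~\ref{cor:synval}, clause~(2)) extracts the typing $\Gamma' \der t : T$. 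One could instead first promote the substitution equality $\Gamma' \der \sid = \sid : \Gamma$ to a substitution typing $\Gamma' \der \sid : \Gamma$ and apply clause~(1) directly, but routing through the equality judgement avoids having to separately re-derive typing from the substitution-equality judgement.

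I do not expect a serious obstacle here: once the substitution-and-functionality theorem is available, context conversion is essentially immediate. The only two places demanding care are tracking the orientation of the identity substitution (which of $\Gamma$, $\Gamma'$ is the source and which the target), and the bookkeeping that $\sid$ is a genuine identity on both the subject term and its classifying type, so that no spurious substitution residue remains in the conclusion.
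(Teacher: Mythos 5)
Your proposal is correct and is essentially the paper's own proof: the paper likewise derives $\Gamma' \der \sid = \sid : \Gamma$ from the Identity substitution lemma and then concludes by Theorem~\ref{thm:subst}. Your extra care for part~(1) — routing through clause~(3) and syntactic validity (Corollary~\ref{cor:synval}) to recover the typing judgement — is a legitimate filling-in of a detail the paper's one-line proof leaves implicit.
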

\begin{proof}
  By Thm.~\ref{thm:subst} with $\Gamma' \der \sid = \sid : \Gamma$.
\end{proof}
As a consequence, context equality is symmetric and transitive (we can
trade $\Gamma \der U = U'$ for $\Gamma' \der U = U'$).  Thus, context
conversion can be applied in the other direction as well.

\subsection{Inversion, injectivity, and type unicity}

A condition for the decidability of type checking is the ability to
invert typing derivations.  The proof requires substitution.
\begin{lem}[Inversion]\label{lem:inv} \bla
  \begin{enumerate}[\em(1)]
  \item If $\Gamma \der x : T$ then $(x \of U) \in \Gamma$ for some
    $U$ with $\Gamma \der U = T$.
  \item If $\Gamma \der \elam x U t : T$ then $\eext \Gamma x U \der t
    : T'$ for some $T'$ with $\Gamma \der \efunT x U {T'} = T$.
  \item If $\Gamma \der t \eapp u : T$ then $\Gamma \der t : \efunT x
    U {T'}$ and $\Gamma \der u \evarof U$ for some $U,T'$ with $\Gamma
    \der \subst u x {T'} = T$.
  \item If $\Gamma \der s : T$ then there is $(s,s') \in \Axiom$ such
    that $\Gamma \der s' = T$.
  \item If $\Gamma \der \efunTs x U {s_1,s_2} {T'} : T$ then $\Gamma \der U :
    s_1$ and $\eext \Gamma x U \der T' : s_2$, and for some $s_3$ we have
    $\Gamma \der s_3 = T$ and $(s_1,s_2,s_3) \in \Rule$.
  \end{enumerate}
\end{lem}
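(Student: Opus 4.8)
The plan is to prove each of the five clauses by induction on the typing derivation it hypothesises, exploiting the fact that the conversion rule \rconv\ is the only rule that is not syntax-directed and that it leaves the subject term untouched. Hence a derivation of, say, $\Gamma \der \elam x U t : T$ must end either in the abstraction rule or in \rconv, and likewise each of the five head shapes (variable, $\lambda$, application, sort, function type) admits exactly two possible last rules: its own introduction/formation/variable rule and \rconv. The five statements are otherwise independent inductions sharing this common shape, so I would present them in an order (or simultaneously) so that clause (3) may appeal to clause (5).

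In each base case, where the last rule is the syntax-directed one, the premises of that rule are \emph{verbatim} the premises demanded by the lemma, and the classifying type appearing in the conclusion is the literal type the rule produces --- $\efunT x U {T'}$ for abstraction, $\subst u x {T'}$ for application, $s_3$ for function-type formation, and so on. The only remaining obligation is the type equation, which in the base case is reflexive, $\Gamma \der C = C$ for the relevant classifier $C$; by \reqrefl\ this reduces to the well-formedness $\Gamma \der C$. I would discharge this uniformly from syntactic validity (Corollary~\ref{cor:synval}) applied to the conclusion. For clause (3) this is exactly the point where the development leans on substitution: the well-formedness of $\subst u x {T'}$ is obtained either directly from Corollary~\ref{cor:synval} or, more explicitly, by inverting the function type with clause (5) to get $\eext \Gamma x U \der T' : s_2$ and then substituting $u$ for $x$ via the Substitution theorem (Theorem~\ref{thm:subst}).

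The conversion case is where the real work lies. Here the derivation ends with \rconv\ from $\Gamma \der t : S$ and $\Gamma \der S = T$, with the subject $t$ unchanged. I would apply the induction hypothesis to $\Gamma \der t : S$, obtaining all the structural premises of the clause together with an equation $\Gamma \der C = S$ between the canonical classifier $C$ and $S$; composing this with the conversion premise $\Gamma \der S = T$ yields the desired $\Gamma \der C = T$, and the structural premises are returned unchanged.

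The main obstacle is precisely this transitive composition. The equation delivered by the induction hypothesis lives at some sort $s$ and the conversion premise at some $s'$, and \reqtrans\ requires them to coincide; but sort unicity (Lemma~\ref{lem:tycondist}) is not yet available at this stage. Rather than fight this syntactically, I would route the composition through the Kripke model: the fundamental theorem (Theorem~\ref{thm:compl}) converts both equations into validities $\Gamma \valid C = S$ and $\Gamma \valid S = T$, which compose because validity is a PER --- the intermediate type $S$ pins down its semantic universe, so the two sorts necessarily agree --- and Theorem~\ref{thm:soundmodel} reflects the composite back to a derivable $\Gamma \der C = T$. With this semantic handling of transitivity in place, the rest of the argument is a routine reading-off of the premises of each syntax-directed rule.
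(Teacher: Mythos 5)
Your skeleton---induction on the typing derivation, with the syntax-directed rule and \rconv\ as the only two possible last rules, base cases discharged by reading off the rule's premises and obtaining the reflexive equation from syntactic validity (Corollary~\ref{cor:synval})---is exactly the proof the paper intends (its entire proof text is ``Each by induction on the typing derivation''), and you are right that all the content sits in the conversion case, i.e.\ in composing the inherited equation $\Gamma \der C = S$ with the conversion premise $\Gamma \der S = T$. The gap is in your repair of that step. The Validity-is-a-PER lemma is stated for the relation $\Gamma \valid \_ = \_ : T$ at a \emph{fixed} type $T$, so it does not license composing $\Gamma \valid C = S : s_1$ with $\Gamma \valid S = T : s_2$ unless one already knows $s_1 \equiv s_2$; and your justification that ``the intermediate type $S$ pins down its semantic universe'' is false precisely in the dangerous case. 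If $\whnf{S}$ is neutral, the model carries no sort information whatsoever: every neutral inhabits every semantic universe, $(N,\Wne) \in \Univ{i}$ for all $i$, and the relations $\Seq$ and $\cleq$ at a sort are, for neutral types, \emph{defined} to be nothing more than the syntactic judgement $\E$. So the detour through Theorem~\ref{thm:compl} and Theorem~\ref{thm:soundmodel} lands you back at exactly the syntactic composition problem you were trying to escape.

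Your idea does work where the classifier is rigid. In clauses (2), (4) and (5) the left-hand side $C$ is a sort or a sort-annotated function type; since $\Seq$ relates only whnfs of equal shape, $\whnf{S}$ is then the same sort, forcing $(\whnf{S},s_1) \in \Axiom$ and $(\whnf{S},s_2) \in \Axiom$, or a function type with the \emph{same} annotations $(s_1',s_2')$, forcing $(s_1',s_2',s_1) \in \Rule$ and $(s_1',s_2',s_2) \in \Rule$; functionality of $\Axiom$ and $\Rule$ then gives $s_1 \equiv s_2$, and \reqtrans\ applies---this is exactly what the sort annotations on $\Pi$-types were introduced for. The gap is thus confined to clauses (1) and (3), where $C$ is $\Gamma(x)$ or $\subst u x {T'}$ and $\whnf{S}$ may be neutral. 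What is needed there is sort unicity of typing for neutral types, and at this stage it is not available: Lemma~\ref{lem:tycondist} only discriminates distinct sorts occurring as equal \emph{terms}, and Theorem~\ref{thm:tyuniq} is proved \emph{from} inversion. Closing the hole needs a further device, e.g.\ proving the lemma first with the single equation replaced by a zigzag of sorted equations (which composes trivially), establishing sort unicity for neutral types by structural induction on the neutral using clause (5) and Lemma~\ref{lem:funinjval} link by link, and only then collapsing the zigzags. Since the paper is silent on all of this, you have in fact located a genuine subtlety its one-line proof glosses over; but your patch, as it stands, does not close it.
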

\begin{proof}
  Each by induction on the typing derivation.
\end{proof}
\begin{rem}
  The need for inversion during type checking is the only good reason
  to have separate typing rules and not simply define typing
  $\Gamma \der t : T$ as the diagonal 
  $\Gamma \der t = t : T$ of equality. 
  While by a logical relation argument
  we will obtain a suitable inversion result for 
  $\Gamma \der \efunT x U T = \efunT x U T$---the famous 
  \emph{function type injectivity} (Theorem~\ref{thm:funinj})---
  it seems hard to get something similar for application $t\,u$.
\end{rem}

Injectivity for function types \wrt\ typed equality
is known to be tricky.  It is connected
to subject reduction and required for many meta-theoretic results.
We harvest it from our Kripke model.
\begin{thm}[Function type injectivity]\label{thm:funinj}
  If $\Gamma \der \efunTs x U {s_1,s_2} T = \efunTs x{U'}{s_1',s_2'}{T'} : s_3$
  then $s_1 = s_1'$ and $s_2 = s_2'$ and $\Gamma \der U = U' : s_1$
  and  $\eext \Gamma x U \der T = T' : s_2$.
\end{thm}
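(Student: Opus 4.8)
The plan is to transport the given syntactic equality of function types into the model, apply the semantic injectivity lemma already established there, and then read the semantic consequences back into syntax. First I would apply the fundamental theorem (Theorem~\ref{thm:compl}) to the hypothesis $\Gamma \der \efunTs x U {s_1,s_2} T = \efunTs x{U'}{s_1',s_2'}{T'} : s_3$, obtaining the corresponding validity $\Gamma \valid \efunTs x U {s_1,s_2} T = \efunTs x{U'}{s_1',s_2'}{T'} : s_3$. Unfolding the definition of validity at a sort (the rule deriving $\Gamma \valid A = A'$ from $\Gamma \valid A = A' : s$) yields the type-level validity $\Gamma \valid \efunTs x U {s_1,s_2} T = \efunTs x{U'}{s_1',s_2'}{T'}$, which is precisely the shape required by the semantic lemma.

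Next I would invoke Lemma~\ref{lem:funinjval}, which delivers in one stroke the sort equalities $s_1 = s_1'$ and $s_2 = s_2'$, together with the validity judgements $\Gamma \valid U = U' : s_1$ and $\eext \Gamma x {U'} \valid T = T' : s_2$. The remaining task is to read these back as derivable equalities. Applying Theorem~\ref{thm:soundmodel} (completeness of the \IITT\ rules) to $\Gamma \valid U = U' : s_1$ gives $\Gamma \der U = U' : s_1$, and applying it to $\eext \Gamma x {U'} \valid T = T' : s_2$ gives $\eext \Gamma x {U'} \der T = T' : s_2$.

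The only genuine wrinkle is a mismatch in the domain binding: the semantic lemma phrases the codomain equation in the context $\eext \Gamma x {U'}$, whereas the theorem asks for it in $\eext \Gamma x U$. I would close this gap with context conversion (Theorem~\ref{thm:cxtconv}). Since $\Gamma \der U = U' : s_1$ is now in hand, reflexivity of context equality for the well-formed $\Gamma$ (a routine induction using reflexivity of type equality) together with the extension rule gives $\der \eext \Gamma x U = \eext \Gamma x {U'}$; context conversion then turns $\eext \Gamma x {U'} \der T = T' : s_2$ into the desired $\eext \Gamma x U \der T = T' : s_2$. Collecting the four outputs completes the proof.

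I do not expect a serious obstacle here: the substantial content---the semantic analysis of the Kripke relation at a function type, which crucially relies on the explicit sort annotations to keep transitivity and hence the PER structure intact---has already been discharged in Lemma~\ref{lem:funinjval}. The present theorem is essentially the act of harvesting that semantic result through Theorem~\ref{thm:compl} and Theorem~\ref{thm:soundmodel}, with context conversion patching the binding-annotation discrepancy. The main thing to watch is simply that the well-formedness side-conditions needed to invoke Theorem~\ref{thm:soundmodel} and Theorem~\ref{thm:cxtconv} are all supplied by the validity judgements extracted in the second step.
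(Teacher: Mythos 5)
Your proof is correct and follows the paper's own primary route: the paper's proof literally begins ``This follows from Lemma~\ref{lem:funinjval}'', and your elaboration---fundamental theorem to enter the model, Lemma~\ref{lem:funinjval} for the semantic content, Theorem~\ref{thm:soundmodel} to read the results back as derivable equalities, and context conversion (Theorem~\ref{thm:cxtconv}) to repair the $\eext \Gamma x {U'}$ versus $\eext \Gamma x U$ mismatch---is exactly the harvesting that one-liner leaves implicit. (The paper also sketches a direct alternative that avoids the context-conversion patch by instantiating the Kripke function-type clause at the fresh variable $x$ itself, via Lemma~\ref{lem:intolr}, which lands immediately in the context $\eext \Gamma x U$.)
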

\begin{proof}
  This follows from Lemma~\ref{lem:funinjval}.  Or we can prove it
  directly as follows:
  Since $\Gamma \der \sid \cleq \sid : \Gamma$
  we have by the fundamental theorem
  $\Gamma \der \efunTs x U {s_1,s_2} T \cleq \efunTs x{U'}{s_1,s_2}{T'} : s_3$
  which by inversion yields first $s_1 = s_1'$ and $s_2 = s_2'$ and
  $\Gamma \der U \cleq U' : s_1$ and
  $\Gamma \der U = U' : s_1$. Since
  $\eext \Gamma x U \der x \cleq x \evarof U$, we also obtain
  $\eext \Gamma x U \der T \cleq T' : s_2$ and conclude
  $\eext \Gamma x U \der T = T' : s_2$.
\end{proof}

From the inversion lemma we can prove uniqueness of types, since we
are dealing with a functional PTS, and we have function type injectivity.
\begin{thm}[Type unicity]\label{thm:tyuniq}
  If $\Gamma \der t : T$ and $\Gamma \der t : T'$ then 
  $\Gamma \der T = T'$.
\end{thm}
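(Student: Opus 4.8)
The plan is to prove the statement by induction on the derivation of $\Gamma \der t : T$, analysing the second typing $\Gamma \der t : T'$ with the inversion lemma (Lemma~\ref{lem:inv}) in each case. The three ingredients are exactly those advertised: the functional PTS property pins down output sorts, function type injectivity (Theorem~\ref{thm:funinj}) dismantles the application case, and admissibility of substitution (Theorem~\ref{thm:subst}) instantiates codomains. The conversion rule is the reason I induct on derivations rather than on terms: if $\Gamma \der t : T$ is concluded from $\Gamma \der t : T_0$ and $\Gamma \der T_0 = T$, the induction hypothesis applied to the subderivation gives $\Gamma \der T_0 = T'$, and combining with $\Gamma \der T_0 = T$ yields $\Gamma \der T = T'$ by symmetry and transitivity.

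For a variable $x$, inversion of $\Gamma \der x : T'$ gives $(x\of U')\in\Gamma$ with $\Gamma \der U' = T'$; since bindings in a context are unique, $U'$ is the declared type, so $\Gamma \der T = T'$. For a sort $s$ and for a function type $\efunTs x U {s_1,s_2} {T_0}$, inversion yields an axiom $(s,s'')\in\Axiom$, respectively a rule $(s_1,s_2,s_3')\in\Rule$; because $\IITT$ is full and functional the output sort is unique, so $s''$ equals the first reconstructed sort and $s_3'$ equals $s_3$, and in the $\Pi$-formation case the annotations $s_1,s_2$ are read off the term and hence literally coincide in both derivations. Thus $\Gamma \der T = T'$ follows immediately in each of these cases.

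The application case is where Theorem~\ref{thm:funinj} is used. If $\Gamma \der t_0 \eapp u : T$ comes from $\Gamma \der t_0 : \efunTs x U {s_1,s_2} {T_0}$ together with $\Gamma \der u \evarof U$, then inversion of $\Gamma \der t_0 \eapp u : T'$ produces $\Gamma \der t_0 : \efunTs x {U'} {s_1',s_2'} {T_0'}$, a matching argument hypothesis, and $\Gamma \der \subst u x {T_0'} = T'$. The induction hypothesis on the subderivation $\Gamma \der t_0 : \efunTs x U {s_1,s_2} {T_0}$ equates the two inferred function types, whence Theorem~\ref{thm:funinj} gives $\eext \Gamma x U \der T_0 = T_0' : s_2$. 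Substituting $u$ for $x$ (Theorem~\ref{thm:subst}) then yields $\Gamma \der \subst u x {T_0} = \subst u x {T_0'}$, and chaining with the inverted equation delivers $\Gamma \der T = T'$.

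The abstraction case is the one I expect to be the crux, and the difficulty is entirely about sort-annotation bookkeeping rather than anything conceptual. Inversion of the two typings reconstructs two function types $\efunTs x U {s,s'} {T_0}$ and $\efunTs x U {s_a,s_b} {T_0'}$ for the same $\lambda$, sharing the domain $U$ (which occurs in the term) and with codomains that the induction hypothesis on the body makes definitionally equal. To assemble the equality of these $\Pi$-types through the congruence rule \reqfun{} I must know that the annotations agree, $s=s_a$ and $s'=s_b$, and that the codomain equality holds precisely at the annotating sort $s'$; the latter is recovered by \reqconv{} once the former is settled. Matching the annotations is exactly a uniqueness-of-sorts fact, which is not yet recorded syntactically at this point but is available through the Kripke model, since the universe hierarchy $\Univ i$ stratifies types by level and a single well-typed type inhabits a unique level (this is the same stratification information that Theorem~\ref{thm:funinj} extracts). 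Once the annotations are reconciled, \reqfun{} produces $\Gamma \der \efunTs x U {s,s'} {T_0} = \efunTs x U {s_a,s_b} {T_0'}$, and transitivity with the two inverted equations gives $\Gamma \der T = T'$.
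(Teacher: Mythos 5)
Your skeleton coincides with the paper's own proof, which reads, in its entirety, ``By induction on $t$, using inversion'', and is advertised with exactly the three ingredients you name (functionality of the PTS, Theorem~\ref{thm:funinj}, Theorem~\ref{thm:subst}); your variable, sort, $\Pi$-formation, application and conversion cases are all correct, and inducting on the derivation rather than on the term $t$ is an inessential variation. You have also correctly isolated the one step the paper glosses over: in the abstraction case the two inversions produce $\efunTs x U {s,s'}{T_0}$ and $\efunTs x U {s_a,s_b}{T_0'}$, and the congruence rule \reqfun{} forces identical sort annotations on both sides, so the annotations must be reconciled before it can be applied.

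It is precisely your reconciliation step that contains a genuine gap. The claim that ``a single well-typed type inhabits a unique level'' in the Kripke model is false for neutral types: $(N,\Wne)\in\Univ i$ for \emph{every} $i$, and the clause defining $\Delta \der N \Seq N' : s$ demands only $\Delta \der N \E N' : s$, i.e.\ well-typedness and definitional equality at $s$, which ties $N$ to no particular sort. So the model does not hand you uniqueness of sorts; nor is this ``the same stratification information that Theorem~\ref{thm:funinj} extracts'', since that theorem presupposes an already-derived equality between the two $\Pi$-types, which is exactly what you are trying to construct. The fact cannot come from your induction either: the inferred codomain $T_0'$ of the body is neither a subterm of $\elam x U t$ nor the subject of any subderivation of the first typing, so no induction hypothesis reaches a statement about the sorts of $T_0'$. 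What closes the case is a standalone sort-unicity lemma, proved before (or interleaved with) type unicity: first establish type unicity for \emph{neutral} terms by structural induction on the neutral (inversion, Theorem~\ref{thm:funinj}, Theorem~\ref{thm:subst}); then, for an arbitrary well-typed type with two sorts, reduce to whnf using Theorem~\ref{thm:norm} and syntactic validity, and do a case split: for a sort or an annotated function type, inversion (Lemma~\ref{lem:inv}) plus functionality of $\Axiom$ and $\Rule$ pins the ambient sort down up to definitional equality, and sort discrimination (Lemma~\ref{lem:tycondist}, whose proof from the fundamental theorem is independent of type unicity) upgrades this to syntactic identity; a neutral whnf is handled by the preceding step; a $\lambda$-whnf is impossible since its type would be a $\Pi$-type equal to a sort, contradicting discrimination. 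With that lemma you obtain $s=s_a$, $s'=s_b$, convert the codomain equality to sort $s'$ by \reqconv{}, and your abstraction case goes through. In fairness, the paper's one-line proof is silent on this very point as well.
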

\begin{proof}
  By induction on $t$, using inversion.
\end{proof}

\subsection{\uppercase{N}ormalization and Subject Reduction}

An immediate consequence of the model construction is that each term
has a weak head normal form and that typing and equality is preserved
by weak head normalization.
\begin{thm}[Normalization and subject reduction]\label{thm:norm}
  If $\Gamma \der t : T$ then $t \evalsto a$ and $\Gamma \der t = a : T$.
\end{thm}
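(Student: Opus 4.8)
The plan is to read the conclusion straight off the Kripke logical relation via the fundamental theorem, in exactly the manner already used for syntactic validity (Corollary~\ref{cor:synval}) and for completeness of the rules (Theorem~\ref{thm:soundmodel}). The genuinely hard facts—function type injectivity, the compatibility of weak head reduction with typed equality, and all substitution properties—have already been absorbed into the model, so the argument here amounts to unfolding the closure relation $\cleq$ at the identity substitution.

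First I would apply the fundamental theorem (Theorem~\ref{thm:compl}, part~(2)) to $\Gamma \der t : T$ to obtain $\Gamma \valid t : T$, that is $\Gamma \valid t = t : T$. In particular this yields $\valid \Gamma$, so by Context satisfiability (Lemma~\ref{lem:idsubst}) we have the identity environment $\Gamma \der \sid \cleq \sid : \Gamma$. Instantiating the universally quantified clause of $\Gamma \valid t = t : T$ with $\Delta := \Gamma$ and $\sigma := \sigma' := \sid$ then gives $\Gamma \der t\sid \cleq t\sid : T\sid$, i.e.\ $\Gamma \der t \cleq t : T$.

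Next I would unfold the defining clause of the closure relation $\cleq$ at type $T$. By definition this conjunction contains precisely the required data: a weak head normal form $a$ with $t \evalsto a$ (this is normalization), together with $\Gamma \der t = a : A$, where $T \evalsto A$ and $\Gamma \der T = A$. It then remains only to transport the equation from the evaluated type $A$ back to $T$: from $\Gamma \der T = A$ I obtain $\Gamma \der A = T$ by symmetry (\reqsym), and conclude $\Gamma \der t = a : T$ by the conversion rule (\reqconv). This is the subject reduction statement.

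I do not expect any real obstacle in this argument; the one point requiring care is that the closure relation reports the equation at the whnf $A$ of the type rather than at $T$ itself, so the concluding type conversion must not be omitted. All the substantive difficulty has been discharged earlier, in establishing the fundamental theorem and in proving that $\cleq$ is well behaved—closure under weakening and conversion and its PER properties (Lemmas~\ref{lem:weaklr}--\ref{lem:translr}); here we merely harvest the consequences.
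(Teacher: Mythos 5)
Your proof is correct and is essentially the paper's own argument: the paper likewise obtains $\Gamma \der t \cleq t : T$ from the fundamental theorem (instantiated at the identity substitution of Lemma~\ref{lem:idsubst}) and reads $t \evalsto a$ and $\Gamma \der t = a : T$ directly off the defining clause of $\cleq$. The only difference is that you spell out the final conversion step from the whnf type $A$ back to $T$ (via $\Gamma \der T = A$ and \reqconv), which the paper's two-line proof leaves implicit.
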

\begin{proof}
  By the fundamental theorem,
  $\Gamma \der t \cleq t : T$
  which by definition contains a derivation of
  $\Gamma \der t = \whnf t : T$.
\end{proof}

\subsection{Consistency}

Importantly, not every type is inhabited in \IITT, thus, it can be
used as a logic.  A prerequisite is that types can be distinguished,
which follows immediately from the construction of the logical relation.
\begin{lem}[Type constructor discrimination]\label{lem:tycondist}
  \bla
  Neutral types, sorts and function types are mutually unequal.
\begin{enumerate}[\em(1)]
\item $\Gamma \der N \not= s$.
\item $\Gamma \der N \not= \efunT x U T$.
\item $\Gamma \der s = s'$ implies $s \equiv s'$.
\item $\Gamma \der s \not= \efunT x U T$.
\end{enumerate}
\end{lem}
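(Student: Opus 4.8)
The plan is to push each claimed inequality into the Kripke logical relation and then read it off from the shape-directed definition of $\Seq$ at a sort. The common first move is the same in all four cases: suppose toward a contradiction that the two types in question are definitionally equal, say $\Gamma \der A = B$, which by the derived-judgement convention unfolds to $\Gamma \der A = B : s$ for some sort $s$. By the Fundamental Theorem (Theorem~\ref{thm:compl}) this yields $\Gamma \valid A = B : s$, and instantiating validity with the identity environment $\Gamma \der \sid \cleq \sid : \Gamma$ supplied by Lemma~\ref{lem:idsubst} (using $A\sid \equiv A$ and $B\sid \equiv B$) gives $\Gamma \der A \cleq B : s$.

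Next I would unfold the closure $\cleq$ at the sort $s$. Its defining rule first weak-head-evaluates both sides, $A \evalsto A_0$ and $B \evalsto B_0$, and then demands $\Gamma \der A_0 \Seq B_0 : s$. In each case of the lemma the two sides are already weak head normal forms: a neutral type $N$, a sort $s'$, and a function type $\efunTs x U {s_1,s_2} T$ are all whnfs, so each evaluates to itself by the reflexivity rule $a \evalsto a$. Hence the problem reduces to showing that the whnf-level relation $\Seq$ at a sort never relates two whnfs of different head shape, and relates two sorts only when they are syntactically identical.

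The decisive observation is purely by inspection of the three clauses defining $\Seq$ at a sort: the neutral clause concludes $\Gamma \der N \Seq N' : s$ from $\Gamma \der N \E N' : s$, the sort clause concludes only $\Gamma \der s' \Seq s' : s''$ where the two sorts literally coincide, and the function clause concludes only $\Gamma \der \efunTs x U {s_1,s_2} T \Seq \efunTs x {U'}{s_1,s_2}{T'} : s_3$. No clause has a conclusion mixing a neutral with a sort, a neutral with a function type, or a sort with a function type. Consequently $\Gamma \der N \Seq s' : s$, $\Gamma \der N \Seq \efunT x U T : s$, and $\Gamma \der s' \Seq \efunT x U T : s$ are all underivable, which yields parts (1), (2), and (4); and the only way to derive $\Gamma \der s \Seq s' : s''$ is through the sort clause, which forces $s \equiv s'$, yielding part (3).

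I expect no genuine obstacle here: once syntactic validity and the Fundamental Theorem are available, discrimination is essentially immediate from the definitional case analysis of $\Seq$. The only points demanding care are the bookkeeping of the closure step—checking that each of $N$, $s'$, and $\efunTs x U {s_1,s_2} T$ is genuinely a whnf, so that weak head evaluation acts as the identity—and confirming that the $\E$-component carried along in the closure cannot secretly reintroduce a cross-shape equality. The latter is harmless, since that component merely records the ambient definitional equality we are already analysing rather than imposing a further constraint on head shapes.
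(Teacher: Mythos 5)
Your proposal is correct and follows essentially the same route as the paper's own proof: apply the fundamental theorem with the identity substitution (Lemma~\ref{lem:idsubst}) to turn the assumed definitional equality into $\cleq$, collapse the closure since both sides are already weak head normal forms, and read the contradiction off the shape-disjoint clauses of $\Seq$. The paper states this in one line; your version merely spells out the same bookkeeping (whnf self-evaluation, inspection of the three clauses) explicitly.
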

\begin{proof}
  By the fundamental theorem applied to the identity substitution.
  For instance, assuming $\Gamma \der N = s : s'$ we get 
  $\Gamma \der N \Seq s : s'$ but this is a contradiction to the
  definition of $\Seq$.
\end{proof}
From normalization and type constructor discrimination we can show that
not every type is inhabited.
\begin{thm}[Consistency] \label{thm:consistency}
  $X \of \Set[0] \not\der t : X$.
\end{thm}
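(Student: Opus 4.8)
The plan is to argue by contradiction, turning an alleged inhabitant of $X$ into a neutral normal form and then showing that no such neutral term can exist in the context $X \of \Set[0]$. Suppose $X \of \Set[0] \der t : X$ and write $\Gamma = X \of \Set[0]$. By the fundamental theorem (Theorem~\ref{thm:compl}) we get $\Gamma \valid t : X$, and instantiating with the identity substitution $\Gamma \der \sid \cleq \sid : \Gamma$ (Lemma~\ref{lem:idsubst}) yields $\Gamma \der t \cleq t : X$. Unfolding the closure rule for $\cleq$, the type $X$ is already a whnf, and $t$ weak-head evaluates to some $a$ with $\Gamma \der a \Seq a : X$; since the logical relation contains only well-typed terms, also $\Gamma \der a : X$. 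This step already packages normalization and subject reduction (Theorem~\ref{thm:norm}) into the relation, so no separate appeal is needed.

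Next I would read off the shape of $a$. As $X$ is a type variable, \ie\ a neutral type $N$, the only clause of $\Seq$ applicable at a neutral type is the one forcing its arguments to be neutral; hence $a$ must be a neutral whnf $n$ with $\Gamma \der n \E n : X$, and in particular $\Gamma \der n : X$. A neutral whnf is either $n \equiv x$ or $n \equiv n' \eapp u$, so it is a context variable applied to a (possibly empty) spine of arguments. The only variable available in $\Gamma$ is $X$, of type $\Set[0]$. By inversion on application typing (Lemma~\ref{lem:inv}) any application requires its head to have a function type $\efunT y V W$; but the head can only be $X : \Set[0]$, and a sort is not a function type. Thus the spine must be empty and $n \equiv X$.

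Finally, $n \equiv X$ is then typed both as $\Gamma \der X : X$ (the derived typing of $n$) and as $\Gamma \der X : \Set[0]$ by the variable rule. Type unicity (Theorem~\ref{thm:tyuniq}) gives $\Gamma \der X = \Set[0]$, equating a neutral type with a sort, which contradicts type constructor discrimination (Lemma~\ref{lem:tycondist}(1)). This refutes the assumption and establishes consistency. I expect the only delicate point to be the shape analysis of the neutral normal form: one must combine the restriction imposed by the logical relation at a neutral type---which already rules out sorts, function types, and abstractions as possible values of $a$---with the inversion argument showing that the single, non-functional context variable $X$ cannot head any nontrivial application. Every other ingredient is a direct appeal to the already-established normalization, type unicity, and discrimination results.
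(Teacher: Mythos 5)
Your proof is correct and follows essentially the same route as the paper's: reduce $t$ to a weak head normal form $a$ typed at $X$, argue that $a$ must be neutral, use the fact that the sole context variable $X \of \Set[0]$ cannot head an application to conclude $a \equiv X$, and derive a contradiction with type constructor discrimination (Lemma~\ref{lem:tycondist}). The only cosmetic difference is bookkeeping: you re-enter the model (fundamental theorem plus the identity substitution, unfolding $\cleq$ and $\Seq$ at the neutral type $X$) to obtain normalization, subject reduction and neutrality in one step, and you use type unicity (Thm.~\ref{thm:tyuniq}) for the final contradiction, whereas the paper invokes its already-exported corollaries---Thm.~\ref{thm:norm}, inversion (Lemma~\ref{lem:inv}), and Lemma~\ref{lem:tycondist}---which themselves rest on exactly the same logical-relation machinery.
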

\begin{proof}
  Let $\Gamma = (X \of \Set[0])$.
  Assuming $\Gamma \der t : X$, we have $\Gamma \der a : X$
  for the whnf $a$ of $t$.  We invert on the typing of $a$.
  By Lemma~\ref{lem:tycondist}, $X$ cannot
  be equal to a function type or sort, thus, $a$ can neither be a
  $\lambda$ nor a function type nor a sort, it can only be neutral.
  The only variable $X$ must be in the head of $a$, but since $X$ is
  not of function type, it cannot be applied.
  Thus, $a \equiv X$ and $\Gamma \der X : X$, implying $\Gamma \der X
  = \Set[0]$ by inversion (Lemma~\ref{lem:inv}).  
  This is in contradiction to Lemma~\ref{lem:tycondist}!
\end{proof}

\subsection{Soundness of Algorithmic Equality}

Soundness of the equality algorithm is a consequence of subject
reduction.
\begin{thm}[Soundness of algorithmic equality] \label{thm:aleqsound} \bla
\begin{enumerate}[\em(1)]
\item
  Let $\Delta \der t, t'  : T$
  . If
  $\Delta  \der t \eqchkt t' : T$ then
  $\Delta \der t = t' : T$.
 \item Let $\Delta \der n, n' : T$ 
  . If
  $\Delta \der n \eqinft n' : U$ then
  $\Delta \der n = n' : U$ and
  $\Delta \der U = T$.
\end{enumerate}
\end{thm}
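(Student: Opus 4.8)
The plan is to prove soundness by a single simultaneous induction over all the mutually recursive algorithmic judgements, leaning on the metatheory already harvested from the Kripke model: normalization and subject reduction (Theorem~\ref{thm:norm}), function type injectivity (Theorem~\ref{thm:funinj}), type unicity (Theorem~\ref{thm:tyuniq}), inversion (Lemma~\ref{lem:inv}), and syntactic validity (Corollary~\ref{cor:synval}). Since the type-directed, structural, and type-equality judgements each come in a whnf variant and an arbitrary-type (``hat'') variant and call one another, I would state and prove all six soundness statements at once, adding the missing soundness of type equality: if $\Delta \der A$ and $\Delta \der A'$ with $\Delta \der A \eqty A'$ then $\Delta \der A = A'$, and likewise for $\eqtyt$. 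The induction is on the algorithmic derivation.

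The one indispensable strengthening is in the structural cases: I would allow $n$ and $n'$ to carry \emph{possibly different} ascribed types, i.e.\ prove that from $\Delta \der n : S$, $\Delta \der n' : S'$, and $\Delta \der n \eqinft n' : U$ one obtains $\Delta \der n = n' : U$ together with $\Delta \der U = S$ and $\Delta \der U = S'$ (and analogously for $\eqinf$). This is forced by the application rule: inverting $\Delta \der n \app u : S_0$ and $\Delta \der n' \app u' : S_0$ via Lemma~\ref{lem:inv} yields heads $\Delta \der n : \funT x {U_1}{V_1}$ and $\Delta \der n' : \funT x {U_2}{V_2}$ whose function types are not a priori identical, so the head recursion cannot assume a common ascribed type---it is precisely the generalized conclusion that tells us the two function types are equal.

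The bridging rules between the whnf and hat variants are all discharged by Theorem~\ref{thm:norm}: to pass from $\Delta \der \whnf T \eqty \whnf{T'}$ to $\Delta \der T \eqtyt T'$, reduce $T \evalsto A$ and $T' \evalsto A'$, obtain $\Delta \der T = A$ and $\Delta \der T' = A'$ by subject reduction, apply the whnf induction hypothesis, and re-close by transitivity; the term-level and structural bridges are identical. The structural variable rule follows from inversion plus distinctness of context variables. The heart of the proof is the application cases (relevant and irrelevant): after inverting both typings, I recurse on the head to get $\Delta \der n = n' : \funT x U T$ and, from the generalized conclusion, $\Delta \der \funT x U T = \funT x {U_1}{V_1}$ and $\Delta \der \funT x U T = \funT x {U_2}{V_2}$; Theorem~\ref{thm:funinj} then extracts $\Delta \der U = U_1$, $\Delta \der U = U_2$ and the codomain equalities. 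Converting the arguments to the common domain $U$, I recurse on $u \eqchkt u' : U$ and reassemble by \reqapp{} (resp.\ \reqerapp{} in the irrelevant case, where $u$ and $u'$ need only be independently well-typed after resurrection, $\resurrect\Delta \der u : U$); substituting into the codomain equality reconciles the inferred type $\subst u x T$ with the ascribed $S_0$.

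For the type-directed judgement, the function ($\eta$) rule is handled by weakening $t,t'$, applying them to a fresh $x$, recursing to get $\eext \Delta x U \der t \app x = t' \app x : T$, and combining the $\lambda$-congruence \reqlam{} with the \reqeta{} axiom (Corollary~\ref{cor:synval} supplies the well-sortedness premise $\eext \Delta x U \der T : s_2$); the at-sort rule invokes type-equality soundness, with Theorem~\ref{thm:tyuniq} fixing the sort; and the at-neutral-type rule reduces both terms, recurses structurally, and---this is the conceptually important point flagged after the algorithm---uses the generalized conclusion $\Delta \der T = N$ to convert the inferred equality to the ascribed neutral type $N$, exactly the check the algorithm omits. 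The function-type case of type equality recurses on domain and codomain and reassembles by \reqfun{}. The main obstacle, as anticipated in the paper, is the application case: it is the only place where the a priori mismatch of head types must be resolved, and it is resolved only by the combination of the different-ascribed-types generalization, inversion, and function type injectivity---the last two being precisely the hard-won consequences of the Kripke model.
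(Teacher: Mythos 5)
Your proposal is correct and takes essentially the same route as the paper: a simultaneous induction over all six algorithmic judgements, with subject reduction (Thm.~\ref{thm:norm}) bridging the whnf and arbitrary-type variants, and inversion plus function type injectivity (Thm.~\ref{thm:funinj}) resolving the application case. You are in fact more explicit than the paper's proof, which spells out only two bridging cases and calls the rest ``mechanical''; your strengthening of the structural statement to allow distinct ascribed types for $n$ and $n'$ is precisely the generalization needed to make that mechanical induction go through at applications, a detail the paper leaves implicit.
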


\proof
  Generalize the theorem to all six algorithmic equality judgments and
  prove it by induction on the algorithmic equality derivation.  Since
  we have subject reduction, the proof proceeds mechanically, because
  each algorithmic rule corresponds, modulo weak head normalization,
  to a declarative rule.
  \begin{enumerate}[\hbox to8 pt{\hfill}] 

  \item\noindent{\hskip-12 pt\em Case\enspace} $\Delta \der T : s$ and $\Delta' \der T' : s$ and
  \[
    \ru{\Delta \der \whnf T \eqty \whnf T'
      }{\Delta \der T \eqtyt T'}
  \]
  By induction hypothesis,
  $\Delta  \der \whnf T = \whnf T' : s$.
  By subject reduction $\Delta \der T = \whnf T : s$ and
  $\Delta \der T' = \whnf T' : s$.  By transitivity
  $\Delta \der T =  T' : s$.

  \item\noindent{\hskip-12 pt\em Case\enspace}
  \[
  \ru{\Delta \der T \eqtyt T'
    }{\Delta \der T \eqchk T' : s}
  \]
  By induction hypothesis, $\Delta \der T = T' : s$.\qed
  \end{enumerate}

\subsection{Symmetry and Transitivity of Algorithmic Equality}

Since algorithmic equality is sound for well-typed terms, it is also
symmetric and transitive.

\begin{lem}[Type and context conversion in algorithmic equality]
\label{lem:tyconvalg} Let $\der \Delta = \Delta'$.
\begin{enumerate}[\em(1)]
\item If $\Delta \der A,A'$ and
      $\Delta \der A \eqty A'$ then
      $\Delta' \der A \eqty A'$.
\item If $\Delta \der n,n' : A$ and
      $\Delta \der n \eqinf n' : A$ then
      $\Delta' \der n \eqinf n' : A'$ for some
      $A'$ with $\Delta \der A = A'$.
\item If $\Delta \der t,t' : A$ and 
      $\Delta \der t \eqchk t' : A$ and
      $\Delta \der A = A'$ then
      $\Delta' \der t \eqchk t' : A'$.
\end{enumerate}
\end{lem}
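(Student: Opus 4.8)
The plan is to strengthen the statement to all six algorithmic judgements---$\eqty$/$\eqtyt$ for type equality, $\eqinf$/$\eqinft$ for structural equality, and $\eqchk$/$\eqchkt$ for type-directed equality---and prove them simultaneously by induction on the algorithmic derivation, carrying along in each case the matching well-typedness preconditions ($\Delta \der A, A'$ resp.\ $\Delta \der n, n' : A$ resp.\ $\Delta \der t, t' : A$). The three rules that merely weak-head normalize a type and defer to a whnf variant (the $\eqtyt$-, $\eqchkt$- and $\eqinf$-rules) are handled uniformly: from the supplied or inductively obtained type equality $\Delta \der T = T'$ I use normalization and subject reduction (Theorem~\ref{thm:norm}) to get $\Delta \der T = \whnf T$ and $\Delta \der T' = \whnf{T'}$, hence $\Delta \der \whnf T = \whnf{T'}$, so the type-conversion invariant survives the passage to whnfs. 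Throughout, the well-typedness preconditions needed to invoke an induction hypothesis are re-established by inversion (Lemma~\ref{lem:inv}) together with soundness (Theorem~\ref{thm:aleqsound}), syntactic validity (Corollary~\ref{cor:synval}) and type unicity (Theorem~\ref{thm:tyuniq}); these are administrative and I will not belabour them.

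Part (1) is essentially structural. The sort rule transfers verbatim; for the function-type rule I extend the context equality to $\der \eext \Delta x U = \eext {\Delta'} x U$ (legitimate because $U$ is well-formed and $\der \Delta = \Delta'$) and apply the induction hypotheses to domain and codomain; the neutral rule defers to part (2) and simply discards the inferred type. The genuine content is the invariant governing part (2): under a context change the inferred type may change, but only up to definitional equality. The variable rule already exhibits this---from $(x \of T) \in \Delta$ and $\der \Delta = \Delta'$ we read off $(x \of T') \in \Delta'$ with $\Delta \der T = T'$, giving $\Delta' \der x \eqinft x : T'$ together with exactly the required $\Delta \der T = T'$.

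The main obstacle is the application rule of structural equality, where the dependent codomain must be kept in lock-step with the type conversion. From $\Delta \der n \app u \eqinft n' \app u' : \subst u x T$ with premises $\Delta \der n \eqinf n' : \funT x U T$ and $\Delta \der u \eqchkt u' : U$, the induction hypothesis for the head yields $\Delta' \der n \eqinf n' : B$ with $\Delta \der \funT x U T = B$ and $B$ a whnf. By type constructor discrimination (Lemma~\ref{lem:tycondist}), $B$ is neither a sort nor a neutral, so $B = \funT x {U'} {T'}$, and function type injectivity (Theorem~\ref{thm:funinj}) gives $\Delta \der U = U'$ and $\eext \Delta x U \der T = T'$. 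Feeding the argument premise into the $\eqchkt$ induction hypothesis (with the type conversion $\Delta \der U = U'$) produces $\Delta' \der u \eqchkt u' : U'$, whence $\Delta' \der n \app u \eqinft n' \app u' : \subst u x {T'}$. The output-type invariant $\Delta \der \subst u x T = \subst u x {T'}$ then follows from $\eext \Delta x U \der T = T'$ and $\Delta \der u : U$ by the substitution theorem (Theorem~\ref{thm:subst}). The irrelevant application rule is identical save that the argument is not compared, so only the head and the codomain substitution are involved.

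Part (3) mirrors this pattern. In the $\eta$-rule, from $\Delta \der t \eqchk t' : \funT x U T$ and the hypothesis $\Delta \der \funT x U T = A'$, discrimination and injectivity force $A' = \funT x {U'} {T'}$ with $\Delta \der U = U'$ and $\eext \Delta x U \der T = T'$; I then extend the context equality to $\der \eext \Delta x U = \eext {\Delta'} x {U'}$ and apply the $\eqchkt$ induction hypothesis to the body $t \app x \eqchkt t' \app x$ at the converted codomain $T'$. For the remaining base cases, discrimination (Lemma~\ref{lem:tycondist}) forces $A'$ to be a sort in the sort rule (reducing to part (1)) and a neutral type in the base rule (reducing to part (2)); note that, as observed after the algorithm, the inferred type of the neutral need not be compared against the ascribed base type, so no further check is incurred. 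The only delicate points remain (a) propagating the well-typedness side conditions, dispatched as above, and (b) synchronizing the codomain substitution with the type conversion, which is precisely where function type injectivity and the substitution theorem are indispensable.
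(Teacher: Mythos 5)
Your proposal is correct and takes essentially the same route as the paper's proof: generalize the statement to all six algorithmic judgements, induct on the algorithmic equality derivation, and use soundness of algorithmic equality, normalization/subject reduction, function type injectivity, and context conversion to keep the inferred types in step up to definitional equality. Your handling of the structural application case (type constructor discrimination, then injectivity, then the substitution theorem to relate $\subst u x T$ and $\subst u x {T'}$) in fact spells out in full a case that the paper's own proof leaves as an empty stub.
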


\proof
  By induction on the derivation of algorithmic equality,
  where we extend the statements to $\eqinftr$ and $\eqchktr$
  accordingly.
  \begin{enumerate}[(1)]
  \item Type equality.
    \begin{enumerate}[\hbox to8 pt{\hfill}] 

\item\noindent{\hskip-12 pt\em Case\enspace}
\[
\ru{\Delta \der U \eqtyt U' \qquad
    \eext \Delta x U \der T \eqtyt T'
  }{\Delta \der \efunT x U T \eqty \efunT x {U'}{T'}}      
\]
By inversion, $\Delta \der U,U'$ and by induction hypothesis, 
$\Delta' \der U \eqtyt U'$.  Again by inversion, $\eext \Delta x U
\der T$ and $\eext \Delta x {U'} \der T'$, yet by soundness of
algorithmic equality, $\Delta \der U = U'$, hence $\eext \Delta x U
\der T'$ by context conversion.  Further, $\der \eext \Delta x U =
\eext {\Delta'} x {U}$.  Thus, we can apply the other induction
hypothesis to obtain $\eext {\Delta'} x {U} \der T \eqtyt T'$, which
finally yields $\Delta' \der \efunT x U T \eqty \efunT x {U'}{T'}$.
    \end{enumerate}

  \item Structural equality.
\begin{enumerate}[\hbox to8 pt{\hfill}] 

\item\noindent{\hskip-12 pt\em Case\enspace}
\[
  \ru{(x \of T) \in \Delta 
    }{\Delta \der x \eqinft x : T}
\]
  Since $\der \Delta = \Delta'$, there is a unique
  $(x \of T') \in \Delta'$ with $\Delta \der T = T'$.  Hence,
  $\Delta' \der x \eqinft x : T'$.

\item\noindent{\hskip-12 pt\em Case\enspace}

  \item Type-directed equality.
    \begin{enumerate}[\hbox to8 pt{\hfill}] 

\item\noindent{\hskip-12 pt\em Case\enspace} $\Delta \der t,t' : T$ and $\Delta \der T = T'$ and
\[
  \ru{T \evalsto A \qquad \Delta \der t \eqchk t' : A
    }{\Delta \der t \eqchkt t' : T}
\]      
    By normalization, $T' \evalsto A'$, and subject reduction $\Delta
    \der A = T = T' = A'$.  Since by conversion, $\Delta \der t,t' : A$, 
    by induction hypothesis $\Delta' \der t \eqchk t' : A'$.
    Thus, $\Delta' \der t \eqchkt t' : T'$.

\item\noindent{\hskip-12 pt\em Case\enspace} $\Delta \der \efunT x U T = A'$ and
\[
  \ru{\eext \Delta x U \der t \eapp x \eqchkt t' \eapp x : T
    }{\Delta \der t \eqchk t' : \efunT x U T}
\]
    By injectivity $A' \equiv \efunT x {U'}{T'}$ with $\Delta \der U =
    U'$ and $\eext \Delta x U \der T = T'$.  Since $\der \eext \Delta
    x U = \eext {\Delta'} x {U'}$, by induction hypothesis we have
    $\eext {\Delta'} x {U'} \der t \eapp x \eqchkt t' \eapp x : T'$.
    We conclude $\Delta' \der t \eqchk t' : \efunT x {U'}{T'}$.\qed
    \end{enumerate}
  \end{enumerate}
\end{enumerate}

\begin{lem}[Algorithmic equality is transitive]
  \label{lem:aleqtrans}
  Let $\der \Delta = \Delta'$.  In the following, let the terms
  submitted to algorithmic equality be well-typed.
  \begin{enumerate}[\em(1)]
  \item If $\Delta  \der n_1 \eqinft n_2 : T$ 
       and $\Delta' \der n_2 \eqinft n_3 : T'$ 
      then $\Delta  \der n_1 \eqinft n_3 : T$ and $\Delta \der T = T'$.
  \item If $\Delta  \der t_1 \eqchkt t_2 : T$
       and $\Delta' \der t_2 \eqchkt t_3 : T'$ and $\Delta \der T = T'$
      then $\Delta  \der t_1 \eqchkt t_3 : T$.
  \item If $\Delta  \der T_1 \eqtyt T_2 : s$ 
       and $\Delta' \der T_2 \eqtyt T_3 : s$
      then $\Delta  \der T_1 \eqtyt T_3 : s$  
  \end{enumerate}
\end{lem}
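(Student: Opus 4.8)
The plan is to prove all three statements \emph{simultaneously} by induction on the derivation of the first algorithmic equality judgement, after generalising each clause to its weak-head-type counterpart (the unhatted $\eqinf$ and $\eqchk$ forms) exactly as in the proof of Lemma~\ref{lem:tyconvalg}, so that the induction ranges over all six interleaved judgements. Because every algorithmic rule is syntax-directed on the shapes of \emph{both} compared terms, the middle object $n_2$, $t_2$, or $T_2$ forces the second derivation to conclude by a rule of the same shape as the first; each case therefore pairs a rule of the first derivation with the matching rule of the second. Throughout I will use the standing assumption that the compared terms are well-typed: this lets me invoke soundness of algorithmic equality (Theorem~\ref{thm:aleqsound}) to turn any already-established algorithmic equality into a declarative one, and combine it with normalization and subject reduction (Theorem~\ref{thm:norm}), function type injectivity (Theorem~\ref{thm:funinj}), and the type/context conversion lemma (Lemma~\ref{lem:tyconvalg}) to move between the contexts $\Delta$ and $\Delta'$ related by $\der \Delta = \Delta'$.

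For structural equality (1) I recurse on the shape of $n_2$. The variable case is immediate: all three neutrals are the same $x$, and $\der \Delta = \Delta'$ yields $(x \of T) \in \Delta$ and $(x \of T') \in \Delta'$ with $\Delta \der T = T'$. For a relevant application $n_i = m_i \app u_i$ I apply the induction hypothesis to the heads (at $\eqinf$) to obtain $\Delta \der m_1 \eqinf m_3 : \funT x U {T_0}$ together with $\Delta \der \funT x U {T_0} = \funT x {U'}{T_0'}$; injectivity (Theorem~\ref{thm:funinj}) then gives $\Delta \der U = U'$ and $\eext \Delta x U \der T_0 = T_0'$. Using $\Delta \der U = U'$ I apply the induction hypothesis for (2) to the arguments to get $\Delta \der u_1 \eqchkt u_3 : U$, reassemble the application rule, and handle the whnf-wrapping rule by reducing the inferred type and using subject reduction against the already-derived type equality. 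The irrelevant-application case is analogous but omits the argument comparison.

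For type-directed equality (2) I first normalize $T$ and $T'$; by subject reduction against the hypothesis $\Delta \der T = T'$ together with type constructor discrimination (Lemma~\ref{lem:tycondist}), their whnfs share the same head shape. If that shape is a function type I $\eta$-expand on both steps and recurse on the bodies in the extended context $\eext \Delta x U$, using $\Delta \der U = U'$ and $\der \Delta = \Delta'$ to obtain $\der \eext \Delta x U = \eext {\Delta'} x {U'}$ so that the induction hypothesis applies; if it is a neutral type I fall through to (1); if it is a sort I fall through to (3). Type equality (3) is the easiest: sorts are handled by reflexivity, function types by recursing on the domain and on the codomain in an extended context, and neutral types by appealing to (1), where the inferred sort is irrelevant.

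The main obstacle is the relevant-application case of (1): the two derivations infer the \emph{different} types $\subst{u_1}x{T_0}$ and $\subst{u_2}x{T_0'}$, yet I must produce $\Delta \der \subst{u_1}x{T_0} = \subst{u_2}x{T_0'}$. This is exactly where dependency bites, and it is resolved by combining soundness (Theorem~\ref{thm:aleqsound}, giving $\Delta \der u_1 = u_2 : U$ from the argument comparison), injectivity (for $\eext \Delta x U \der T_0 = T_0'$), and the substitution/functionality theorem (Theorem~\ref{thm:subst}) to substitute the equal arguments into the equal codomains. The delicate bookkeeping on which the whole argument turns is ensuring these pieces live in the correct contexts and that the conversion lemma (Lemma~\ref{lem:tyconvalg}) is applied \emph{before} each appeal to the induction hypothesis rather than after; once soundness is available—so that transitivity rests on the earlier Kripke-model development—every remaining step is mechanical.
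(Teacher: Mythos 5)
Your proposal is correct and follows essentially the same route as the paper's proof: a simultaneous induction on the first derivation, generalized to the whnf-level judgements $\eqinfr$ and $\eqchkr$, with the relevant-application case resolved by function type injectivity (Thm.~\ref{thm:funinj}) plus the induction hypothesis on the arguments, and the function-type-equality case resolved by context conversion obtained from soundness (Thm.~\ref{thm:aleqsound}). Your explicit derivation of $\Delta \der \subst{u_1}x{T_0} = \subst{u_2}x{T_0'}$ via soundness, injectivity, and Thm.~\ref{thm:subst} spells out a step the paper's write-up glosses over, but it is the intended argument, not a different one.
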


\proof
We extend these statements to $\eqinfr$ and $\eqchkr$
and prove them simultaneously by induction on the first derivation.
\begin{enumerate}[\hbox to8 pt{\hfill}]   

\item\noindent{\hskip-12 pt\em Case\enspace}
\[
\ru{\Delta \der n_1 \eqinft n_2 : T
  }{\Delta \der n_1 \eqchk  n_2 : N}
\qquad
\ru{\Delta' \der n_2 \eqinft n_3 : T'
  }{\Delta' \der n_2  \eqchk n_3 : N'}
\]
By induction hypothesis $\Delta \der n_1 \eqinft n_3 : T$, hence,
$\Delta \der n_1 \eqchk n_3 : N$.

\item\noindent{\hskip-12 pt\em Case\enspace}
\[
\ru{\Delta \der N_1 \eqinft N_2 : T
  }{\Delta \der N_1 \eqchk N_2}
\qquad
\ru{\Delta \der N_2 \eqinft N_3 : T'
  }{\Delta \der N_2 \eqchk N_3}
\]
Analogously.

\item\noindent{\hskip-12 pt\em Case\enspace}
\begin{multline*}
  \ru{\Delta \der n_1 \eqinf  n_2 : \funTs x {U}{s_1,s_2}{T} \qquad
      \Delta \der u_1 \eqchkt u_2 : U
    }{\Delta \der n_1 \app u_1 \eqinft n_2 \app u_2 : \subst {u_1}x{T}
    }
\\[0.7em]
  \ru{\Delta' \der n_2 \eqinf n_3 : \funTs x {U'}{s_1',s_2'}{T'} \qquad
      \Delta' \der u_2 \eqchkt u_3 : U'
    }{\Delta' \der n_2 \app u_2 \eqinft n_3 \app u_3 : \subst {u_2}x{T'}
    }
\end{multline*}
By induction hypothesis we have
$\Delta \der n_1 \eqinf n_3 : \funTs x U {s_1,s_2} T$ and
$\Delta \der \funTs x {U}{s_1,s_2}{T} = \funTs x {U'}{s_1',s_2'}{T'}$
which gives in particular $s_1 = s_1', s_2 = s_2'$,
and $\Delta \der U = U' : s_1$ by function type injectivity (Thm.~\ref{thm:funinj}). 
By induction hypothesis we can then deduce
$\Delta \der u_1 \eqchkt u_3 : U$, and therefore conclude
$\Delta \der n_1 u_1 \eqinf n_3 u_3 : \subst {u_1} x T$.

\item\noindent{\hskip-12 pt\em Case\enspace}
\begin{multline*}
  \ru{\Delta  \der U_1 \eqtyt U_2 : s_1 \quad
      \eext \Delta x U_1 \der T_1 \eqtyt T_2 : s_2
    }{\Delta \der \efunTs x {U_1} {s_1,s_2} {T_1} \eqty \efunTs x {U_2}{s_1,s_2}{T_2} : s_3
    }
\\[2ex]
  \ru{\Delta  \der U_2 \eqtyt U_3 : s_1 \quad
      \eext \Delta x {U_2} \der T_2 \eqtyt T_3 : s_2
    }{\Delta \der \efunTs x {U_2} {s_1,s_2} {T_2} \eqty \efunTs x {U_3}{s_1,s_2}{T_3} : s_2
    }
\end{multline*}

We get $\Delta \der U_1 \eqtyt U_3 : s_1 $ by transitivity.
To also get $\eext \Delta x {U_1} \der T_1 \eqtyt T_3 : s_2$
we need $\der \eext \Delta x {U_2} = \eext \Delta x {U_1}$,
but this stems from $\Delta \der U_1 \eqtyt U_2 : s_1$ by soundness of
algorithmic equality.\qed
\end{enumerate}

\begin{thm}
  \label{thm:aleqper}
  The algorithmic equality relations are PERs on well-typed expressions.
\end{thm}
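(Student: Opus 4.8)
The plan is to split the PER claim into symmetry and transitivity for each of the six algorithmic judgements. Transitivity is precisely Lemma~\ref{lem:aleqtrans}, so nothing new is required there; the content of the theorem is symmetry, which I would prove by a single simultaneous induction on the algorithmic derivation, generalising to all six relations exactly as was done for transitivity.

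The one structural subtlety, already signalled in Section~\ref{sec:algo}, is that the inferred type of structural equality depends only on the left-hand neutral (Lemma~\ref{lem:uniqinfer}), so exchanging the two sides of a neutral comparison returns a syntactically different but declaratively equal type. I would therefore phrase the structural half of the symmetry statement with the output type allowed to float: from $\Delta \der n \eqinft n' : T$ conclude $\Delta \der n' \eqinft n : T'$ for some $T'$ with $\Delta \der T = T'$, in parallel with the conclusion of Lemma~\ref{lem:aleqtrans}(1). The term- and type-level checking judgements, by contrast, carry their type as an input and are symmetric at the very same type.

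Two cases do the real work. For neutral application, the premises $\Delta \der n \eqinf n' : \funT x U T$ and $\Delta \der u \eqchkt u' : U$ give, by the induction hypotheses, $\Delta \der n' \eqinf n : V$ with $\Delta \der \funT x U T = V$ together with $\Delta \der u' \eqchkt u : U$; function type injectivity (Theorem~\ref{thm:funinj}) exhibits $V$ as a function type with domain equal to $U$, so the structural-application rule yields $\Delta \der n' \app u' \eqinft n \app u : \subst {u'} x T$, and soundness (Theorem~\ref{thm:aleqsound}) on the argument supplies the required side condition $\Delta \der \subst {u'} x T = \subst u x T$ via $\Delta \der u = u' : U$. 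For function-type formation, symmetrising $\eext \Delta x U \der T \eqtyt T'$ into a comparison of $T'$ with $T$ forces the binding in the context to change from $U$ to $U'$; here soundness gives $\Delta \der U = U'$, hence $\der \eext \Delta x U = \eext \Delta x {U'}$, and the context-conversion clause of Lemma~\ref{lem:tyconvalg} transports the derivation so that the symmetric function type can be assembled.

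The main obstacle is exactly this mismatch of types and contexts produced by swapping sides, and the point of the argument is that it is benign because soundness is already in hand: soundness rests only on subject reduction obtained from the Kripke model and is wholly independent of symmetry and transitivity, so there is no circularity. With symmetry established and transitivity imported from Lemma~\ref{lem:aleqtrans}, each of the six relations is a PER on well-typed expressions, which is the claim.
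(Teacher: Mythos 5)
Your proposal is correct and takes essentially the same route as the paper, whose entire proof of Theorem~\ref{thm:aleqper} reads ``By Lemma~\ref{lem:aleqtrans} and an analogous proof of symmetry.'' Your elaboration of what ``analogous'' means---letting the inferred type of structural equality float up to declarative equality, and discharging the resulting type and context mismatches via soundness (Theorem~\ref{thm:aleqsound}), function type injectivity (Theorem~\ref{thm:funinj}), and conversion (Lemma~\ref{lem:tyconvalg})---is exactly the machinery the paper's transitivity proof deploys, so there is no gap.
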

\begin{proof}
  By Lemma~\ref{lem:aleqtrans} and an analogous proof of symmetry.
\end{proof}

\section{A Kripke Logical Relation for Completeness}
\label{sec:compl}

The only open issues in the meta-theory of \IITT are
completeness and termination of algorithmic equality.  
In parts, completeness has been
established in the last section already, namely, we have shown
injectivity and discrimination for type constructors.  What is missing
is injectivity and discrimination for neutrals, \eg, if 
$\Delta \der n\,u = n'\,u' : T'$ then necessarily 
$\Delta \der n = n' : \funT x U T$ and 
$\Delta \der u = u' : U$, plus $\Delta \der \subst u x T = T'$.
In untyped $\lambda$-calculus, this is an instance of Boehm's theorem
\cite{barendregt:lambdacalculus}.  We follow Coquand
\cite{coquand:conversion} and Harper and Pfenning
\cite{harperPfenning:equivalenceLF} and prove it by constructing a
second Kripke logical relation, $\Ceq$, for completeness 
which is very similar to the first one, $\Seq$, but at base types
additionally requires algorithmic equality to hold.  After proving the
fundamental lemma again, we know that definitionally equal terms are
also algorithmically so.  As a consequence, equality is decidable in
\IITT, and so is type checking.


\subsection{Another Kripke Logical Relation}

\renewcommand{\R}{\mathrel{:\Longleftrightarrow:}}

Again, by induction on
$A \in s$
we define two Kripke relations
\[
\begin{array}{r@{~}c@{~}l}
  \Delta \der A & \Ceq & A' : s
\\
  \Delta \der a & \Ceq & a' : A. 
\end{array}
\]
together with their respective closures $\Cleq$ and the generalization
to $\evar$.  This time, however, at base types we will additionally
require algorithmic equality to hold, more
precisely, the relation  $\Delta \der t \R t' : T$ which stands for
the conjunction of the propositions
\begin{iteMize}{$\bullet$}
\item $\Delta \der t : T$ and $\Delta \der t' : T$, and
\item $\Delta \der t \eqchkt t' : T$.
\end{iteMize}
Note that by soundness of algorithmic equality, $\R$ implies $\E$.

Again, we allow ourselves rule notation for the defining clauses of $\Ceq$.
\begin{gather*}
  \ru{\Delta \der N   \R  N' : s
    }{\Delta \der N \Ceq N' : s}
\qquad
  \ru{\Delta \der n  \R  n' : N
    }{\Delta \der n \Ceq  n' : N}
\qquad
  \rux{\der \Delta 
     }{\Delta \der s \Ceq  s : s'
     }{(s,s')} 
\end{gather*}
\begin{gather*}
 \rux{\lcol{\Delta \der U \Cleq U' : s_1} \\
      \forall \hatDelta \leq \Delta,~
      \hatDelta \der u \Cleq u' \evarof U \implies
      \hatDelta \der \subst u x T \Cleq \subst {u'}x{T'} : s_2 \\
      \Delta \der \efunTs x U {s_1,s_2} T  \E \efunTs x {U'}{s_1,s_2}{T'} : s_3
    }{\Delta \der \efunTs x U {s_1,s_2} T \Ceq \efunTs x {U'}{s_1,s_2}{T'} : s_3
    }{(s_1,s_2,s_3)}
\end{gather*}
\begin{gather*}
  \ru{\forall \hatDelta \leq \Delta,~
      \hatDelta \der u \Cleq u' \evarof U \implies
      \hatDelta \der f \eapp u  \Cleq f' \eapp u' : \subst {u}x{T}
          \\
      \Delta \der f \E f' : \efunTs x {U} {s,s'} {T}
    }{\Delta \der f \Ceq f' : \efunTs x {U} {s,s'} {T}}
\end{gather*}
\begin{gather*}
  \ru{
      \Delta \der \whnf t \Ceq \whnf t' : \whnf T \qquad
      \Delta \der t \E t' : T
    }{\Delta \der t \Cleq t' : T}
\end{gather*}
\begin{gather*}
  \ru{\resurrect\Delta \der a \Ceq a : A \qquad
      \resurrect{\Delta} \der a' \Ceq a' : A
    }{\Delta \der a \Ceq a' \div A}
\qquad
  \ru{\resurrect\Delta \der t \Cleq  t : T \qquad
      \resurrect{\Delta} \der t' \Cleq t' : T
    }{\Delta \der t \Cleq t' \div T}
\end{gather*}
This logical relation contains only well-typed
and definitionally equal terms.
It is symmetric, transitive, and closed under weakening and type
conversion.  The proofs are in analogy to those of 
Section~\ref{sec:sound}, which are relying on the fact that the underlying
relation $\E$ is a Kripke PER and closed under type conversion.  The
relation $\R$ underlying $\Ceq$ has the same properties, thanks to
soundness of algorithmic equality.

Note that in the definition of $\Delta \der f \Ceq f' : \efunT x U T$
we did not require $f$ and $f'$ to be algorithmically equal.  This
would hinder the proof of the fundamental theorem for $\Ceq$, 
since algorithmic
equality is not closed under application by definition---it will
follow from the fundamental theorem, though.  In the next lemma we
shall prove that $f$ and $f'$ \emph{are} algorithmically equal if they
are related by $\Ceq$.  The name \emph{Escape Lemma} was coined by
Jeffrey Sarnat \cite{schuermannSarnat:lics08}.

\begin{lem}[Escape from the logical relation]\label{lem:escape}
  Let $\Delta \der A \Ceq A' : s$
  \begin{enumerate}[\em(1)]
  \item \label{it:Sty}
        $\Delta \der A \eqty A'$.
  \item \label{it:Snf}
        If $\Delta \der t \Cleq t' : A$ then
        $\Delta \der t \eqchk t' : A$.
  \item \label{it:Sne}
        If $\Delta \der n \eqinf n' \evarof A$
        and $\Delta \der n = n' \evarof A$
        then $\Delta \der n \Ceq n' \evarof A$.
  \end{enumerate}
\end{lem}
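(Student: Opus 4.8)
The plan is to prove the three statements simultaneously by induction on the induction measure $A \in s$, splitting on the shape of the whnf $A$: a neutral type $N$, a sort $s'$, or a function type $\efunTs x U {s_1,s_2} T$. The key structural observation is that (\ref{it:Snf}) and (\ref{it:Sne}) form a reify/reflect mutual recursion descending the type: (\ref{it:Sne}) escapes the \emph{arguments} of a neutral using (\ref{it:Snf}), while (\ref{it:Snf}) reflects a \emph{fresh variable} using (\ref{it:Sne}), each time passing to a strictly smaller classifying type. Throughout I would use three conventions that keep the bookkeeping under control: that $\Ceq$ and $\Cleq$ carry well-typedness and definitional equality, are closed under weakening and type conversion (the $\Ceq$-analogue of Lemma~\ref{lem:convlr}); that the algorithmic judgements reduce their classifying type to whnf before comparing, so that $\Cleq$ and the whnf form $\Ceq$ can be freely exchanged (citing normalization and subject reduction, Thm.~\ref{thm:norm}); and that the standing hypothesis $\Delta \der A \Ceq A' : s$ is available in all three parts, so that from a function type it hands me the domain relation $\Delta \der U \Cleq U' : s_1$ and, for related arguments, the codomain relation, exactly as needed to invoke the induction hypothesis at the subtypes.

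The base cases are essentially unfoldings. For $A$ a sort $s'$ (with $(s',s) \in \Axiom$), (\ref{it:Sty}) is the reflexivity rule of $\eqty$, and for (\ref{it:Snf}) I unfold $\Cleq$ to $\Delta \der \whnf t \Ceq \whnf{t'} : s'$ and apply (\ref{it:Sty}) of the induction hypothesis at the strictly smaller measure $\whnf t \in s'$ --- smaller because $A \in \Set[i]$ lies below $\Set[i] \in \Set[j]$ by universe stratification --- then conclude by the checking-at-sort rule. For $A = N$ a neutral type, all three follow by unwinding the definition of $\Ceq$, whose datum at a neutral type is precisely the relation $\R$, i.e.\ the algorithmic judgement; for (\ref{it:Sne}) the well-typedness accompanies the $\eqinf$ hypothesis and the definitional equality is the second hypothesis, so $\R$ and hence $\Ceq$ hold.

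The function-type case is where the work concentrates. For (\ref{it:Sne}) I verify the two premises of the $\Ceq$-clause for function values: the $\E$-component is the hypothesis $\Delta \der n = n' \evarof A$ together with well-typedness, and for the pointwise condition I take arbitrary $\hatDelta \leq \Delta$ with $\hatDelta \der u \Cleq u' \evarof U$. If the function space is relevant I escape the argument by (\ref{it:Snf}) of the induction hypothesis at $U$ to get $\hatDelta \der u \eqchkt u' : U$ and apply the structural application rule; if it is irrelevant the structural rule ignores the argument, so no escape is needed. Combining with the application-congruence rule to obtain $\hatDelta \der n \eapp u = n' \eapp u' \evarof \subst u x T$, I invoke (\ref{it:Sne}) of the induction hypothesis at the strictly smaller codomain $\whnf{\subst u x T} \in s_2$ (the standing assumption instantiated at $u \Cleq u'$ supplies the required $\Ceq$ there), and close under weak-head expansion to land in $\Cleq$. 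For (\ref{it:Snf}) I reflect a fresh relevant variable $x$ via (\ref{it:Sne}) at $U$ (using the variable rule of $\eqinf$ and reflexivity of $=$), instantiate the function clause at $\hatDelta = \eext \Delta x U$, $u = u' = x$, and apply (\ref{it:Snf}) at the codomain to obtain $\eext \Delta x U \der t \eapp x \eqchkt t' \eapp x : T$; the $\eta$-directed checking rule then yields $\Delta \der t \eqchk t' : A$. Part (\ref{it:Sty}) combines the reflected variable with (\ref{it:Sty}) applied to domain and codomain, feeding the algorithmic function-type rule. Finally, the irrelevant instances of (\ref{it:Sne}) reduce to the relevant one applied twice in the resurrected context, unfolding the definitions of $\Ceq$ and of $\eqinf$ at $\div$.

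The main obstacle I anticipate is the careful matching of classifying types across weak-head reduction: the measure records the codomain as $\whnf{\subst u x T}$, the structural rule infers a type only up to definitional equality, and the algorithmic judgements normalize their types silently. Discharging these alignments will rely on subject reduction and normalization (Thm.~\ref{thm:norm}), uniqueness of inferred types (Lemma~\ref{lem:uniqinfer}), function type injectivity (Thm.~\ref{thm:funinj}), soundness of algorithmic equality (Thm.~\ref{thm:aleqsound}), and closure of $\Ceq$ under type conversion, so that each recursive appeal is made at the intended whnf type and at a strictly smaller measure, keeping the reify/reflect recursion well founded.
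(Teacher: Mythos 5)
Your proposal is correct and follows essentially the same route as the paper's proof: a simultaneous induction on the measure $A \in s$ with the reify/reflect recursion between parts (\ref{it:Snf}) and (\ref{it:Sne}) — reflecting a fresh variable via (\ref{it:Sne}) at the domain to prove (\ref{it:Sty}) and (\ref{it:Snf}) at function types, escaping arguments via (\ref{it:Snf}) to prove (\ref{it:Sne}), unfolding definitions at neutrals and sorts, and handling $\erased$ by resurrection. Your explicit bookkeeping (Corollary~\ref{cor:escape} used as ``IH'', closure of $\Ceq$ under weakening/conversion, whnf alignment via Thm.~\ref{thm:norm} and Lemma~\ref{lem:uniqinfer}) matches what the paper leaves implicit.
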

\begin{cor}\label{cor:escape}
  Let $\Delta \der T \Cleq T' : s$
  \begin{enumerate}[\em(1)]
  \item \label{it:Cty}
        $\Delta \der T \eqtyt T'$.
  \item \label{it:Cnf}
        If $\Delta \der t \Cleq t' : T$ then
        $\Delta \der t \eqchkt t' : T$.
  \item \label{it:Cne}
        If $\Delta \der n \eqinft n' \evarof T$
        and $\Delta \der n = n' \evarof T$
        then $\Delta \der n \Cleq n' \evarof T$.
  \end{enumerate}
\end{cor}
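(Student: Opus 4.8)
The plan is to derive Corollary~\ref{cor:escape} from Lemma~\ref{lem:escape} by a routine weak-head-normalization lifting, exploiting that the ``arbitrary type'' relations $\Cleq$, $\eqtyt$, $\eqchkt$, and $\eqinft$ are obtained from their whnf counterparts $\Ceq$, $\eqty$, $\eqchk$, and $\eqinf$ in exactly parallel ways. Concretely, since the sort $s$ is already a whnf, the defining clause of the closure applied to $\Delta \der T \Cleq T' : s$ yields $\Delta \der \whnf T \Ceq \whnf{T'} : s$ together with $\Delta \der T \E T' : s$. Thus Lemma~\ref{lem:escape} applies to the whnf pair $A \equiv \whnf T$, $A' \equiv \whnf{T'}$, and each of the three items follows by re-wrapping its conclusion through the corresponding algorithmic rule that first reduces a type (or a term's ascribed type) to whnf.

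For item~\ref{it:Cty}, Lemma~\ref{lem:escape}.\ref{it:Sty} gives $\Delta \der \whnf T \eqty \whnf{T'}$, and the type-equality rule that passes to weak head normal forms immediately yields $\Delta \der T \eqtyt T'$. For item~\ref{it:Cnf}, I would first note that $\Delta \der t \Cleq t' : T$ unfolds to $\Delta \der \whnf t \Ceq \whnf{t'} : \whnf T$ and $\Delta \der t \E t' : T$; using normalization and subject reduction (Theorem~\ref{thm:norm}) we have $\Delta \der T = \whnf T$, so conversion transports the $\E$-component to $\whnf T$ and, $\whnf T$ being its own whnf, re-wraps the data as $\Delta \der t \Cleq t' : \whnf T$. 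Lemma~\ref{lem:escape}.\ref{it:Snf} then gives $\Delta \der t \eqchk t' : \whnf T$, and the type-directed rule that normalizes the ascribed type concludes $\Delta \der t \eqchkt t' : T$.

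For item~\ref{it:Cne} in the relevant case $\evar = \noterased$, the structural-equality rule turns $\Delta \der n \eqinft n' : T$ into $\Delta \der n \eqinf n' : \whnf T$, while $\Delta \der n = n' : T$ converts (again via Theorem~\ref{thm:norm}) to $\Delta \der n = n' : \whnf T$; Lemma~\ref{lem:escape}.\ref{it:Sne} yields $\Delta \der n \Ceq n' : \whnf T$. Since a neutral is its own weak head normal form and, by syntactic validity (Corollary~\ref{cor:synval}), the judgements $\Delta \der n, n' : T$ give $\Delta \der n \E n' : T$, the closure clause delivers $\Delta \der n \Cleq n' : T$. The irrelevant case $\evar = \erased$ I would reduce to the relevant one by unfolding both the hypothesis and the goal into the resurrected context $\resurrect\Delta$ and applying the relevant case to the two diagonal instances $\resurrect\Delta \der n \eqinft n : T$ and $\resurrect\Delta \der n' \eqinft n' : T$.

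The only real care in this argument is keeping the definitional type annotation carried by $\E$ aligned with the whnf of the type; this is exactly what Theorem~\ref{thm:norm} and the conversion rule provide. Consequently I expect no genuine obstacle at the level of the corollary: all the substance lives in the simultaneous induction proving Lemma~\ref{lem:escape}, whereas here each of the three items is a single application of the lemma sandwiched between the whnf-reduction rules of the algorithmic judgements.
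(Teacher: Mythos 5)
Your proposal is correct and is essentially the paper's own argument: the paper writes no proof at all, remarking only that the corollary is a ``direct, non-inductive consequence'' of Lemma~\ref{lem:escape}, and your unfolding of the weak-head-normal-form closures---with the type conversions supplied by Theorem~\ref{thm:norm} and syntactic validity (Corollary~\ref{cor:synval}), followed by re-wrapping through the whnf-reduction rules of the algorithmic judgements---is exactly that routine consequence spelled out. The one step you leave tacit is that, in the irrelevant case of item~(3), invoking the relevant case in the context $\resurrect\Delta$ needs the standing hypothesis $\resurrect\Delta \der T \Cleq T' : s$, which follows from $\Delta \der T \Cleq T' : s$ because $\resurrect\Delta \leq \Delta$ and $\Cleq$ is closed under weakening (as the paper asserts); alternatively one can appeal directly to the $\erased$ instance of the lemma, which avoids this weakening altogether.
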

The corollary is a direct, non-inductive consequence of the lemma, so
we can use it in the proof of the lemma, quoted as ``IH''.

\proof[Proof of the lemma]
Simultaneously by induction on $A \R A' : s$.

\begin{enumerate}[\hbox to8 pt{\hfill}] 

\item\noindent{\hskip-12 pt\em Case\enspace} $\Delta \der N \Ceq N' : s$.
  \begin{enumerate}[\hbox to8 pt{\hfill}] 

  \item\noindent{\hskip-12 pt\em Case\enspace} \ref{it:Sty}. $\Delta \der N \eqty N'$ by
  assumption.

  \item\noindent{\hskip-12 pt\em Case\enspace} \ref{it:Snf}.
  We have $\Delta \der \whnf t \eqinf \whnf t' : \_$,
  thus $\Delta \der t \eqchk t' : N$.

  \item\noindent{\hskip-12 pt\em Case\enspace} \ref{it:Sne}.

    First, consider $\evar = \noterased$.
    If $\Delta \der n = n' : N$
    and $\Delta \der n \eqinf n' : N$ then
    $\Delta \der n \eqchk n' : N$
    and trivially
    $\Delta \der n \Ceq n' : N$.

    Then, take $\evar = \erased$.
    Note that if $\resurrect\Delta \der n = n : N$
    and $\resurrect\Delta \der n \eqinf n : N$ then
    $\resurrect\Delta \der n \eqchk n : N$
    and $\resurrect\Delta \der n \Ceq n : N$.
    This implies that
    if $\Delta \der n = n' \div N$
    and $\Delta \der n \eqinf n' \div N$ then
    $\Delta \der n \eqchk n' \div N$
    and $\Delta \der n \Ceq n' \div N$.

  \end{enumerate}

\item\noindent{\hskip-12 pt\em Case\enspace} $\Delta \der s \Ceq s : s'$.
  \begin{enumerate}[\hbox to8 pt{\hfill}] 

  \item\noindent{\hskip-12 pt\em Case\enspace} \ref{it:Sty}. Clearly, $\Delta \der s \eqty s$.

  \item\noindent{\hskip-12 pt\em Case\enspace} \ref{it:Snf}.
  Let $\Delta \der T \Cleq T' : s$.
  Then $\Delta \der T \eqtyt T'$ by IH~\ref{it:Sty},
  thus $\Delta \der T \eqchk T' : s$

  \item\noindent{\hskip-12 pt\em Case\enspace} \ref{it:Sne}.  For $\evar = \noterased$
  let $\Delta \der N \eqinf N' : s$.
  By inversion, $\Delta \der N \eqinft N' : T$ for some $T$.
  Then $\Delta \der N \eqty N'$ and
  $\Delta \der N \Ceq N' : s$ by definition.

  Considering $\evar = \erased$, it is sufficient to observe that
  $\resurrect\Delta \der N \eqinf N : s$
  implies $\resurrect\Delta \der N \eqty N$ and
  $\resurrect\Delta \der N \Ceq N : s$ by definition.

  \end{enumerate}

\item\noindent{\hskip-12 pt\em Case\enspace}
  $\Delta \der \efunT x U T \Ceq \efunT x {U'}{T'} : s_3$.
  \begin{enumerate}[\hbox to8 pt{\hfill}] 

  \item\noindent{\hskip-12 pt\em Case\enspace} \ref{it:Sty}.  Similar to \ref{it:Snf}.
  \item\noindent{\hskip-12 pt\em Case\enspace} \ref{it:Snf}.
  By assumption,
  $\Delta \der t \Cleq t' : \efunT {x}{U}{T}$.  It is sufficient to show
  $\eext \Delta x { U} \der t \eapp x \eqchkt t' \eapp x :  T$.
  Since $\Delta \der U \Cleq U' : s_1$,
  which includes
  $\Delta \der U$, we have
  $\eext\Delta x U \der x = x \evarof U$.
  Since also
  $\eext\Delta x U \der x \eqinft x \evarof U$,
  we obtain
  $\eext\Delta x { U} \der t \eapp x
  \Cleq t' \eapp x : \whnf T$
  via IH~\ref{it:Sne},
  $\eext\Delta x U \der x \Ceq x \evarof U$.
  IH~\ref{it:Snf} then entails our goal.

  \item\noindent{\hskip-12 pt\em Case\enspace} \ref{it:Sne}.  First, the case for $\evar = \noterased$.
  We reuse variable $\evar$ for a different irrelevance marker.
  We have
  $\Delta \der n  \eqinf n' : \efunT x {U} {T}$.  Assume arbitrary
  $\hatDelta \R \hatDelta \leq \Delta$ and
  $\hatDelta  \der u  \Cleq u' \evarof U$,
  which yields
  $\hatDelta  \der u  = u' \evarof U$ and
  $\hatDelta  \der \subst u x T \Cleq \subst{u'}x{T} : \Set[i]$.
  In case $\evar = \noterased$ we have to apply IH~\ref{it:Snf}
  for
  $\hatDelta \der u \eqchk u' : \whnf{U}$.
  Otherwise, we obtain directly
  $\hatDelta  \der n \eapp u \eqinf n' \eapp u' : \whnf(\subst {u}x{T})$.
  By IH~\ref{it:Sne},
  $\hatDelta  \der n \eapp u \Ceq n' \eapp u' : \whnf(\subst {u}x{T})$.

   The case for $\evar = \erased$ proceeds analogously.\qed\medskip
   \end{enumerate}
\end{enumerate}

\noindent In analogy to $\cleq$ we extend $\Cleq$ to substitutions and define
the semantic validity judgements $\valic \Gamma$ and 
$\Gamma \valic t : T$ and $\Gamma \valic t = t' : T$ 
based on $\Cleq$.
Since by the escape lemma, $\Delta \der x \Cleq x : \Delta(x)$, we
have $\Gamma \der \sid \Cleq \sid : \Gamma$ for $\valic \Gamma$.
Finally, we reprove the fundamental theorem:
\begin{thm}[Fundamental theorem for $\Cleq$] 
  \label{thm:fundcompl} \bla
  \begin{enumerate}[\em(1)]
  \item If $\der \Gamma$ then $\valic \Gamma$.
  \item If $\Gamma \der t : T$ then $\Gamma \valic t : T$.
  \item If $\Gamma \der t = t' : T$ then
        $\Gamma \valic t = t' : T$.
  \end{enumerate}
\end{thm}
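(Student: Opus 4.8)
The plan is to proceed exactly as in the fundamental theorem for $\cleq$ (Theorem~\ref{thm:compl}), namely by a simultaneous induction on the three derivations $\der\Gamma$, $\Gamma\der t:T$, and $\Gamma\der t=t':T$. For each typing and equality rule I would invoke the $\Cleq$-analogue of the corresponding validity lemma from Section~\ref{sec:sound}: the analogues of context satisfiability (Lemma~\ref{lem:idsubst}), resurrection (Lemma~\ref{lem:resurrect}), validity of function type injectivity (Lemma~\ref{lem:funinjval}), and validity of $\beta$, $\eta$, the $\lambda$- and application-congruences, and irrelevant application. Since the $\valic$-judgements $\valic\Gamma$, $\Gamma\valic t:T$, and $\Gamma\valic t=t':T$ are defined over $\Cleq$ in literal analogy to the $\cleq$-based judgements, each case of the induction reduces to citing the matching validity statement, just as in Theorem~\ref{thm:compl}.

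The second step is to check that these validity lemmata really do transfer to $\Cleq$. Here the key observation, already recorded above, is that $\Cleq$ is built on the relation $\R$ exactly as $\cleq$ is built on $\E$, and that $\R$ is a Kripke PER closed under type conversion precisely because algorithmic equality is sound (Theorem~\ref{thm:aleqsound}) and hence inherits reflexivity, symmetry, transitivity, weakening, and conversion from $\E$. Consequently the proofs of the validity lemmata---which, inspecting Section~\ref{sec:sound}, only ever used that the underlying term relation is a Kripke PER closed under conversion and weak head expansion---carry over verbatim with $\cleq$ replaced by $\Cleq$ and $\E$ by $\R$. I do not expect any new difficulty in this bulk of cases.

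The genuinely new ingredient, and the main obstacle, is the treatment of variables and neutrals. For $\cleq$ the statement ``into the logical relation'' (Lemma~\ref{lem:intolr}) was proved by directly unfolding the definition at base type, but for $\Cleq$ the base-type clause additionally demands algorithmic equality, so the direct unfolding no longer suffices. This gap is exactly what the Escape Lemma (Lemma~\ref{lem:escape}, Corollary~\ref{cor:escape}) is designed to close: from $\Delta\der x\eqinft x\evarof\Delta(x)$ together with $\Delta\der x=x\evarof\Delta(x)$ one obtains $\Delta\der x\Cleq x:\Delta(x)$ by Corollary~\ref{cor:escape}(\ref{it:Cne}), and more generally neutrals that are structurally and definitionally equal are $\Cleq$-related. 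This yields the $\Cleq$-analogue of context satisfiability, $\Gamma\der\sid\Cleq\sid:\Gamma$ for every $\valic\Gamma$, which feeds the variable/hypothesis rule and the base of the substitution relation. Once the escape lemma supplies this one step, the remaining induction is routine, and the theorem immediately entails completeness of algorithmic equality: instantiating at $\sid$ and applying Corollary~\ref{cor:escape}(\ref{it:Cnf}) turns $\Gamma\der t=t':T$ into $\Gamma\der t\eqchkt t':T$.
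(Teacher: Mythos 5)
Your proposal coincides with the paper's own (largely implicit) proof: the paper gives no written-out argument but states exactly this route, namely that the Section~\ref{sec:sound} machinery carries over because the underlying relation $\R$ is a Kripke PER closed under type conversion thanks to soundness of algorithmic equality (Thm.~\ref{thm:aleqsound}), and that the escape lemma supplies $\Delta \der x \Cleq x : \Delta(x)$ and hence $\Gamma \der \sid \Cleq \sid : \Gamma$ for valid contexts. Your identification of the variables/neutrals case as the one genuinely new ingredient, closed by Corollary~\ref{cor:escape}, is precisely the paper's intended argument, so the proposal is correct and takes essentially the same approach.
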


\subsection{Completeness and Decidability of Algorithmic Equality}

Derivations of algorithmic equality can now be obtained by escaping
from the logical relation.
\begin{thm}[Completeness of algorithmic equality]\label{thm:aleqcompl} 
  $\Gamma \der t = t' : T$ implies
  $\Gamma \der t \eqchkt t' :  T$.
\end{thm}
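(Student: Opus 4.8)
The plan is to read the theorem off directly as the intended payoff of the second Kripke logical relation, with essentially no new work. First I would apply the fundamental theorem for $\Cleq$ (Theorem~\ref{thm:fundcompl}(3)) to the hypothesis $\Gamma \der t = t' : T$, obtaining semantic validity $\Gamma \valic t = t' : T$. By the very definition of the validity judgement, this already carries the conjunct $\valic \Gamma$; and, as remarked just before the fundamental theorem, $\valic \Gamma$ yields the identity environment $\Gamma \der \sid \Cleq \sid : \Gamma$, which rests on the escape-lemma fact $\Delta \der x \Cleq x : \Delta(x)$.

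Next I would instantiate the validity $\Gamma \valic t = t' : T$ at $\Delta := \Gamma$ and $\sigma := \sigma' := \sid$, using the environment just produced. Since the identity substitution acts as the identity on expressions (so that $t\sid \equiv t$, $t'\sid \equiv t'$, and $T\sid \equiv T$), this gives exactly $\Gamma \der t \Cleq t' : T$. The same instantiation applied to the type side of validity (the side condition $\Gamma \valic T$, present unless $T$ is a sort) supplies $\Gamma \der T \Cleq T : s$ for the appropriate sort $s$.

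Finally I would invoke the escape machinery for the closure: from $\Gamma \der T \Cleq T : s$ and $\Gamma \der t \Cleq t' : T$, Corollary~\ref{cor:escape}(\ref{it:Cnf}) delivers $\Gamma \der t \eqchkt t' : T$, which is the goal.

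The proof is therefore a two-line corollary, and I expect no genuine obstacle here: all the difficulty has been front-loaded into the lemmas I am assuming. The only point that needs a moment's care is ensuring the escape corollary is actually applicable, i.e. that $T$ is related to itself at some sort so that its weak head normal form is available and the base/function cases of the escape lemma fire. This is precisely guaranteed by the type-validity side condition built into the equation-validity judgement, so the corollary applies without further argument.
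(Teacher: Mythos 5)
Your proposal matches the paper's own proof: both apply the fundamental theorem for $\Cleq$ together with the identity environment $\Gamma \der \sid \Cleq \sid : \Gamma$ to obtain $\Gamma \der t \Cleq t' : T$, then escape from the logical relation to get $\Gamma \der t \eqchkt t' : T$. Your version is merely more explicit about the details the paper leaves implicit (using Corollary~\ref{cor:escape}.\ref{it:Cnf} rather than the whnf-level Lemma~\ref{lem:escape}.\ref{it:Snf}, and noting that the type-validity side condition — or, for sorts, the axiom clause — supplies the precondition $\Gamma \der T \Cleq T : s$), which is a correct and harmless refinement.
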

\begin{proof}
  Since $\Gamma \der \sid \Cleq \sid : \Gamma$,
  we have $\Gamma \der t \Cleq t' :  T$ by the fundamental theorem,
  and conclude with Lemma~\ref{lem:escape}.\ref{it:Snf}.
\end{proof}




Termination of algorithmic equality is a consequence of completeness.
When invoking the algorithmic equality check $\Delta \der t \eqchkt t'
: T$ on two well-typed expressions $\Delta \der t,t' : T$ we know by
completeness that $t$ and $t'$ are related to themselves, \ie,
$\Delta \der t \eqchkt t : T$ and $\Delta \der t' \eqchkt t' : T$.
This means that $t$, $t'$, and $T$ are weakly normalizing by the strategy
the equality algorithm implements:  reduce to weak head normal form
and recursively continue with the subterms.  Running the equality
check on $t$ and $t'$ performs, if successful, exactly the same
reductions, and if it fails, at most the same reductions in $t$,
$t'$, and $T$.  Hence, testing equality on well-typed terms always terminates.
This argument has been applied in previous work to untyped equality
\cite{abelCoquand:fundinf07}.  Here, we apply it to typed equality; it
is an alternative to Goguen's technique of proving termination for
typed equality from strong normalization
\cite{goguen:justifyingAlgorithms},
which, in our opinion, does not scale to dependently-typed equality.
\begin{lem}[Termination of algorithmic equality]\label{lem:term} 
  Let $\der \Delta$.
  \begin{enumerate}[\em(1)]
  \item Type equality.
    \begin{enumerate}[\em(a)]
    \item Let $\Delta \der A, A'$.
      If $\DD :: \Delta \der A \eqty A$ and 
      $\Delta \der A' \eqty A'$ then the query
      $\Delta \der A \eqty A'$ terminates.
    \item Let $\Delta \der T, T'$.
      If $\DD :: \Delta \der T \eqtyt T$ and 
      $\Delta \der T' \eqtyt T'$ then the query
      $\Delta \der T \eqtyt T'$ terminates.
    \end{enumerate}
  \item Structural equality.
    Let $\Delta \der n : T$ and $\Delta \der n' : T'$.
    \begin{enumerate}[\em(a)]
    \item If $\DD :: \Delta \der n \eqinf n : A$ and 
      $\Delta \der n' \eqinf n' : A'$ then the query
      $\Delta \der n \eqinf n' : \mathord{?}$ terminates.
      If successfully, it returns $A$ and we have
      $\Delta \der A = T = T' = A$.
    \item If $\DD :: \Delta \der n \eqinft n : T$ and 
      $\Delta \der n' \eqinft n' : T'$ then the query
      $\Delta \der n \eqinft n' : \mathord{?}$ terminates.
      If successfully, it returns $T$ and we have
      $\Delta \der T = T'$. 
    \end{enumerate}
  \item Type-directed equality.
    \begin{enumerate}[\em(a)]
    \item Let
    $\Delta \der t,t' : A$.
    If $\DD :: \Delta \der t \eqchk t : A$ and
    $\Delta \der t' \eqchk t' : A$ then the query
    $\Delta \der t \eqchk t' : A$ terminates.
    \item Let
    $\Delta \der t,t' : T$.
    If $\DD :: \Delta \der t \eqchkt t : T$ and
    $\Delta \der t' \eqchkt t' : T$ then the query
    $\Delta \der t \eqchkt t' : T$ terminates.
    \end{enumerate}
  \end{enumerate}
\end{lem}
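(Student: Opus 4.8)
The plan is to prove all six statements simultaneously by induction on the height of the derivation $\DD$ of the \emph{left} self-comparison (the first hypothesis in each clause), carrying the right self-comparison along as side data that witnesses that the opposite term is itself normalizing and of compatible shape. Two facts drive the argument. First, weak head evaluation $t \evalsto a$ (and active application) is a \emph{functional} relation, so the reductions triggered when comparing $t$ against $t'$ are exactly those performed in the self-comparison recorded by $\DD$. Second, every algorithmic rule is syntax-directed: type equality dispatches on the head of the whnfs, structural equality on the shape of the neutral, and type-directed equality on the shape of the already-reduced type. Hence each recursive call made by the query $\Delta \der t \eqchkt t' : T$ is matched by a strictly smaller sub-derivation of $\DD$, while the corresponding sub-derivation of the right self-comparison supplies the ``right-hand'' premise that the induction hypothesis requires. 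If at some point the two heads fail to match, the query simply halts with failure, which is all that termination demands.

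The easy cases are bookkeeping. For the whnf-to-general wrappers ($\eqtyt$, $\eqinft$, $\eqchkt$) we first reduce via $t \evalsto a$; determinism guarantees this is the same $a$ as in $\DD$, so we apply the whnf induction hypothesis to a strictly smaller sub-derivation. Sorts terminate immediately. For a function type $\efunTs x U {s,s'} T$ against $\efunTs x {U'}{s,s'}{T'}$ we recurse on the domains and then, after forming $\der \eext \Delta x U$ (from $\der \Delta$ and well-typedness) and transporting the right premise from context $\eext \Delta x {U'}$ to $\eext \Delta x U$ by context conversion (Theorem~\ref{thm:cxtconv}, using soundness to obtain $\Delta \der U = U'$), on the codomains, each step governed by a sub-derivation of $\DD$. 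For type-directed equality at a function type we $\eta$-expand to $\eext \Delta x U \der t \eapp x \eqchkt t' \eapp x : T$; the self-comparison $\DD$ contains exactly the smaller sub-derivation $\eext \Delta x U \der t \eapp x \eqchkt t \eapp x : T$, so the induction hypothesis applies.

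The genuine obstacle is the application case of structural equality, where dependency forces us to track inferred types. From $\DD :: \Delta \der n \app u \eqinft n \app u : \subst u x T$ we extract sub-derivations $\Delta \der n \eqinf n : \funT x U T$ and $\Delta \der u \eqchkt u : U$, and the right self-comparison yields $\Delta \der n' \eqinf n' : \funT x {U'}{T'}$ and $\Delta \der u' \eqchkt u' : U'$. Applying the structural induction hypothesis to the head shows that the query $\Delta \der n \eqinf n' : \mathord{?}$ terminates and returns $\funT x U T$ together with $\Delta \der \funT x U T = \funT x {U'}{T'}$; function type injectivity (Theorem~\ref{thm:funinj}) then gives $\Delta \der U = U' : s_1$. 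This is precisely what lines up the second recursive call: using soundness of algorithmic equality (Theorem~\ref{thm:aleqsound}) together with conversion we retype the right argument's self-comparison from $U'$ to $U$, so that the type-directed induction hypothesis applies to $\Delta \der u \eqchkt u' : U$ and that query terminates. This is why the statement threads the returned type and its equation through clauses~(2a) and~(2b): without that information the dependent recursive call on the argument could not be placed at a common, well-defined type. The irrelevant application rule, the neutral and base-type cases, and the $\div$-variants are all handled by the same pattern, completing the simultaneous induction.
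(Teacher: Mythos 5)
Your proof takes essentially the same route as the paper's: a simultaneous induction on the left self-comparison derivation $\DD$, carrying the right self-comparison along, and using determinism of weak head evaluation, uniqueness of inferred types (Lemma~\ref{lem:uniqinfer}), function type injectivity (Thm.~\ref{thm:funinj}), and soundness of algorithmic equality to align the dependent argument query in the structural application case --- exactly the paper's argument. One minor slip: the tool for transporting the right-hand self-comparison between the equal contexts $\eext \Delta x {U'}$ and $\eext \Delta x U$ is Lemma~\ref{lem:tyconvalg} (type and context conversion for \emph{algorithmic} equality), not Theorem~\ref{thm:cxtconv}, which concerns the declarative judgements.
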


\proof
  Simultaneously by induction on derivation $\DD$.
  \begin{enumerate}[(1)]

  \item Type equality.
    \begin{enumerate}[\hbox to8 pt{\hfill}]

    \item\noindent{\hskip-12 pt\em Case\enspace} $A = A' = s$.  The query $\Delta \der A \eqty A'$
    terminates successfully.

    \item\noindent{\hskip-12 pt\em Case\enspace} $A = \efunT x U T$ and $A' = \efunT x {U'}{T'}$.
    First, the query $\Delta \der U \eqtyt U'$ runs.  By induction
    hypothesis, it terminates.  If it fails, the
    whole query fails.  
    Otherwise, the query $\eext \Delta x U \der T \eqtyt T'$ is run.
    By induction hypothesis on $\eext \Delta x U \der T \eqtyt T$ and
    $\eext {\Delta} x {U'} \der T' \eqtyt T'$, the query terminates.

    \item\noindent{\hskip-12 pt\em Case\enspace} $A = N$ and $A' = N'$ neutral.
    By induction hypothesis on $\Delta \der N \eqinft N : T$ and
    $\Delta \der N' \eqinft N' : T'$, the query 
    $\Delta \der N \eqinft N' : \mathord?$ terminates.
    Hence, the query $\Delta \der N \eqty N'$ terminates.
     
    \item\noindent{\hskip-12 pt\em Case\enspace} Weak head normal forms $A,A'$ not covered by previous
    cases: the query $\Delta \der A \eqty A'$ fails immediately, since
    there is no applicable algorithmic type equality rule.

    \item\noindent{\hskip-12 pt\em Case\enspace} The query $\Delta \der T \eqtyt T'$ first invokes weak head
    normalization on $T$ and $T'$.  Both terminate since
    $\Delta \der T \eqtyt T$, which implies $T \evalsto A$, and
    analogously $T' \evalsto A'$ since $\Delta \der T' \eqtyt T'$ by
    assumption.   Then, the query $\Delta \der A \eqty A'$ is run,
    which terminates by induction hypothesis on $\Delta \der A \eqty
    A$ and $\Delta \der A' \eqty A'$.

    \end{enumerate}

  \item Structural equality. 
    \begin{enumerate}[\hbox to8 pt{\hfill}]
      
    \item\noindent{\hskip-12 pt\em Case\enspace} $n = n' = x$.  The query 
    $\Delta \der n \eqinft n' : \mathord?$ terminates successfully,
    returning type $\Delta(x)$.  Since $\der \Delta$, by
    inversion (Lemma~\ref{lem:inv}) $\Delta \der T = T' = \Delta(x)$.

    \item\noindent{\hskip-12 pt\em Case\enspace} Neutral relevant application for 
    $\Delta \der n \app u : T_0$ and $\Delta \der n' \app u' : T'_0$.
\begin{multline*}  
  \ru{\Delta \der n \eqinf n : \funT x U T \qquad
      \Delta \der u \eqchkt u : U
    }{\Delta \der n \app u \eqinft n \app u : \subst u x T}
 \\
  \ru{\Delta \der n' \eqinf n' : \funT x {U'} {T'} \qquad
      \Delta \der u' \eqchkt u' : U'
    }{\Delta \der n' \app u' \eqinft n' \app u' : \subst {u'} x {T'}}
\end{multline*}
    The query $\Delta \der n \app u \eqinft n' \app u' : \mathord?$
    first invokes query $\Delta \der n \eqinf n' : \mathord?$.
    By induction hypothesis on $\Delta \der n \eqinf n : \funT x U T$
    and $\Delta \der n' \eqinf n' : \funT x {U'}{T'}$ the query
    terminates.  If it fails the whole query fails.  Otherwise it
    returns a type $A$ in weak head normal form, which is
    identical to $\funT x U T$      
    by uniqueness of inferred types (Lemma~\ref{lem:uniqinfer}).
    Further, $\Delta \der \funT x U T = \funT x {U'}{T'}$, and by
    function type injectivity (Thm.~\ref{thm:funinj}), 
    $\Delta \der U = U'$ and $\cext \Delta x U \der T = T'$.
    Thus, we can invoke the induction hypothesis on 
    $\Delta \der u \eqchkt u : U$ and $\Delta \der u' \eqchkt u' : U$
    (cast from $\Delta \der u' \eqchkt u' : U'$, Lemma~\ref{lem:tyconvalg})
    to infer that the second subquery $\Delta \der u \eqchkt u' : U$
    terminates.  If this one is successful, then by soundness of
    algorithmic equality, $\Delta \der u = u' : U$, which implies
    $\Delta \der \subst u x T = \subst {u'} x {T'}$.

    \item\noindent{\hskip-12 pt\em Case\enspace} Neutral irrelevant application with typing
\[ 
   \ru{\Delta \der n : \erfunT x {U_1} {T_1} \qquad
       \Delta \der u \erof U_1
     }{\Delta \der n \erapp u : \subst u x {T_1}}
\qquad 
   \ru{\Delta \der n' : \erfunT x {U'_1} {T'_1} \qquad
       \Delta \der u' \erof {U'_1}
     }{\Delta \der n' \erapp u' : \subst {u'} x {T'_1}}
\]
   and algorithmic self-equality 
\[ 
   \ru{\Delta \der n \eqinf n : \erfunT x U T
     }{\Delta \der n \erapp u \eqinft n \erapp u : \subst u x T}
\qquad 
   \ru{\Delta \der n' \eqinf n' : \erfunT x {U'} {T'}
     }{\Delta \der n' \erapp u' \eqinft n' \erapp u' : \subst {u'} x {T'}}
\]
    The query $\Delta \der n \erapp u \eqinft n' \erapp u' : \mathord?$
    invokes query $\Delta \der n \eqinf n' : \mathord?$, which
    terminates by induction hypothesis.  
    If successfully, then 
    $\Delta \der \erfunT x {U_1}{T_1} \erfunT x U T = \erfunT x
    {U'}{T'} = \erfunT x {U_1'}{T_1'}$.
    By function type injectivity, $\Delta \der U_1 = U = U' = U_1'$
    and $\erext \Delta x U \der T_1 = T = T' = T'_1$.  By conversion 
    $\Delta \der u = u' \erof U$, thus, 
    $\Delta \der \subst u x {T_1} = \subst u x T = \subst {u'} x {T'}
    = \subst {u'} x {T'_1}$.

    \item\noindent{\hskip-12 pt\em Case\enspace} In all other cases, the query $\Delta \der n \eqinft n'
    : \mathord?$ fails immediately.

    \item\noindent{\hskip-12 pt\em Case\enspace} The query $\Delta \der n \eqinf n : \mathord?$ spawns
    subquery $\Delta \der n \eqinft n' : \mathord?$ which terminates by
    induction hypothesis on $\Delta \der n \eqinft n : T$ and 
    $\Delta \der n' \eqinft n' : T'$.  If successfully, it returns
    type $T$, and since $T \evalsto A$, the original query also
    terminates, returning $A$.

    \end{enumerate}

  \item Type-directed equality.
    \begin{enumerate}[\hbox to8 pt{\hfill}]
      
    \item\noindent{\hskip-12 pt\em Case\enspace} Function type $\Delta \der t,t' : \efunT x U T$.  The query 
    $\Delta \der t \eqchk t' : \efunT x U T$ spawns subquery 
    $\eext \Delta x U \der t \eapp x \eqchkt t' \eapp : T$.  Since   
    $\eext \Delta x U \der t \eapp x, t' \eapp x : T$ and the subquery
    terminates by induction hypothesis on 
    $\eext \Delta x U \der t \eapp x \eqchkt t \eapp x : T$ and
    $\eext \Delta x U \der t' \eapp x \eqchkt t' \eapp x : T$.

    \item\noindent{\hskip-12 pt\em Case\enspace} Sort $\Delta \der T,T' : s$.  The query $\Delta \der T \eqchk
    T' : s$ calls $\Delta \der T \eqtyt T'$, which terminates by
    induction hypothesis on $\Delta \der T \eqtyt T$ and $\Delta \der
    T' \eqtyt T'$.

    \item\noindent{\hskip-12 pt\em Case\enspace} Neutral type $N$.
\[
  \ru{t \evalsto n \qquad
      \Delta \der n \eqinft n : T
    }{\Delta \der t \eqchk t : N}
\qquad
   \ru{t' \evalsto n' \qquad
       \Delta \der n' \eqinft n' : T'
     }{\Delta \der t' \eqchk t' : N} 
\]
    The query $\Delta \der t \eqchk t' : N$ first
    weak head normalizes $t$ and $t'$.  By assumption,
    $t \evalsto n$ and $t' \evalsto n'$, so this terminates.
    The subquery $\Delta \der n \eqinft n' : \mathord?$ terminates by
    induction hypothesis.  Thus, the whole query terminates.  
   
    \item\noindent{\hskip-12 pt\em Case\enspace} If $A$ is neither a function type, a sort, or a neutral
    type, the query $\Delta \der t \eqchk t' : A$ fails immediately.

    \item\noindent{\hskip-12 pt\em Case\enspace} The query $\Delta \der t \eqchkt t' : T$ first weak head
    normalizes $T$ which terminates since $T \evalsto A$ by
    assumption.  Then it calls $\Delta \der t \eqchk t'  : A$ which
    terminates by induction hypothesis.\qed\medskip
    \end{enumerate}
  \end{enumerate}

\begin{thm}\label{thm:term}
  If $\Delta \der t : T$ and
  $\Delta \der t' : T$ then the query
  $\Delta \der t \eqchkt t' : T$ terminates.
\end{thm}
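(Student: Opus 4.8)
The plan is to reduce the statement to the termination lemma just established (Lemma~\ref{lem:term}), whose type-directed clause already delivers termination of the query $\Delta \der t \eqchkt t' : T$ — but only under the extra \emph{self-equality} hypotheses $\Delta \der t \eqchkt t : T$ and $\Delta \der t' \eqchkt t' : T$, together with the side condition $\der \Delta$. So the whole task is to manufacture these three ingredients from the bare well-typedness assumptions $\Delta \der t : T$ and $\Delta \der t' : T$.

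First I would extract $\der \Delta$ from $\Delta \der t : T$ by context well-formedness (Lemma~\ref{lem:wf}). The key step is then to obtain the self-equalities for free from completeness. Since $\Delta \der t : T$, reflexivity of declarative equality (rule \reqrefl) gives $\Delta \der t = t : T$, and completeness of algorithmic equality (Theorem~\ref{thm:aleqcompl}) converts this into $\Delta \der t \eqchkt t : T$. Applying the same argument to $t'$ yields $\Delta \der t' \eqchkt t' : T$. With $\der \Delta$ in hand and both self-equality derivations available, I would invoke Lemma~\ref{lem:term}, clause~(3b), supplying these as exactly the premises it requires, to conclude that the query $\Delta \der t \eqchkt t' : T$ terminates.

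There is essentially no obstacle: the real work has already been distributed between the termination lemma (which isolates termination of the algorithm on inputs that are \emph{already known} to be reflexively related to themselves) and the completeness theorem (which supplies that reflexivity for every well-typed term). The one point worth stressing — and the reason Lemma~\ref{lem:term} is deliberately phrased with the self-equality derivations as hypotheses — is that there is no circularity: completeness is proven via the second Kripke logical relation $\Cleq$ and its escape lemma (Lemma~\ref{lem:escape}), and does not itself depend on termination. Hence discharging the hypotheses of Lemma~\ref{lem:term} through Theorem~\ref{thm:aleqcompl} is legitimate, and the theorem follows in a couple of lines.
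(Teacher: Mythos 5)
Your proof is correct and is exactly the paper's argument: the paper's own proof reads ``From the lemma by completeness of algorithmic equality,'' i.e.\ it obtains the self-equality premises $\Delta \der t \eqchkt t : T$ and $\Delta \der t' \eqchkt t' : T$ from reflexivity plus Theorem~\ref{thm:aleqcompl} and then invokes Lemma~\ref{lem:term}, just as you do. Your additional remarks (extracting $\der\Delta$ via Lemma~\ref{lem:wf} and noting the absence of circularity) merely make explicit what the paper leaves implicit.
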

\begin{proof}
  From the lemma by completeness of algorithmic equality.
\end{proof}

Thus we have shown that algorithmic equality is correct, \ie, sound, complete, and
terminating. Together, this entails decidability of equality in \IITT.

\begin{thm}[Decidability of \IITT]
  \bla
  \begin{enumerate}[\em(1)]
  \item $\Gamma \der t = t' : T$ is decidable.
  \item $\Gamma \der t : T$ is decidable.
  \end{enumerate}
\end{thm}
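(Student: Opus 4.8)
The plan is to reduce both statements to the correctness properties of algorithmic equality established above---soundness (Theorem~\ref{thm:aleqsound}), completeness (Theorem~\ref{thm:aleqcompl}), and termination (Theorem~\ref{thm:term})---together with a bidirectional type-checking algorithm read off from the inversion lemma. Part~(1) will depend on part~(2), while part~(2) uses part~(1) only at the leaves, as equality checks on types that are already known to be well-sorted, so there is no vicious circularity: the standing precondition of Theorem~\ref{thm:term} is always met.

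For part~(1), to decide $\Gamma \der t = t' : T$ I would first invoke the procedure of part~(2) to test $\Gamma \der t : T$ and $\Gamma \der t' : T$. By syntactic validity (Corollary~\ref{cor:synval}), if either test fails then $\Gamma \der t = t' : T$ cannot hold, so the procedure answers negatively. If both succeed, $t$ and $t'$ are well-typed at $T$, so I may run the algorithmic check $\Gamma \der t \eqchkt t' : T$, which terminates by Theorem~\ref{thm:term}. If it succeeds, soundness yields $\Gamma \der t = t' : T$; if it fails, completeness guarantees that $\Gamma \der t = t' : T$ does not hold. Hence the equality judgement is decidable.

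For part~(2), I would give a bidirectional algorithm whose inference function recurses on the structure of $t$ along the unique shape dictated by the inversion lemma (Lemma~\ref{lem:inv}): a variable is looked up in $\Gamma$; a sort $s$ returns the $s'$ with $(s,s') \in \Axiom$; a function type returns the $s_3$ fixed by $\Rule$ after inferring the sorts of domain and codomain; an abstraction $\elam x U t$ infers a codomain type for $t$ under $\eext \Gamma x U$ and returns the corresponding function type; and an application $t \eapp u$ infers a type for $t$, weak-head normalizes it to $\efunT x U T$ using normalization (Theorem~\ref{thm:norm}), checks $u$ against $U$ (resurrecting the context in the irrelevant case), and returns $\subst u x T$. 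Checking mode calls inference and then compares the inferred type with the expected one by algorithmic type equality, which is legitimate since both are well-sorted. Soundness of the algorithm follows by construction from the typing rules plus soundness of algorithmic equality at the conversion step; completeness follows from the inversion lemma---each derivation is explained by exactly one algorithmic step, using type unicity (Theorem~\ref{thm:tyuniq})---together with completeness of algorithmic equality.

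The main obstacle will be organizing the mutual recursion between type checking and equality checking so that every call to algorithmic equality is made only on well-sorted inputs, and then verifying that the combined procedure terminates; this is by a lexicographic measure on the structural size of the checked term, with the equality-check calls appearing as terminating leaves. The delicate points are the application case---where one must first normalize the inferred function type and, in the irrelevant case, switch to the resurrected context before checking the argument---and the appeal to type unicity to justify that inference returns a canonical type against which the expected type may be compared.
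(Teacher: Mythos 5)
Your proposal is correct and follows essentially the same route as the paper, which likewise derives part (1) from soundness (Thm.~\ref{thm:aleqsound}), completeness (Thm.~\ref{thm:aleqcompl}) and termination (Thm.~\ref{thm:term}) of algorithmic equality, and part (2) from inversion (Lemma~\ref{lem:inv}), weak head normalization, function type injectivity and decidability of conversion, remarking only that ``any reasonable type inference algorithm will do.'' Your write-up merely spells out the bidirectional algorithm and the well-typedness preconditions that the paper leaves implicit, which is a faithful elaboration rather than a different argument.
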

\begin{proof}
  Decidability of equality follows from soundness
  (Thm.~\ref{thm:aleqsound}), completeness (Thm.~\ref{thm:aleqcompl}), and
  termination~(Thm.~\ref{thm:term}).
  Decidability of typing follows from decidability of type conversion,
  weak head normalization, and function type injectivity, using 
  inversion (Lemma~\ref{lem:inv}) 
  on typing derivations.  Any reasonable type
  inference algorithm will do.
\end{proof}

\section{Extensions}
\label{sec:ext}

\paradot{Data types and recursion}
The semantics of \IITT is ready to cope with inductive data types like the
natural numbers and the associated recursion principles.  Recursion
into types, aka known as large elimination, is also accounted for
since we have universes and a semantics which does not erase
dependencies (unlike Pfenning's model \cite{pfenning:intextirr}).

\paradot{Types with extensionality principles}
One purpose of having a typed equality algorithm is to handle
$\eta$-laws that are not connected to the shape of the expression
(like $\eta$-contraction for functions) but to the shape of the type
only.  Typically these are types $T$ with at most one inhabitant, \ie, the
empty type, the unit type, singleton types or
propositions.\footnote{Some care is necessary for the type of Leibniz
  equality \cite{abel:nbe09,werner:lmcs08}.}
For such $T$ we have the $\eta$-law
\[
  \ru{\Gamma \der t,t' : T
    }{\Gamma \der t = t' : T}
\]
which can only be checked in the presence of type $T$.
Realizing such $\eta$-laws gives additional ``proof'' irrelevance
which is not covered by Pfenning's irrelevant quantification
$\erfunT x U T$.

\paradot{Internal erasure}
Terms $u \div U$ in irrelevant position are only there to please the
type checker, they are ignored during equality
checking. This can be inferred from the substitution principle:  If
$\erext \Gamma x U \der T$ and $\Gamma \der u, u' \div U$, then
$\Gamma \der \subst u x T = \subst {u'} x T$; the type $T$ has
the same shape regardless of $u,u'$.
Hence, terms like $u$ serve the sole purpose to prove some proposition
and could be replaced by a dummy $\tirr$ immediately after
type-checking.  

Internal erasure can be realized by making $\Gamma \der t
\div T$ a judgement (as opposed to just a notation for $\resurrect\Gamma
\der t : T$) and adding the rule
\[
\ru{\Gamma \der t \div T
  }{\Gamma \der \tirr \div T} .
\]
The rule states that if there is already a proof $t$ of $T$, then $\tirr$ is a
new proof of $T$.  This preserves provability while erasing the proof
terms.  Conservativity of this rule can be proven as in joint work of
the author with Coquand and Pagano \cite{abelCoquandPagano:lmcs11}.

\section{Conclusions}
\label{sec:concl}

We have extended Pfenning's notion of irrelevance to a type theory \IITT
with universes that accommodates types defined by recursion. 
We have constructed a Kripke model $\Seq$ that shows soundness of \IITT,
yielding normalization, subject reduction and consistency,
plus syntactical properties of the judgements of \IITT.
A second Kripke logical relation $\Ceq$ has proven correctness
of algorithmic equality and, thus, decidability of \IITT.  

Integrating irrelevance and data types in dependent type theory
does not seem without challenges.  We have succeeded to treat
Pfenning's notion of irrelevance, but our proof does not scale
directly to \emph{parametric} function types, a stronger notion of
irrelevant function types called implicit quantification by Miquel
\cite{miquel:PhD}.\footnote{A function argument is parametric if it is
irrelevant for computing the function result while the type of the
result may depend on it.  In Pfenning's notion, the argument must also
be irrelevant in the type.}  Two more type theories build on Miquel's
calculus \cite{miquel:tlca01},  Barras and Bernardo's \ICCstar
\cite{barrasBernardo:fossacs08} and Mishra-Linger and Sheard's
\emph{Erasure Pure Type Systems} (EPTS)
\cite{mishraLingerSheard:fossacs08}, but none has offered a satisfying
account of large eliminations yet.  Miquel's model
\cite{miquel:lics00} features data types only as impredicative
encodings.   For irrelevant, parametric, and recursive functions to
coexist it seems like three different function types are necessary, \eg, in
the style of Pfenning's \emph{irrelevance, extensionality and intensionality}.
We would like to solve this puzzle in future work, not least to
implement high-performance languages with dependent types.

\paradot{Acknowledgments} The first author thanks Bruno Barras, Bruno Bernardo,
Thierry Coquand, Dan Doel, Hugo Herbelin, Conor McBride, Ulf
Norell, and Jason Reed for discussions on irrelevance in type theory.
Work on a
previous paper has been carried out while he was invited researcher at
PPS, Paris, in the INRIA $\pi r^2$ team headed by Pierre-Louis Curien
and Hugo Herbelin.  
The second author acknowledges financial support by the \'Ecole Normale
Superi\'eure de Paris for his internship at the
Ludwig-Maximilians-Universit\"at M\"unchen from May to September 2011.
We thank the two anonymous referees, who suggested changes and
examples which significantly improved the presentation, 
and the patience of the editors waiting for our revisions. 


\bibliographystyle{alpha}
\bibliography{auto-types10}\vspace{-50 pt}
\end{document}
